\newcommand{\Eq}[1]{(\ref{eq:#1})}
\newcommand{\Lem}[1]{Lem.~\ref{lem:#1}}
\newcommand{\Cor}[1]{Cor.~\ref{cor:#1}}
\newcommand{\Con}[1]{Conj.~\ref{con:#1}}
\newcommand{\Sec}[1]{\S \ref{sec:#1}}
\newcommand{\Fig}[1]{Fig.~\ref{fig:#1}}
\newcommand{\Tbl}[1]{Table~\ref{tbl:#1}}
\newcommand{\App}[1]{App.~\ref{app:#1}}
\newcommand{\InsertFig}[4]
{\begin{figure}[h!t]
 \centerline{
 \includegraphics[width=#4]{#1}
 }
 \caption{{\footnotesize #2}
 \label{fig:#3}}
\end{figure}}
\newcommand{\InsertFigTwo}[5] {
\begin{figure}[h!t]
 \centerline{
 \includegraphics[width=#5]{#1}
 \hskip 0.1in
 \includegraphics[width=#5]{#2}
 }
 \caption{{\footnotesize #3}
 \label{fig:#4}}
\end{figure}}
\newcommand{\InsertFigFour}[7] {
\begin{figure}[h!t]
 \centerline{
\renewcommand{\arraystretch}{0.01}
 \begin{tabular}{cc}
 \includegraphics[width=#7]{#1}& \includegraphics[width=#7]{#2} \\
 \includegraphics[width=#7]{#3} & \includegraphics[width=#7]{#4}
 \end{tabular}
 }
 \caption{{\footnotesize #5}
 \label{fig:#6}}
\end{figure}}
\newcommand{\bC}{{\mathbb{ C}}}
\newcommand{\bN}{{\mathbb{ N}}}
\newcommand{\bQ}{{\mathbb{ Q}}}
\newcommand{\bR}{{\mathbb{ R}}}
\newcommand{\bS}{{\mathbb{ S}}}
\newcommand{\bZ}{{\mathbb{ Z}}}
\newcommand{\cC}{{\cal C}}
\newcommand{\cE}{{\cal E}}
\newcommand{\cG}{{\cal G}}
\newcommand{\cT}{{\cal T}}
\newcommand{\eps}{\varepsilon}
\newcommand{\vphi}{\varphi}
\newcommand{\tr}{\mathop{\rm tr}}
\newcommand{\fix}[1]{\mathop{\mathrm{Fix}(#1)}}
\newtheorem{thm}{Theorem}
\newtheorem{lem}[thm]{Lemma}
\newtheorem{cor}[thm]{Corollary}
\newtheorem{con}{Conjecture}
\newcommand{\beq}[1]{\begin{equation}\label{eq:#1}}
\newcommand{\eeq}{\end{equation}}
\newenvironment{se}[1]{\equation\label{eq:#1}\aligned}{\endaligned\endequation}
\newcommand{\bsplit}[1]{\begin{se}{#1}}
\newcommand{\esplit}{\end{se}}
\newenvironment{example}[1][]
 {
	\setlength \leftmargini {1.0em}		
	\setlength \topsep {0.5em}			
	\begin{quote}
	{\it Example#1} }
	{\end{quote}
 }
\newcommand{\bexam}[1][:]{\begin{example}[#1]}
\newcommand{\eexam}{\end{example}}
\title{Critical Invariant Circles in Asymmetric and Multiharmonic Generalized Standard Maps}
\author{Adam M.~Fox\footnote{Adam.Fox@colorado.edu}
 \ and\ 
 James D.~Meiss\footnote{James.Meiss@colorado.edu}
	\ \thanks
 {
 AMF and JDM and were supported in part by NSF grant DMS-1211350. Useful conversations with Renato Calleja, Rafael de la Llave, Keith Julien, Robert Easton, and Holger Dullin
 are gratefully acknowledged. 
 }
 \\
 	Department of Applied Mathematics\\
 University of Colorado \\
	Boulder, CO 80309-0526 \\
}
\date{\today}
\begin{document}
\maketitle

\begin{abstract}
\vspace*{1ex}
\noindent

Invariant circles play an important role as barriers to transport in the dynamics of area-preserving maps. KAM theory guarantees the persistence of some circles for near-integrable maps, but far from the integrable case all circles can be destroyed. 
A standard method for determining the existence or nonexistence of a circle, Greene's residue criterion, requires the computation of long-period orbits, which can be difficult if the map has no reversing symmetry. 
We use de la Llave's quasi-Newton, Fourier-based scheme to numerically compute the conjugacy of a Diophantine circle conjugate to rigid rotation, and the singularity of a norm of a derivative of the conjugacy to predict criticality. We study near-critical conjugacies for families of rotational invariant circles in generalizations of Chirikov's standard map. 

A first goal is to obtain evidence to support the long-standing conjecture that when circles breakup they form cantori, as is known for twist maps by Aubry-Mather theory.  The location of the largest gaps is compared to the maxima of the potential when anti-integrable theory applies.
A second goal is to support the conjecture that locally most robust circles have noble rotation numbers, even when the map is not reversible. We show that relative robustness varies inversely with the discriminant for rotation numbers in quadratic algebraic fields. 
Finally, we observe that the rotation number of the globally most robust circle generically appears to be a piecewise-constant function in two-parameter families of maps.

\end{abstract}

\section{Introduction}\label{sec:Intro}
Area-preserving maps are discrete-time analogs of Hamiltonian systems and represent the simplest of such systems to exhibit chaotic dynamics \cite{Meiss92}. A well-studied family are the generalized standard maps, $f_{\eps}: M \to M$, where $M = \bS\times \bR$, given by 
\bsplit{StdMap}
	x' &= x + \Omega(y+\eps g(x)) \mod~1, \\
	y' &= y+ \eps g(x) .
\esplit
We assume that the \emph{force}, $g(x)$, is a smooth, periodic function of the angle $x$ with period one and zero average, so that $f_\eps$ has ``zero net flux." The function $\Omega$ is the \emph{frequency map}, its dependence upon the momentum $y$ represents the anharmonicity of the dynamics. When $g$ and $\Omega$ are differentiable, \Eq{StdMap} is an area-preserving diffeomorphism. We will assume that $g$ and $\Omega$ are analytic.

Most studies of the dynamics of \Eq{StdMap} use the force
\beq{Chirikov}
	g_1(x) = \frac{1}{2\pi} \sin(2\pi x), 
\eeq
and the frequency map
\beq{Twist}
	\Omega_1(y) = y .
\eeq
We will refer to this choice as \emph{Chirikov's standard map} \cite{Chirikov79}. This map applies to many physical models, including the dynamics of a kicked rotor, a charged particle in electrostatic waves, the adsorption of a layer of atoms on a crystal surface, and the motion of a particle in a relativistic cyclotron, see the references in \cite{Meiss92}. 

When $\eps=0$, all of the orbits of \Eq{StdMap} lie on invariant circles, $\cT_y=\bS \times \{y\}$, and the dynamics on each circle is simply rigid rotation with rotation number $\omega = \Omega(y_0)$, i.e., $(x_t,y_t)=(x_0 +\omega t, y_0)$. We refer to any circle that is homotopic to $\cT_0$ as \emph{rotational}. Since the restriction of $f_\eps$ to a rotational invariant circle is a degree-one homeomorphism, each orbit on the circle has the same rotation number $\omega$. 

KAM theory predicts that invariant circles with ``sufficiently irrational"\footnote
{Namely, $\omega$ is Diophantine, see \App{Farey}.}
rotation numbers persist when $\eps$ is sufficiently small \cite{delaLlave01, Poshel01}. Alternatively when $\eps$ is sufficiently large, \emph{converse} KAM theory predicts that there are no rotational invariant circles \cite{MacKay85}. The study of the breakup of these circles was initiated in the work of Greene \cite{Greene79} and led to the development of a renormalization theory of breakup \cite{MacKay93}.

The standard method for studying this breakup is Greene's residue criterion, recalled in \App{Greene}. Its implementation requires accurate computation of periodic orbits and the linearization of the map about these orbits. This computation is much easier when the force $g$ is odd. In this case, \Eq{StdMap} is reversible: there is an orientation-reversing involution that conjugates $f_\eps$ to its inverse.\footnote
{Since $\Omega_1$ is odd, Chirikov's map also has a second reversor, so we call it a ``doubly reversible" map.}
 The fixed sets of this \emph{reversor} intersect every rotational circle, see \App{Symmetries}. A consequence is that every rotational circle is symmetric and can be approximated by a sequence of symmetric periodic orbits. Symmetric orbits can be found by a one-dimensional root-finding algorithm. Due to this additional structure, all previous studies of breakup have assumed odd forcing.

In this paper we will study the breakup of invariant circles for the generalized standard map when it lacks such symmetries. The mechanism for this breakup in nonreversible maps remains an open question, first posed by MacKay in his 1982 thesis, reprinted in \cite{MacKay93}. Our goal is to provide evidence for three conjectures.

\begin{con}\label{con:Cantorus} When a rotational invariant circle of \Eq{StdMap} is destroyed it becomes a cantorus.
\end{con}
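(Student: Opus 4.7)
Since Aubry--Mather theory delivers this conclusion for the twist case (e.g.\ Chirikov's map and any \Eq{StdMap} with monotonic $\Omega$ and odd $g$), a rigorous proof in the generality of \Eq{StdMap} is out of reach with current tools: there is no variational principle once $\Omega$ is non-monotonic or $g$ is non-reversible. The plan, therefore, is to furnish numerical evidence by tracking the conjugacy $h_\eps:\bS\to M$ satisfying
\begin{equation}
f_\eps\circ h_\eps(\theta)=h_\eps(\theta+\omega)
\end{equation}
as $\eps$ approaches the critical value $\eps_c(\omega)$ at which the circle of rotation number $\omega$ is destroyed, and then verifying that the limiting object has the geometric and dynamical hallmarks of a cantorus.

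First, I would fix a Diophantine $\omega$ (the noble numbers of \App{Farey} are convenient) and run de la Llave's quasi-Newton, Fourier-based KAM scheme at a sequence $\eps_n \nearrow \eps_c$, monitoring $\|h_{\eps_n}'\|$ or an analogous Sobolev norm. The conjecture's quantitative fingerprint of a cantorus is that, even though the norm diverges, the \emph{components} of $h_\eps$ remain monotone degree-one maps of $\bS$ into $M$, and the derivative concentrates onto an increasingly sparse set of $\theta$. Numerically this shows up as the emergence of plateaus of $h_\eps$ interleaved with near-vertical jumps: the plateaus are precursors of the gap intervals of the prospective cantorus, and the jumps are precursors of the gaps in its image.

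Second, I would measure the gap structure directly. At each $\eps_n$ close to $\eps_c$, identify the longest plateaus of the hull function (using the Fourier series to evaluate $h_{\eps_n}'(\theta)$ on a fine grid), record their angular locations $\theta_*$, and track the sequence of widths as $n\to\infty$. Two checks then support \Con{Cantorus}: (i) the plateau locations stabilize as $\eps\to\eps_c$, so that the limit $h_{\eps_c}$ is a well-defined monotone discontinuous hull; and (ii) when the anti-integrable regime is accessible (large $\eps$ or large action), the $\theta_*$ should line up with the maxima of the potential $V$ defined by $g=-V'$, as the paper announces in its abstract. Agreement across a battery of rotation numbers and forcings $(g,\Omega)$ of varying symmetry type would be decisive evidence.

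The hard part is the extrapolation to $\eps=\eps_c$. The Newton scheme converges only strictly below $\eps_c$, so the cantorus itself is never seen; we infer its existence from the scaling of the gaps and from the consistency of limiting gap positions, but we cannot rule out a qualitatively different degeneration in which $h_{\eps_c}$ is merely continuous-but-non-Lipschitz. Compounding this, resolving the smallest gaps requires Fourier truncation orders that grow rapidly as $\eps\to\eps_c$, and the Diophantine estimates that justify quasi-Newton convergence deteriorate in the same limit, so rigorous control of the numerics near criticality---especially distinguishing numerical artifacts (aliasing, loss of analyticity of the truncated hull) from genuine cantorus gaps---is the principal obstacle. A secondary obstacle is that for non-twist or multi-harmonic $g$ several circles with the same $\omega$ may coexist, so one must first isolate which circle is breaking before claiming anything about its limit.
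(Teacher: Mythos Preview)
Your proposal is correct and mirrors the paper's own approach: since \Con{Cantorus} is stated as a conjecture, the paper does not prove it but instead computes near-critical conjugacies via the quasi-Newton Fourier scheme, observes the incipient gap structure in $k_x(\theta)-\theta$, verifies that the gaps lie along a single orbit, and compares the dominant gap location to the maxima of the potential $V$---exactly the program you outline. Two minor quibbles: the paper uses the convention $DV=g$ (not $g=-V'$), and what you call ``plateaus'' of the hull function are really the complementary regions between the near-vertical jumps (the hull is the inverse of a devil's staircase, so it develops jumps rather than flats).
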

\noindent
This is known to be true for nondegenerate twist maps, i.e., when $D\Omega(y) \neq 0$. For this case, the powerful theory of Aubry and Mather implies the existence of recurrent ``minimizing" invariant sets for each rotation number $\omega$ \cite{Mather82, Aubry83a}. These invariant sets always lie on Lipschitz graphs over $x$, and when $\omega$ is irrational they are either rotational invariant circles or invariant Cantor sets, Percival's \emph{cantori} \cite{Percival79}. Thus a Diophantine invariant circle for a twist map persists up to a critical parameter value, $\eps_{cr}$, where it loses smoothness, and then for larger $\eps$ becomes a cantorus.\footnote
{Of course, it may reform for larger values of $\eps$, and indeed the boundary of existence of an invariant circle for multi-parameter maps is often quite complex \cite{Bullett86, Wilbrink87, Ketoja89}.}

The notion of ``minimizing" arises from the Frenkel-Kontorova interpretation: orbits of a twist map are equivalent to equilibria of a one-dimensional, nearest-neighbor-coupled chain of particles. For example, define the periodic external potential $V$ by $DV(x) = g(x)$, and, for \Eq{Twist}, the spring potential energy $\tfrac12(x'-x)^2$ for each pair of neighboring particles at positions $x,x' \in \bR$, to give the energy
\beq{FK}
	E(x,x')=\tfrac{1}{2}(x'-x)^2 + \eps V(x).
\eeq
Any sequence $\{x_t: t \in \bZ\}$ that is a critical point of the (formal) sum $\cE = \sum_{t} E(x_t,x_{t+1})$ is equivalent to an orbit of \Eq{StdMap} for the frequency map \Eq{Twist}. Aubry-Mather sets are minima of the energy with respect to variations with compact support.

The creation of the gaps in a cantorus is most easily understood through the anti-integrable (AI) limit \cite{Aubry90}. As $\eps$ increases, the potential energy in \Eq{FK} will begin to dominate the spring energy and particles in the minimizing state will tend to fall into potential wells. For large enough $\eps$, there will be no particles in a neighborhood of the maxima of $V$, opening gaps in the circle. The orbits of each gap form a bi-infinite family that Baesens and MacKay called a \emph{hole}. More formally, AI theory studies the continuation of the critical points of $\lim_{\eps\to\infty} \frac{1}{\eps} \cE$, to finite $\eps$. In the limit, the particles simply sit on any sequence of critical points of $V$. If these critical points are nondegenerate and the ``acceleration" of the sequence is bounded, then such states may be continued to finite, large enough $\eps$. Ordered AI states with irrational $\omega$ continue to cantori \cite{MacKay92b}.

However, neither Aubry-Mather theory nor the Frenkel-Kontorova energy \Eq{FK} apply when the frequency map does not satisfy the twist condition. Thus, it is not known whether invariant circles of nontwist maps are destroyed by the formation of a family of gaps, nor whether there are remnant Cantor sets that remain. 

One well-studied example is the so-called standard nontwist map of Howard and Hohs \cite{Howard84}, where the frequency map is 
\beq{NonTwist}
	\Omega_2(y,\delta) = y^2 - \delta.
\eeq
KAM theory still applies to this map, providing---as for \Eq{NonTwist}---the curvature of $\Omega$ does not vanish when the twist vanishes \cite{Delshams00}. The nontwist map has rotational invariant circles that are meandering, in the sense that they are not graphs over $x$, and the nontrivial fixed point of the renormalization group for critical circles in this map has been extensively studied \cite{Apte03, Apte05, Fuchss06, Wurm11}. 

In \Sec{CriticalCircle} we will study the formation of gaps in invariant circles of generalized standard maps with a variety of frequency maps and multi-harmonic forces, giving evidence for \Con{Cantorus} in asymmetric and nontwist maps.

Greene conjectured that the last invariant circle of Chirikov's map has the rotation number $\gamma =\tfrac12(1+\sqrt{5})$, the golden mean. Though this conjecture has never been proved, it is supported by strong numerical evidence \cite{MacKay93}. More generally the golden invariant circle will not necessarily be globally most robust, but one expects that the last invariant circle in any neighborhood has a related rotation number:

\begin{con}\label{con:Noble}
Rotational invariant circles with noble rotation numbers are locally most robust.
\end{con}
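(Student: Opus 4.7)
As stated, the conjecture is a global property of the parameter space of \Eq{StdMap}, so my plan is to assemble quantitative numerical evidence supported by a renormalization-theoretic heuristic, rather than to attempt a rigorous proof. First I would fix a precise meaning: for each family $f_\eps$ of the form \Eq{StdMap}, define the critical threshold $\eps_{cr}(\omega)$ as the supremum of $\eps\geq 0$ for which a rotational invariant circle of rotation number $\omega$ exists, and call $\omega^*$ \emph{locally most robust} if it is a strict local maximum of $\eps_{cr}$ restricted to the Diophantine numbers. The conjecture then asserts that the set of such $\omega^*$ is contained in the nobles.

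The main computational step is to sample $\eps_{cr}$ densely in an interval around a chosen noble $\omega^*=[a_0;a_1,\ldots,a_n,\overline{1}]$, using the de la Llave--Fourier conjugacy scheme described in the body of this paper to locate breakdown via the blow-up of a Sobolev norm of the derivative of the conjugacy. Nearby nobles can be generated by perturbing the truncation point of the continued-fraction tail; nearby non-nobles by varying a single large partial quotient. If the conjecture holds, a plot of $\eps_{cr}$ should exhibit a cusp-like local maximum at $\omega^*$ and analogous peaks at every other noble in the window, with all other peaks strictly lower. To move from a single family to the statement of \Con{Noble}, I would repeat the experiment across several choices of $g$ and $\Omega$, including asymmetric and nontwist cases where Greene's residue criterion is not readily available, to rule out the possibility that the phenomenon is an artifact of reversibility.

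A supporting heuristic comes from renormalization theory. The Gauss map acts on continued fractions by shift, and nobles are precisely the eventually fixed points of this shift at $[0;\overline{1}]$. The universal fixed-point map of the breakup renormalization (MacKay's golden fixed point for Chirikov's map) should therefore govern the local scaling of $\eps_{cr}$ in a neighborhood of any noble, while other quadratic irrationals correspond to periodic cycles of the renormalization and should yield strictly smaller values. Complementing this, the discriminant of the quadratic field $\bQ(\omega)$ controls the Diophantine constant of $\omega$, so one expects $\eps_{cr}$ to scale inversely with this discriminant within each orbit class of the Gauss map, as is to be quantified in \Sec{CriticalCircle}.

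The principal obstacle is that $\eps_{cr}(\omega)$ is expected to be nowhere continuous: near-resonances with high-denominator rationals create a dense set of apparent local maxima that only relax as computational precision is increased, so distinguishing genuine peaks at nobles from numerical artifacts requires both high-precision evaluation of the conjugacy and a principled choice of Diophantine sample points. A further and more fundamental difficulty is that the renormalization fixed-point theory has been verified rigorously only in a handful of symmetric cases, so invoking it in the asymmetric setting of this paper requires additional, presently conjectural input --- which is exactly why the statement is offered as a conjecture rather than a theorem.
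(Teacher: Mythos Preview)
Your proposal is appropriate in spirit---since \Con{Noble} is a conjecture, the paper does not prove it either but assembles numerical evidence in \Sec{Nobles}---yet your methodology differs from the paper's in a way worth noting. You propose to sample $\eps_{cr}$ densely in a window about a fixed noble and look for a cusp-shaped local maximum, then repeat for several maps. The paper instead follows the MacKay--Stark protocol: it fixes six quadratic fields of smallest discriminant (\Tbl{Rings}), uses the $256$ pairs of neighboring rationals at level eight of the Farey tree to define ``local'' intervals, places one representative from each field in every interval via \Eq{NewElement}, computes $\eps_{cr}$ for all six by the seminorm method, and tallies how often each field wins (\Tbl{CompareRings}--\Tbl{RobustDataO2G3}). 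This head-to-head tournament yields both the fraction of intervals in which the noble is locally most robust (roughly $60$--$70\%$) and direct evidence that robustness decreases monotonically with discriminant---the latter is what you allude to, though it is established in \Sec{Nobles}, not \Sec{CriticalCircle}. Your dense-sampling approach would give a finer local portrait of the critical function near a single noble but would not so cleanly separate the contributions of different algebraic fields; the paper's approach sacrifices local resolution for a statistical statement across hundreds of neighborhoods and several maps simultaneously. Your renormalization heuristic and your caveats about the nowhere-continuity of $\eps_{cr}$ are well taken and consistent with the paper's own discussion, though the paper does not invoke renormalization explicitly in \Sec{Nobles}.
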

\noindent
A frequency ratio $\omega = \nu_1/\nu_2$ is \emph{noble} if the vector $\nu$ is an integral basis for the quadratic field $\bQ(\gamma)$, or equivalently if the continued fraction of $\omega$ has a golden mean tail, see \App{Farey}. Strong numerical evidence for the local robustness of the nobles for Chirikov's map was given by MacKay and Stark \cite{MacKay92c}.
In \Sec{Nobles}, we will study the relative robustness of circles with rotation numbers from different quadratic fields for several examples of \Eq{StdMap}.

The local robustness conjecture implies that the critical function $\eps_{cr}(\omega)$ will have local maxima at each noble rotation number. The global maximum of this function for Chirikov's map occurs, according to Greene, at the golden mean. More generally, let $\omega_{max}$ be location of the global maximum of $\eps_{cr}$, i.e., the rotation number of the globally most robust circle. We will investigate the dependence of $\omega_{max}$ on  the force $g$.  

\begin{con}\label{con:Piecewise}
The rotation number of the globally most robust invariant circle is generically piecewise constant under continuous changes of \Eq{StdMap}.
\end{con}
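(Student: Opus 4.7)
The plan is to combine a semi-rigorous heuristic based on \Con{Noble} with a systematic numerical exploration of two-parameter families of \Eq{StdMap}. Granting that conjecture, the critical function $\eps_{cr}(\omega)$ has strict local maxima at each noble rotation number and is strictly smaller on nearby non-noble values. Hence the global maximum is realized at one of the countably many competing noble peaks $\omega^{(k)}$. Writing $h_k(\mu) := \eps_{cr}(\omega^{(k)};\mu)$ for the height of the $k$-th peak as a function of the map parameters $\mu$, if each $h_k$ varies continuously with $\mu$ then the index maximizing $h_k(\mu)$ changes only on the set where $h_k(\mu) = h_j(\mu)$ for some competing pair, and outside this exceptional locus $\omega_{max}$ is locally constant.

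First I would formalize the continuity of $h_k$. For each fixed noble $\omega^{(k)}$, the near-critical conjugacy computed by the de la Llave scheme depends smoothly on the map parameters throughout its domain of convergence, and criticality is signalled by the blow-up of a norm of a derivative of the conjugacy; combined with KAM persistence from below and converse-KAM upper bounds from above, this yields continuity of $h_k$ at generic $\mu$. Next, a transversality argument in a two-parameter family shows that each level set $\{\mu : h_k(\mu) = h_j(\mu)\}$ is generically a codimension-one curve, crossing only at isolated triple points. Off the union of these curves, precisely one noble achieves the supremum, giving the piecewise-constant structure asserted in \Con{Piecewise}.

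The main obstacle will be controlling the countable family of competing peaks: one must argue that in each bounded parameter region only finitely many $h_k$ can approach the supremum to within a fixed tolerance, so that the argmax is well-defined and governed by a locally finite set of coincidence curves. Heuristically this should follow because peak heights decrease with the denominators of the noble convergents, but making the estimate sharp would require quantitative renormalization input beyond the scope of what we can prove. Numerically I would scan two-parameter families---for instance the biharmonic force $g(x) = \tfrac{1}{2\pi}\sin(2\pi x) + \tfrac{a}{2\pi}\sin(4\pi x + \phi)$ parametrized by $(a,\phi)$, or Chirikov's force with the nontwist frequency map \Eq{NonTwist} parametrized by $(\eps,\delta)$---computing $\eps_{cr}$ at a dense list of low-denominator nobles via the Fourier conjugacy scheme and recording $\omega_{max}$ on a fine parameter grid. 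Sharp plateaus bounded by curves identifiable with specific coincidences $h_k = h_j$ would support the conjecture; persistent smooth variation of $\omega_{max}$, or a dense cascade of jumps with no plateaus, would refute it.
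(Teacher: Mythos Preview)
Your plan is essentially the paper's own approach: since \Con{Piecewise} is a conjecture, the paper offers no proof but gathers numerical evidence in \Sec{GlobalRobust} by computing $\eps_{cr}$ over a finite list of nobles for two-parameter families and tracking $\omega_{max}$ as a function of the auxiliary force parameter $\psi$. Your heuristic layer---reducing to a countable family of noble peak heights $h_k(\mu)$ and invoking transversality of the coincidence loci $\{h_k=h_j\}$---is a useful formalization that the paper does not spell out, and it clarifies why one should expect plateaus rather than continuous drift.

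Two concrete issues are worth flagging. First, one of your proposed test families, the nontwist map \Eq{NonTwist} parametrized by $(\eps,\delta)$, is a poor choice: the paper finds (\Sec{Critical}) that for $\Omega_2$ the globally most robust circle is always the one with rotation number closest to $-\delta$, so $\omega_{max}$ tracks $\delta$ and you would not see plateaus. Second, and more important for the word ``generically'' in the conjecture, your argument assumes each $h_k$ varies continuously in $\mu$, but the paper exhibits a counterexample: for the doubly-reversible two-harmonic map ($\Omega_1$, $g_2$) the critical set of the golden circle has a Cantor set of cusps and $\eps_{cr}(\omega,\psi)$ is highly irregular in $\psi$ (\Fig{CritCurves}(a), \Fig{O1G2Best}). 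There $\omega_{max}(\psi)$ is erratic, not piecewise constant. Your transversality heuristic cannot handle this because the continuity premise fails; the paper attributes this to extra reversing symmetry and the associated symmetry-breaking bifurcations, which is precisely the mechanism excluded by ``generically.'' Building that exclusion into your argument---rather than assuming continuity outright---would align your heuristic with what the numerics actually show.
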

\noindent
We are not aware of previous studies of this conjecture. Our preliminary evidence consists of studying two-parameter families $f_{\eps,\psi}$, obtained by introducing an additional parameter $\psi$ into $g$. The critical function will of course depend upon $\psi$ and if it were a typical, smooth function, the location of its global maximum would be a piecewise continuous function of $\psi$ with occasional jump discontinuities as different local maxima take over as the global maximum. As we will see in \Sec{Critical}, $\omega_{max}(\psi)$ seems instead to be locked to fixed noble rotation numbers over intervals of $\psi$. This conjecture does not hold when the map has special symmetries that cause $\eps_{cr}(\omega,\psi)$ to be a discontinuous function of $\psi$ for fixed $\omega$.

Each of these studies requires a technique for computing invariant circles of the generalized standard map that does not rely upon finding symmetric periodic orbits.
We will use, see \Sec{FourierMethod}, a method introduced by de la Llave and collaborators \cite{delaLlave05} to compute the Fourier series for the embedding $k: \bS \to M$ that conjugates the dynamics of $f_\eps$ on an invariant circle to the rigid rotation, $T_\omega: \bS \to \bS$,
\beq{RigidRotation}
	T_\omega(\theta) =\theta + \omega .
\eeq
This technique gives a numerically robust method for computing subcritical invariant circles. We will then estimate the critical value by continuation in $\eps$ and extrapolating to the critical point by estimating where the circle loses smoothness. In particular, a Sobolev norm of the conjugacy is singular at criticality \cite{Apte05, Olvera08,Calleja10a, Calleja10b}. We will see in \Sec{FindEpsCr} that a certain Sobolev seminorm has a power law blowup as $\eps \to \eps_{cr}$ that allows accurate estimates of the critical value.

\section{Computing the Conjugacy for an Invariant Circle}\label{sec:FourierMethod}

In this section we briefly recall an efficient algorithm to compute the embedding for invariant tori of maps, adapted from one developed by \cite{delaLlave05,Fontich09}. We specialize to the case that $f$ is an area-preserving map and the torus is a rotational circle, though much of the analysis carries over to higher-dimensional symplectic maps.

%

Suppose that $f$ is a real analytic, area-preserving map of the cylinder 
$M = \bS \times \bR$. We assume that $f$ has an analytic, rotational invariant circle $\cC \subset M$ on which the dynamics is conjugate to rigid rotation with Diophantine rotation number $\omega$, that is, we suppose there exists an analytic embedding $k:\bS \to M$ such that $k(\bS) = \cC$, and 
\beq{conjugacy}
	f \circ k=k \circ T_{\omega},
\eeq
where $T_\omega$ is the rigid rotation \Eq{RigidRotation}, see \Fig{reducibility}.

\InsertFig{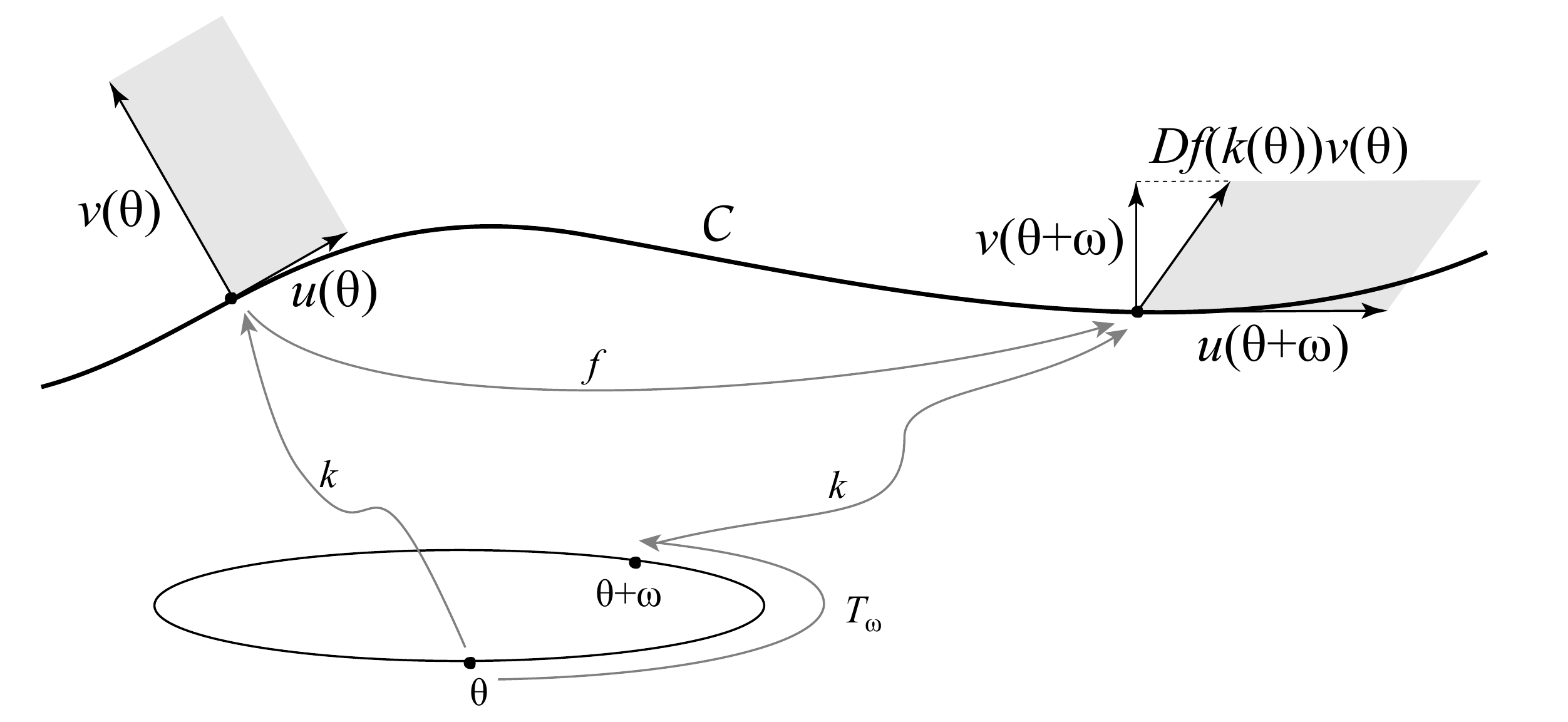}{A visualization of the conjugacy \Eq{conjugacy} and of automatic reducibility. Here the $u$ is the tangent vector to the embedded circle $\cC$, $v$ is a normal, and the shaded regions have unit area.}{reducibility}{5in}

Lifting the map $f$ to the universal cover of $M$, the condition that $\cC$ be rotational is
\[
	k(\theta+1)=k(\theta)+(1,0)^T,
\]
i.e., the angle component of $k$ is a degree-one circle map (called the ``hull" function in \cite{Aubry83a}) and the action component is periodic. We will say, in this case, that $k$ has \emph{degree-one}.
Since $k$ is analytic, its periodic part has a convergent Fourier series, and so it can be represented as
\beq{KFourier}
	k(\theta) = \begin{pmatrix} \theta \\ 0 \end{pmatrix}
				+ \sum_{j\in \mathbb{Z}} \hat{k}_{j} e^{2\pi i j \theta} ,
\eeq
with coefficients $\hat{k}_j = \hat{k}^*_{-j} \in \bC^2$.

Note that solutions of \Eq{conjugacy}, if they exist, are not unique: 
given a solution $k(\theta)$, then $k(\theta +\phi)$ is also a solution 
$\forall \phi \in \bR$. However when $\omega$ is irrational, continuous conjugacies are otherwise unique, apart from this shift in the origin of $\theta$. 

\begin{lem}[\cite{Apte05}]\label{lem:Uniqueness}
If $k \in C^0(\bS,M)$ solves \Eq{conjugacy} for an irrational rotation number $\omega$, then every other continuous solution of \Eq{conjugacy} for the same invariant circle is of the form $k(\theta+\phi)$ for some $\phi \in \bS$.
\end{lem}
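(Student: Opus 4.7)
The plan is to show that any two continuous degree-one solutions $k_1,k_2$ of \Eq{conjugacy} with common image $\cC$ differ by a reparametrization $\theta\mapsto\theta+\phi$; the route is first to show each $k_i$ is a homeomorphism onto $\cC$, and then to analyze the circle map $h := k_1^{-1}\circ k_2$.

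The first (and main) step is injectivity. Suppose $k:\bS\to M$ is a continuous degree-one solution with $k(\bS)=\cC$ and that $k(\theta_1)=k(\theta_2)$ for some $\theta_1,\theta_2$; set $c := \theta_2-\theta_1 \in \bS$. Iterating the conjugacy gives $k(\theta_1+n\omega)=f^n(k(\theta_1))=f^n(k(\theta_2))=k(\theta_2+n\omega)$ for every $n\in\bZ$. Density of the irrational orbit $\{\theta_1+n\omega\bmod 1\}$, combined with continuity of $k$, then yields $k(\theta)=k(\theta+c)$ for \emph{every} $\theta\in\bS$. The set of such periods is a closed subgroup $P\subset\bS$: either $P=\bS$, which forces $k$ to be constant and violates the degree-one hypothesis on its angle component, or $P$ is finite cyclic of some order $N\ge 1$. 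In the latter case, writing the lifted angle component as $\tilde k_\theta$, the degree-one condition $\tilde k_\theta(\theta+1)=\tilde k_\theta(\theta)+1$ together with the identity $\tilde k_\theta(\theta+1/N)-\tilde k_\theta(\theta)=m$, constant-in-$\theta$ by continuity and the discreteness of $\bZ$, forces $Nm=1$, hence $N=1$ and $c=0$. Compactness of $\bS$ and the Hausdorff property of $M$ upgrade $k$ from a continuous injection to a homeomorphism onto $\cC$.

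With each $k_i$ a homeomorphism, $h := k_1^{-1}\circ k_2:\bS\to\bS$ is a well-defined homeomorphism, and the two conjugacy equations combine to give
\[
k_1\circ h\circ T_\omega = k_2\circ T_\omega = f\circ k_2 = f\circ k_1\circ h = k_1\circ T_\omega\circ h.
\]
Injectivity of $k_1$ yields the commutation $h\circ T_\omega = T_\omega\circ h$. Setting $\phi := h(0)$, this commutation gives $h(n\omega\bmod 1) = n\omega+\phi$ for every $n\in\bZ$; density of the $T_\omega$-orbit of $0$ plus continuity extend this identity to $h(\theta)=\theta+\phi$ for all $\theta$, so $k_2(\theta) = k_1(\theta+\phi)$, as claimed. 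The main obstacle is the injectivity argument above; the remainder is the classical fact that the centralizer of an irrational rotation in the group of circle homeomorphisms is the rotation subgroup itself, and a minor point to watch is that the ``same invariant circle'' hypothesis is what permits the formation of $k_1^{-1}\circ k_2$ in the first place.
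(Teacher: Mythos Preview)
Your argument is correct, but it is considerably more elaborate than the paper's. The paper bypasses injectivity entirely: given a second continuous solution $\tilde k$ with $\tilde k(\bS)=\cC=k(\bS)$, surjectivity of $k$ onto $\cC$ alone gives a $\phi$ with $\tilde k(0)=k(\phi)$; the conjugacy then forces $\tilde k(n\omega)=f^n(\tilde k(0))=f^n(k(\phi))=k(\phi+n\omega)$ for all $n$, and density of $\{n\omega\}$ plus continuity finishes. No homeomorphism, no centralizer argument, and no appeal to the degree-one hypothesis are needed.

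What your route buys is the auxiliary fact that any continuous degree-one conjugacy is automatically a homeomorphism onto $\cC$, which is worth knowing in its own right and makes the composition $h=k_1^{-1}\circ k_2$ available; it also identifies the result with the classical statement that the centralizer of an irrational rotation in $\mathrm{Homeo}(\bS)$ is the rotation subgroup. The cost is that you invoke the degree-one structure (to rule out nontrivial periods of $k$), whereas the paper's three-line argument works for any continuous $k$ whose image is $\cC$. Both are valid; the paper's is the more economical proof of the lemma as stated.
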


\begin{proof}
Indeed, if $\tilde k$ is another continuous conjugacy, then, since $\tilde k(\bS) = k(\bS) = \cC$, there exists a $\phi$ such that, e.g., $\tilde k(0) = k(\phi)$, and by \Eq{conjugacy}, this implies $\tilde k(n\omega) = k(\phi+n\omega)$, $\forall n \in \bZ$. Thus $\tilde k(\theta) = k(\theta+\phi)$ on a dense set in $\bS$, and by continuity they agree everywhere.
\end{proof}

\subsection{Automatic Reducibility}\label{sec:reducibility}

Under the assumption that there exists an invariant circle with the given $\omega$, de la Llave and collaborators \cite{delaLlave05,Fontich09} developed an iterative, quasi-Newton scheme to find the conjugacy $k$. The algorithm is initialized with a guess $k$ such that 
\beq{conj1}
	f\circ k-k \circ T_{\omega}=e ,
\eeq
and is guaranteed to converge provided that the error, $e$, is sufficiently small. The iteration proceeds by inserting a corrected $k \to k + \Delta$ into \Eq{conjugacy} and expanding to give
\[
	f(k(\theta)) + Df(k(\theta))\Delta(\theta) -k(\theta+\omega) 
		-\Delta(\theta+\omega) = \mathcal{O}(\Delta^2) .
\]
Neglecting the second-order terms and using \Eq{conj1} gives the iterative equation
\beq{Nstep} 
	\Delta(\theta+\omega) - Df(k(\theta)) \Delta(\theta) = e(\theta) ,
\eeq
that can be viewed as determining $\Delta$. The resulting function $k+\Delta$ is then an approximate conjugacy in the sense of satisfying \Eq{conj1} with a new, presumably smaller, error $e$.

Computing $\Delta$ by a direct inversion of the \emph{cohomology} operator on the left hand side of \Eq{Nstep} is numerically expensive. However, this linear operator can be partially diagonalized through a process called \emph{automatic reducibility} by \cite{delaLlave05}. The idea is that there exists a change of variables $\Delta(\theta) = M(\theta)w(\theta)$ where $M(\theta)$ is a matrix with orthogonal columns, and such that for the new vector $w(\theta)$, 
\Eq{Nstep} takes the form
\beq{AutRed}
	w(\theta+\omega) - U(\theta)w(\theta) = M^{-1}(\theta+ \omega)e(\theta) 
	 \equiv \tilde e(\theta),
\eeq
where $U(\theta)$ is a special upper-triangular matrix. 

To find $U$ and $M$, one must solve the matrix system
\beq{reducingMatrix}
	Df(k(\theta)) M(\theta) = M(\theta+\omega) U(\theta) .
\eeq
To solve \Eq{reducingMatrix} the columns of $M$ are chosen to be tangent and normal vector fields of the (approximate) circle $\cC$, recall \Fig{reducibility}. Note that if $k$ solves \Eq{conjugacy}, then differentiation implies
\beq{tangent}
	Df(k(\theta))Dk(\theta)=Dk(\theta+\omega) ,
\eeq
which is the statement that the tangent vector field, $Dk$, to $\cC$ is invariant under $f$. Since the function $k$ in \Eq{Nstep} will never be an exact conjugacy, $Dk$ will only approximately solve \Eq{tangent}. Nevertheless, we may use \Eq{tangent} in the Newton iteration \Eq{Nstep} incurring only error at second order, see \cite[Prop.~7]{delaLlave05}.
The matrix $M$ is now chosen to be
\[
	M(\theta)= \begin{pmatrix} u(\theta) &v(\theta) \end{pmatrix} 
\]
where $u$ and $v$ are tangent and (scaled) normal vector fields
\bsplit{unitVectors}
	u(\theta) &= Dk(\theta) , \\ 
	v(\theta) &=\frac{1}{\|u(\theta)\|^2}Ju(\theta), 
\esplit
and $J = \begin{pmatrix} 0 & -1 \\ 1 & 0\end{pmatrix}$. The normalization implies that the rectangle formed from $u$ and $v$, see \Fig{reducibility}, has unit area, since $\det M(\theta) = (Ju)^T v =  1$.

With this choice, \Eq{reducingMatrix} becomes
\beq{reduc1}
	Df(k(\theta))M(\theta) =
	\begin{pmatrix} Df(k(\theta))u(\theta) &Df(k(\theta))v(\theta) \end{pmatrix} = 
	 \begin{pmatrix}	
	 	u(\theta + \omega) & v(\theta + \omega) 
	\end{pmatrix}
	\begin{pmatrix}
			1 & a(\theta) \\
			0 & 1 \\
	\end{pmatrix} ,
\eeq
which must be solved for the off-diagonal term of $U$, the function $a(\theta)$.
Approximate invariance of the tangent vector field, \Eq{tangent}, implies that the first column of \Eq{reduc1} is an identity, at least to second order. The second column
gives
\beq{Aeq1}
	a(\theta)u(\theta+\omega)+ v(\theta+\omega)=Df(k(\theta))v(\theta) .
\eeq
Taking the inner product of this equation with $u(\theta+\omega)$ determines $a$:
\beq{Aeq}
	a(\theta)=\frac{u(\theta+\omega)^T Df(k(\theta))v(\theta)}{ \|u(\theta+\omega)\|^2}
	   = v^T(\theta+\omega) JDf(k(\theta)) v(\theta).
\eeq
Similarly, taking the inner project of \Eq{Aeq1} with $Ju(\theta+\omega)$ and using the definition \Eq{unitVectors}
gives the consistency condition
\begin{align*}
	1 &= (Ju(\theta+\omega))^T Df(k(\theta)) v(\theta) 
	 = \left(Df(k(\theta))^T J u(\theta+\omega)\right)^T v(\theta) \\
	 &= \left(J (Df(k(\theta)))^{-1} u(\theta+\omega)\right)^T v(\theta) 
	 = (Ju(\theta))^T v(\theta) = 1 ,
\end{align*}
where we have used the symplectic condition $Df^T J = J Df^{-1}$ and
\Eq{tangent}. Simply put, this condition states that since area is preserved, the image of the original rectangle also has unit area, see \Fig{reducibility}.

%

The two rows of \Eq{AutRed} now yield skew coupled equations for the components of the vector $w$:
\begin{eqnarray}
 	w_1(\theta+\omega) -w_1(\theta) &=& \tilde{e}_1(\theta)+a(\theta)w_2(\theta) ,\label{eq:w1} \\
	w_2(\theta+\omega)- w_2(\theta)&=& \tilde{e}_2(\theta) \label{eq:w2},
\end{eqnarray}
where $a$ is defined by \Eq{Aeq}, and $\tilde e$ by \Eq{AutRed}.

These two equations can be solved easily in Fourier space. 
Indeed, each is of the form of a cohomology equation,
\[
	w\circ T_\omega -w=e ,
\]
that is diagonalized by Fourier transformation. Indeed, when $e$ is analytic and $\omega$ is Diophantine \Eq{Diophantine}, then $w$ is analytic \cite{Moser66} and its Fourier coefficients are
\beq{CohoSoln}
	\hat{w}_j=\frac{\hat{e}_j}{e^{2\pi i j\omega}-1} , \; j \neq 0 ,
\eeq 
provided that $e$ satisfies the solvability condition
 \[
 	\hat{e}_0 = \int_0^1 e(\theta) d\theta = 0 , 
\]
i.e., that its average vanish.
Since the average, $\hat w_0$, is in the kernel of the cohomology operator, it can be chosen freely.

\subsection{Summary of the Algorithm}\label{sec:Algorithm}

Beginning with a guess for the conjugacy $k$, we compute the error $e$ from \Eq{conj1}, and the vector fields $u$ and $v$ from \Eq{unitVectors}. Now the transformed error $\tilde e$ can be computed from \Eq{AutRed} and $a$ from \Eq{Aeq}. At this point the cohomology equation \Eq{w2} can be solved for $w_2$ using \Eq{CohoSoln}, under the assumption that the average of $\tilde e_2$ vanishes. Even though this assumption is not generally true for an approximate conjugacy, ignoring it does not interfere with convergence of the method; indeed as $\tilde e \to 0$, so does the error induced by this inconsistency \cite{Fontich09}.
The arbitrary coefficient $\hat w_{20}$ can be selected so that the average of the right hand side of \Eq{w1} is zero. At this point \Eq{w1} is consistent and can be solved for $w_1$. The average of $w_1$ is arbitrary; indeed, since this gives a contribution $u(\theta) \hat w_{10}$ to $k$, it contributes to a shift along the circle, which corresponds to the non-uniqueness of the solution, recall \Lem{Uniqueness}. We set $\hat w_{10} = 0$ for simplicity. Once $w$ is found, $k$ can be updated using $k \to k+M(\theta)w(\theta)$. This process is repeated until the $L^2$-norm of the error \Eq{conj1} is satisfactorily low. 

This algorithm is generally robust and invariant circles can be computed for moderate values of $\eps$ and many Diophantine $\omega$. In some cases the algorithm fails well before $\eps_{cr}$, even upon using extrapolation in $\eps$ to update the initial guess. This failure is often the result of aliasing in the Fourier spectrum. Aliasing occurs whenever one computes the discrete Fourier transform of a nonlinear function, such as $g \circ k_1$, of a finite series approximation to $k$. This error can be ameliorated by application of an anti-aliasing filter \cite{Trefethen00}. We use the quadratic filter 
\[
	\hat k_j \to \frac{\hat k_j}{j^2-J^2} , \quad |j| > J
\]
to scale the amplitudes of the Fourier coefficients beyond a threshold $J$. For the twist case, $\Omega_1(y)$,
we set $J = \tfrac14 N$ for $N$ Fourier modes, but we set $J = \tfrac18 N$ for $\Omega_2$ and other nonlinear frequency maps to help compensate for the additional nonlinearity. This filter is applied to $k$ once per iteration, at the beginning of the Newton step, prior to the computation of the error $e$ with \Eq{conj1}.

To compute invariant circles for large $\eps$, we use continuation from $\eps = 0$, where $k$ is trivial. We begin with the increment $\Delta\eps = 0.01$ and with $N = 2^8$ Fourier modes. The algorithm is deemed to converge when the $L^2$-norm of the error $e$ is less than the tolerance $10^{-12}$. An initial guess for the conjugacy at $\eps +\Delta\eps$ is obtained by linear extrapolation. This is continued until the algorithm fails to converge within the specified tolerance. Upon this first failure, the step size is reduced to $\Delta \eps = 0.0005$ and the number of Fourier modes is doubled. At each successive failure the number of Fourier modes is again doubled; however, we found it significantly faster and more accurate to keep the step size constant.
For reasonable accuracy and speed, we let the algorithm exit when the number of Fourier modes exceeds $2^{13}$ (conjugacies with millions of Fourier modes have been computed \cite{Apte05, Olvera08}). This failure occurs below the critical value, $\eps_{cr}$, at which the invariant circle is first destroyed. Indeed, a comparison with Greene's criterion (see \App{Greene}) for Chirikov's map, as we discuss in \Sec{FindEpsCr}, shows that convergence typically fails when $\eps \sim \eps_{cr} - 10^{-3}$.

\section{Detecting Critical Circles}\label{sec:FindEpsCr}
The quasi-Newton scheme discussed above provides a method to compute an analytic invariant circle when it exists. We will use the scheme of \Sec{FourierMethod} to estimate the value $\eps_{cr}(\omega)$ such that there is an invariant circle with rotation number $\omega$ for all $0 \le \eps \le \eps_{cr}(\omega)$. As in \Con{Cantorus}, we expect that when $\eps$ exceeds $\eps_{cr}$ the invariant circle is replaced by a cantorus. 

Since the conjugacy loses continuity at $\eps_{cr}(\omega)$, this transition can be detected by the blowup of a Sobolev norm \cite{Apte05, Olvera08, Calleja10a, Calleja10b}.
To detect the transition, we use a related seminorm defined as the $L^2$-norm of the $m^{th}$ derivative:
\beq{Sobolev}
	\|k\|_m^2 \equiv \|D^m k\|_{L^{2}}^2=\sum_j(2\pi |j|)^{2m} |\hat{k}_j |^2 ,
\eeq
where $\hat{k}$ are the Fourier coefficients of $k$. As $\eps \to \eps_{cr}$, $\|k\|_m \to \infty$, and we assume it does so asymptotically as
\beq{AsyNorm}
	\|k\|_m \sim \frac{A}{(\eps_{cr}-\eps)^b} .
\eeq
To compute the three parameters in the asymptotic form \Eq{AsyNorm}, we use three consecutive $(\eps,\|k\|_m)$ pairs from the $\Delta\eps$ steps of the continuation method. One reason for using a fixed $\Delta\eps$, as described in \Sec{Algorithm}, is that we find the estimate of the pole position to be more accurate than if a variable step size were used. We typically compute \Eq{Sobolev} for the angle component of $k$, though the action component gave similar results. 

Numerical computations for the rotation number $\gamma^{-2}$ and several examples of \Eq{StdMap} indicate that \Eq{AsyNorm} is a good fit for $m=2$ when $\eps_{cr} -\eps \lesssim 10^{-2}$, see \Fig{fitError}(a). More generally we find that \Eq{AsyNorm} applies for other values of $m$ with $b \approx m-1$.


The choice of $m$ significantly affects the error in the approximation of $\eps_{cr}$. In 
\Fig{fitError}(b) we compare the estimates of $\eps_{cr}$ from \Eq{AsyNorm} for Chirikov's map with those obtained numerically from Greene's criterion (recall \App{Greene}). For the invariant circles with rotation numbers $\gamma$ and $\sqrt{2}$, and using orbits of periods up to $30,000$, Greene's method gives
\bsplit{GoldenEpsCR}
	\eps_{cr}(\gamma) 	&= 0.9716353(1) ,\\
	\eps_{cr}(\sqrt{2}) 	&= 0.957447(6) , 
\esplit
where the error (the number in parentheses is the uncertainty in the last digit) is estimated from the extrapolation.\footnote
{Quadruple precision computations give $\eps_{cr}(\gamma) = 0.9716354063(2)$ \cite[\S 4.4]{MacKay93}.}
We observe that the smallest numerical error for the seminorm fit occurs with the choice $m = 2$. Fixing the maximal number of Fourier modes to $2^{13}$, mainly for computational speed, we typically obtain $\eps_{cr}$ with an error less than $10^{-4}$ (more details will be given in \Sec{Nobles}).

\InsertFigTwo{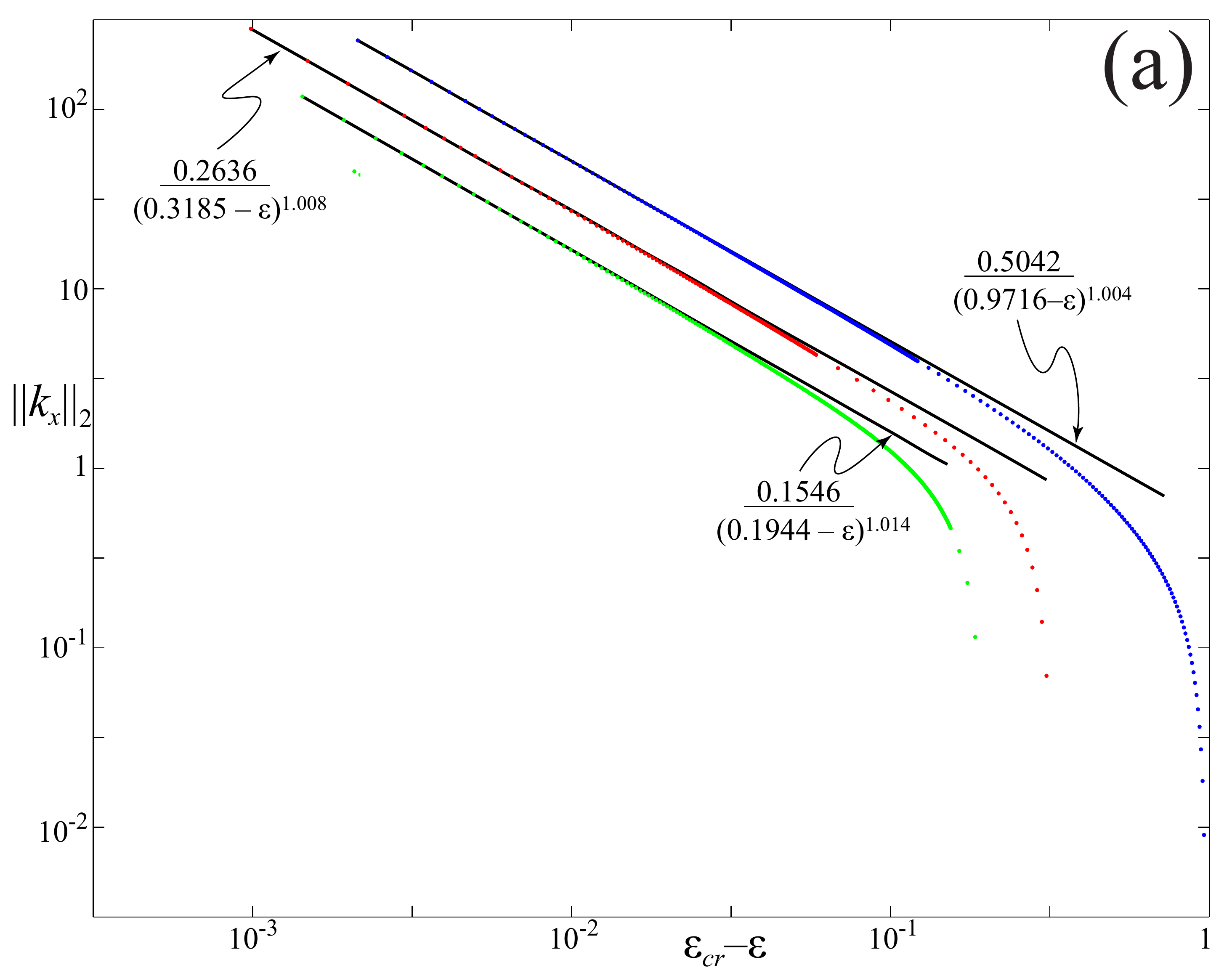}{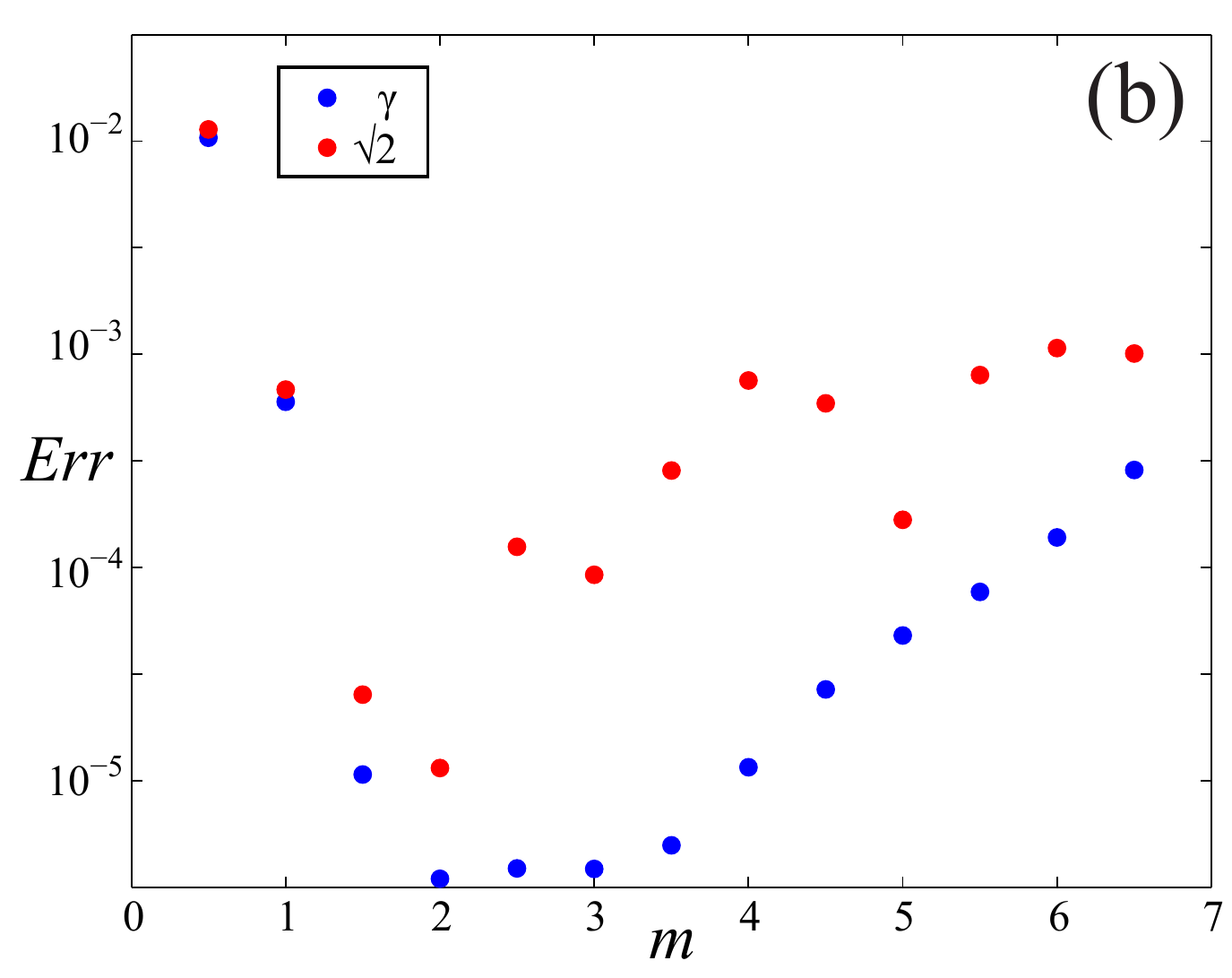}{ (a) Blowup of the seminorm $\|k_x\|_2$ for the $\omega = \gamma^{-2}$ invariant circle for three cases of \Eq{StdMap}. The upper (blue) curve corresponds to Chirikov's standard map. The lower two curves have the force \Eq{G3} with $\psi = \pi/4$. The middle (red) curve is for the frequency map \Eq{Twist} and the bottom (green) is for \Eq{NonTwist} with $\delta = 0.3$. In each case the horizontal axis is logarithmic based on the best estimate of $\eps_{cr}$ from \Eq{AsyNorm}. (b) Error in the pole location for fits to \Eq{AsyNorm} for seminorms with varying $m$ for Chirikov's map for two rotation numbers, compared to $\eps_{cr}$ from Greene's criterion. For both rotation numbers, the smallest error occurs with $m=2$.}{fitError}{3in}

When the invariant circle is not symmetric, we cannot compare with the results from Greene's residue criterion (a periodic orbit finder would require a two-dimensional search and be much less well-behaved). Nevertheless, we select $m=2$ for the computations in the rest of this paper. We estimate the error in the calculation of $\eps_{cr}$ as the difference between the last two, three-step approximations obtained with the step size $\Delta \eps = 0.0005$. Note, however, that this occasionally underestimates the true error in $\eps_{cr}$ by a factor of $2-8$.

\section{Conjugacies of Near-Critical Circles}\label{sec:CriticalCircle}
In this section we compute the conjugacy \Eq{conjugacy} for circles with golden mean rotation number for a variety of maps of the form \Eq{StdMap}. We use the seminorm technique to determine when these circles are destroyed, comparing to Greene's criterion when the map is reversible, but also generalizing to maps that do not have twist or reversing symmetries. 

We begin by computing the embedding of the oft-studied golden mean invariant circle for Chirikov's map, i.e., for $g=g_1$, \Eq{Chirikov}, and $\Omega = \Omega_1$, \Eq{Twist}. 
The algorithm of \Sec{FourierMethod} converges up to $\eps = 0.9695$, which from \Eq{GoldenEpsCR} is $\eps_{cr}(\gamma) - 0.0021$. The resulting embedding at this parameter value, as shown in \Fig{StdMapEmbed}, has error $e(\theta)$, \Eq{conj1}, with $L^2$-norm of $(10)^{-12}$ and $L^\infty$-norm of $1.7(10)^{-11}$.

\InsertFigTwo{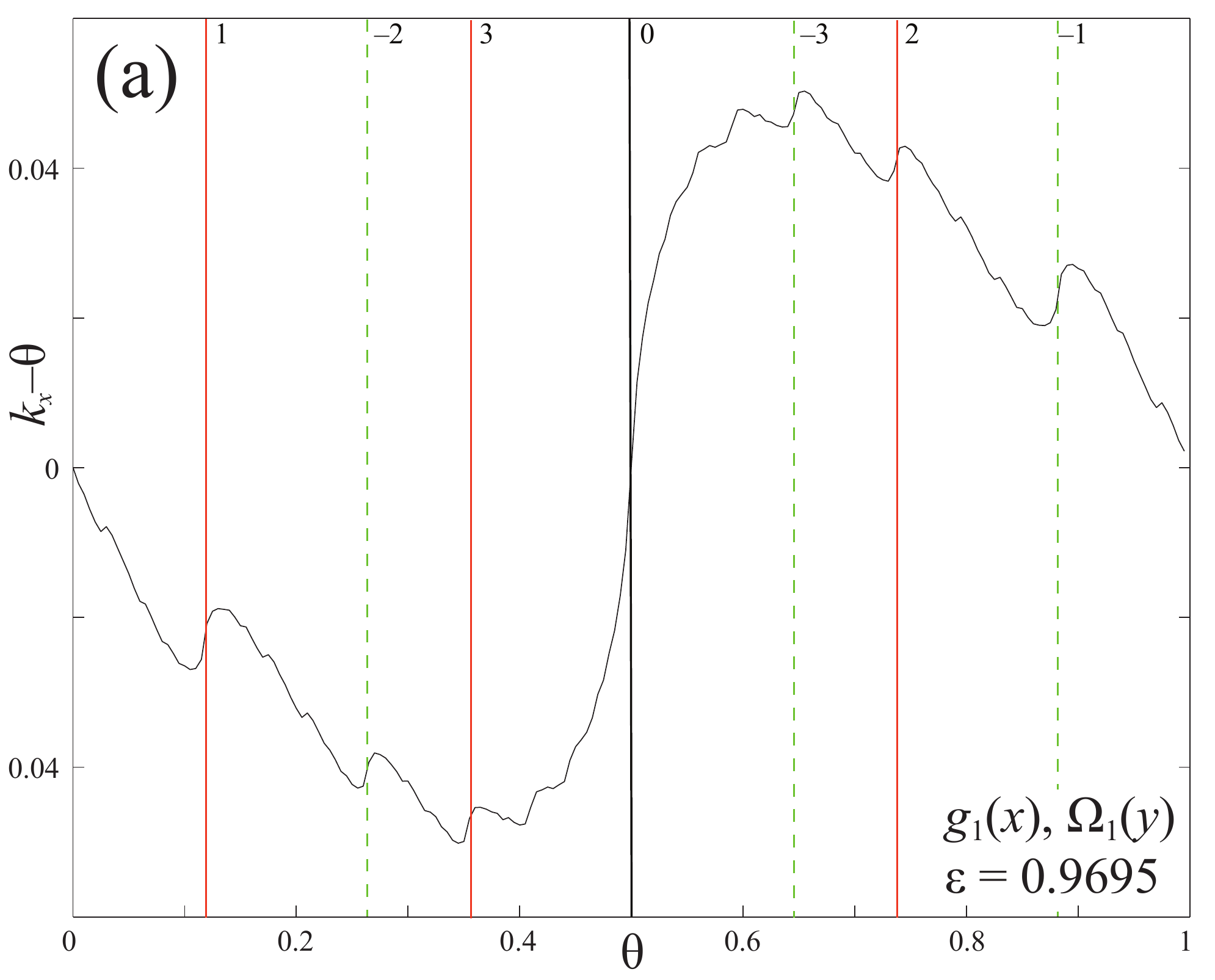}{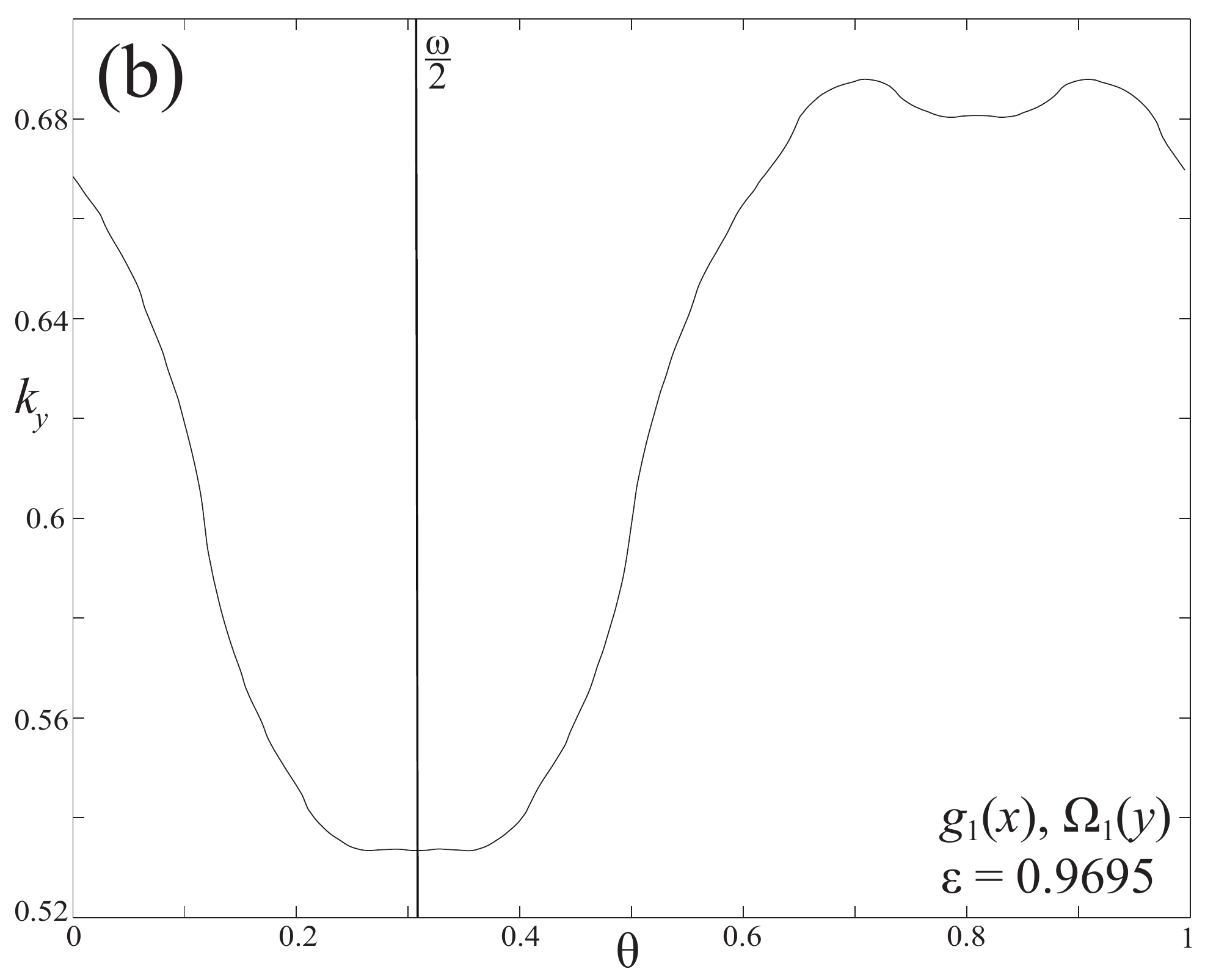}{Components of the conjugacy for the $\omega = \gamma^{-1}$ invariant circle of Chirikov's standard map from a computation with $2^{13}$ Fourier modes. (a) The function $k_x(\theta)-\theta$ and the first three forward and backward images of $\theta = \tfrac12$, indicated by the solid (red) and dotted (green) lines, respectively. (b) The action component $k_y(\theta)$ is even about $\omega/2$. }{StdMapEmbed}{3.2in}

Since Chirikov's map has twist, Aubry-Mather theory implies that each invariant circle becomes a cantorus upon destruction. Moreover, these sets are ordered in the sense that $\theta \mapsto k_x(\theta)$ is a monotone circle map. The implication is that upon destruction of the circle, the extension of the conjugacy to the Cantor set is the inverse of a devil's staircase; consequently, $k_x(\theta)-\theta$ must develop positive jump discontinuities at a dense set of $\theta$ values corresponding to the location of the gaps in the cantorus. Although the Fourier method does not converge for the super-critical case, the beginning of this metamorphosis can be seen as $\eps \to \eps_{cr}$. Indeed the largest, incipient discontinuity seen in \Fig{StdMapEmbed}(a) occurs at $\theta = \tfrac12$, which, since the computations give $k_x(\tfrac12) \approx \tfrac12$, also corresponds to $x \approx \tfrac12$. The remaining localized regions with large slope occur along the orbit $\theta_t = \tfrac12 + t\omega$. 

The locations of the gaps can be explained by recalling that an invariant circle corresponds to a minimizing state of the Frenkel-Kontorova energy \Eq{FK}.
The potential for this map, $V(x)=-\tfrac{1}{4\pi^2} \cos(2\pi x)$, has a single maximum at $x=\tfrac12$; thus, the AI theory implies that the largest gap should form around this maximum, and its images form a bi-infinite family (one ``hole") that corresponds to the gaps in the cantorus. The first few forward and backward iterates in this family are indicated by the vertical lines in \Fig{StdMapEmbed}(a). Continuing this process to larger iterates appears to account for all the local peaks in the derivative of $k_x$; thus it appears that this cantorus has one hole \cite{MacKay84}, a fact that can be proven near the AI limit \cite{Baesens94}.

The standard map has two independent families of reversors, see \App{Symmetries}: it is \emph{doubly-reversible}. The first reversor arises from the oddness of $g_1$ and the second from the oddness of $\Omega_1$. As is shown in \Cor{SymConj} in the appendix, the first reversor implies that the function $k_x(\theta)$ is odd about some point $\vphi$; using \Eq{FourierPhase}, we computed $\vphi = -\hat k_{x0} \approx -8.130(10)^{-5}$. Since $\vphi$ is so small, the graph in \Fig{StdMapEmbed}(a) looks like it is odd about $0$. Corollary~\ref{cor:SymConj} also implies that the action $k_y(\theta)$ is even about $\vphi + \tfrac12 \omega$, as is visually apparent in \Fig{StdMapEmbed}{b}. More precisely, we found that the identities \Eq{FourierPhase} are satisfied up to an error with $L^\infty$-norm of $9.9(10)^{-13}$, comparable to the accuracy of the computation of $k$ itself.

A second, doubly-reversible version of \Eq{StdMap} is the generalized, two-harmonic twist map with the force
\beq{G2}
	g_2(x;\psi)=\tfrac{1}{2 \pi} \left(\sin(\psi) \sin(2\pi x) + 
				\cos(\psi) \sin (4\pi x)\right),
\eeq
keeping the frequency map, $\Omega_1$.
The breakup of the golden circle for this map was studied first by Greene et al \cite{Greene87}, and later the formation of cantori and turnstiles was studied in \cite{Ketoja89, Ketoja94, Baesens93, Baesens94, Lomeli06}.\footnote
{In these papers, the parameterization $k_1 = \eps \sin(\psi)$ and $k_2 = -2\eps \cos(\psi)$ was used for $\eps g_2$.}
This two-harmonic map reduces to Chirikov's map when $\psi = \tfrac{\pi}{2}$. The two maps are also equivalent when $\psi=0$ or $\pi$ under the rescaling transformation $(x,y) \mapsto 2 (x,y)$, so that the parameter and rotation number are effectively doubled.
Finally, since $g(x+\tfrac12,-\psi) = g(x,\psi)$, it is sufficient to study the parameter range $\psi \in [0,\pi]$. 

\InsertFigTwo{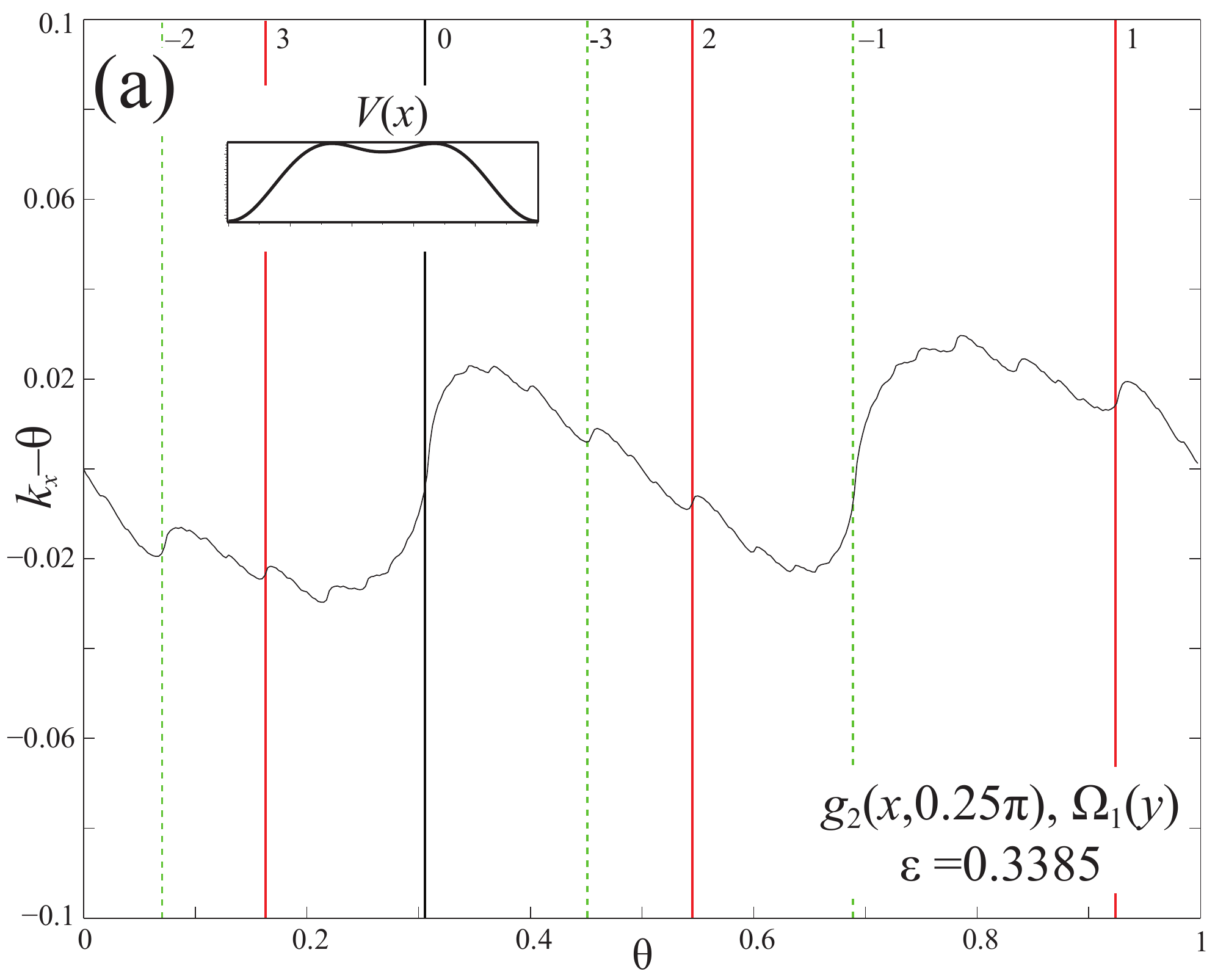}{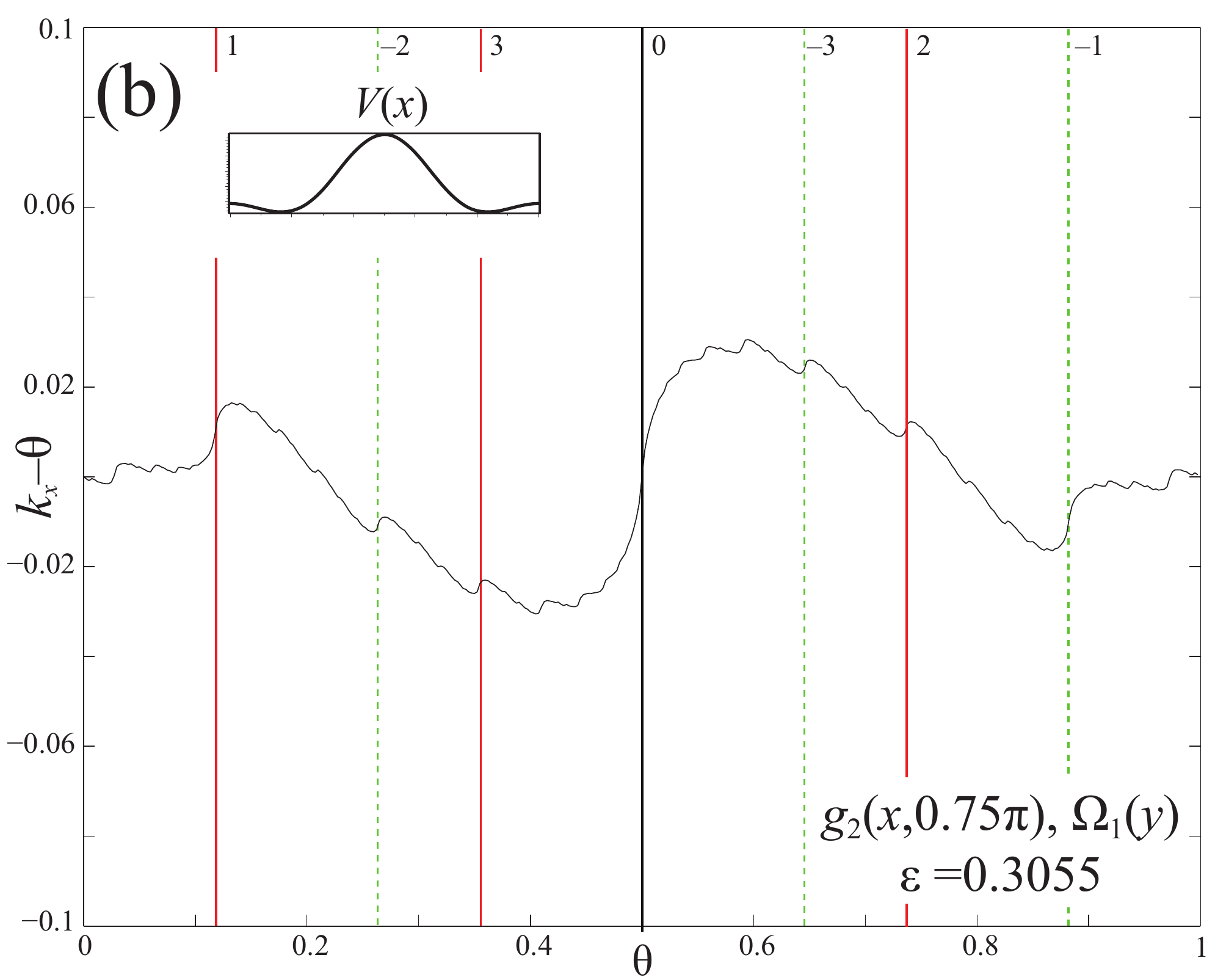}{Plots of $k_x(\theta)-\theta$ for near-critical invariant circles for \Eq{StdMap} with \Eq{Twist} and \Eq{G2} and $\omega = \gamma^{-1}$. Also shown are the first three forward (red, solid) and backward (green, dashed) images of the largest gap. Insets show the shape of the potential $V(x)$ for the two values of $\psi$.}{O1G2TwoPsis}{3.2in}

Two examples of near-critical invariant circles for the two-harmonic twist map are shown in \Fig{O1G2TwoPsis}. Since this map has the same reversors as Chirikov's map, $k_x$ is still odd (since $k_x(0) \approx 0$, essentially about $\theta = 0$). When $\psi$ is small, as in \Fig{O1G2TwoPsis}(a), it appears that there is a symmetric pair of two equally large incipient gaps (near $\theta_1 = 0.3090$ and $ 0.6910$), while for larger $\psi$, \Fig{O1G2TwoPsis}(b), there is a single, largest gap near $\theta = 0.5$. In both cases, it appears that there is a single orbit, $\theta_i + t\omega$, of gaps: for example, two largest peaks in \Fig{O1G2TwoPsis}(a) are separated by one iterate. Thus the resulting cantori will have one ``hole." 

The difference between the two cases is correlated to the structure of the potential $V(x) = \int g_2 dx$---see the insets in \Fig{O1G2TwoPsis}---which has critical points at $x = 0$, $\tfrac12$ and any solutions of $\tan \psi = -2\cos(2\pi x)$. When $0 < \psi < \arctan 2 \approx 0.35\pi$, $V$ has two wells, the deepest at $x=0$, separated by maxima of equal height; this is the case for \Fig{O1G2TwoPsis}(a). The maxima collide at $\tan\psi = 2$ annihilating the well at $x=\tfrac12$. For $|\tan\psi| \ge 2$, $V$ has single well. Finally, when $0.65\pi \approx \pi -\arctan 2 <\psi <\pi$ the potential has two equally deep wells separated by differing maxima, the largest at $x=\tfrac12$, as shown in \Fig{O1G2TwoPsis}(b). The AI limit, $\eps \to \infty$, for a double well potential has infinitely many cantori, selected by the fraction of points that lie in each well \cite{Ketoja94, Baesens94}. For these cases, there are cantori with two families of gaps; however, the second hole is formed beyond the initial breakup of the invariant circle. Note that when $\tan\psi = \pm 2$, the potential has degenerate critical points and persistence of AI states with these points occupied cannot be guaranteed. 

When the maxima of the potential collide at $\tan \psi=2$, we expected that the symmetric pair of largest gaps shown in \Fig{O1G2TwoPsis}(a) would also merge. 
However \Fig{O1G24Frame} shows that this merger does not occur until $\psi \approx 0.46\pi$, well after the collision. Note that as the largest gaps coalesce, they still appear to lie on a single orbit, though the number of iterates between them changes with $\psi$. For example, for $\psi = 0.44\pi$, in \Fig{O1G24Frame}(a), the largest gaps are two iterates apart, but for $\psi = 0.46\pi$, in \Fig{O1G24Frame}(c), they are five iterates apart.

\InsertFigFour{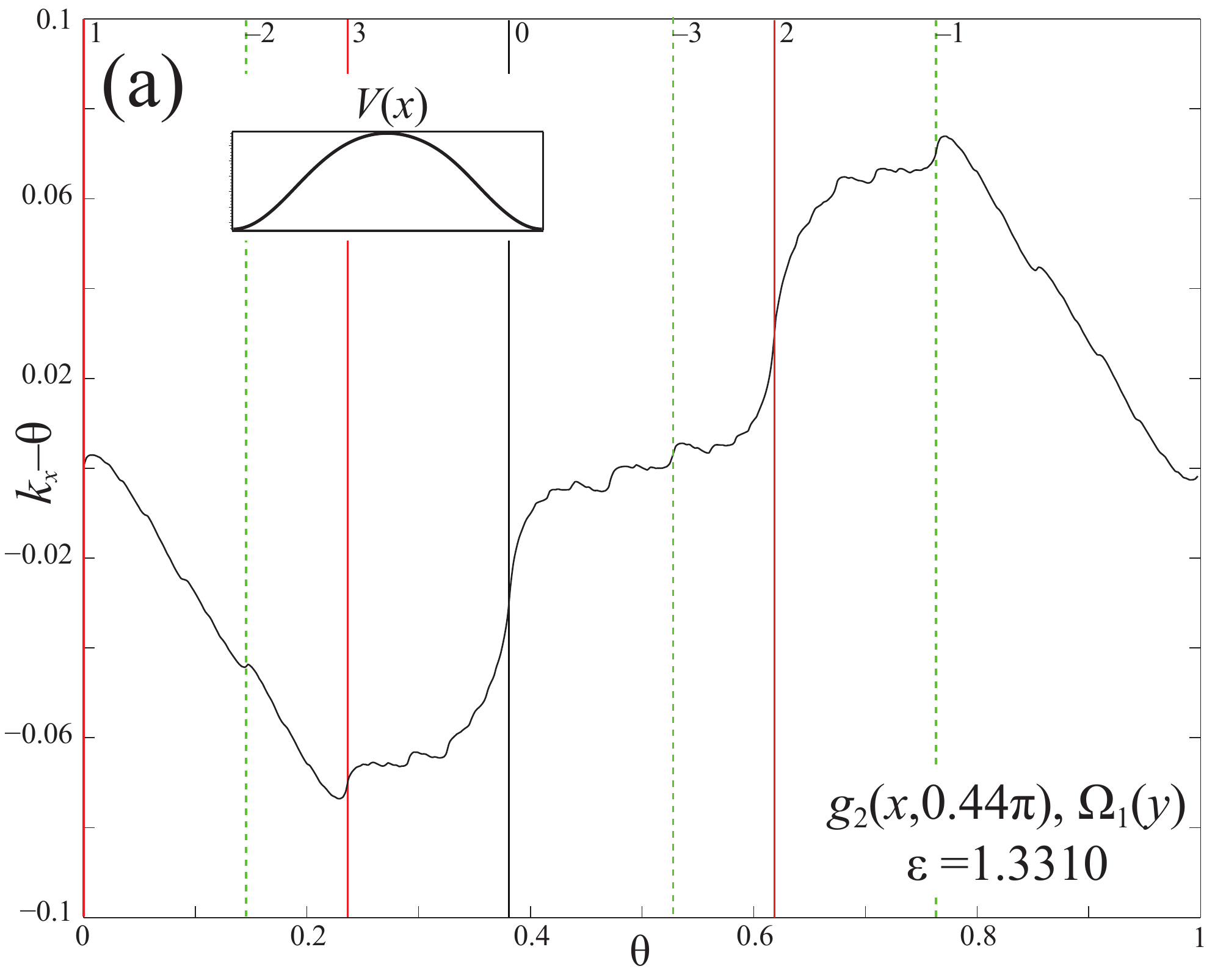}{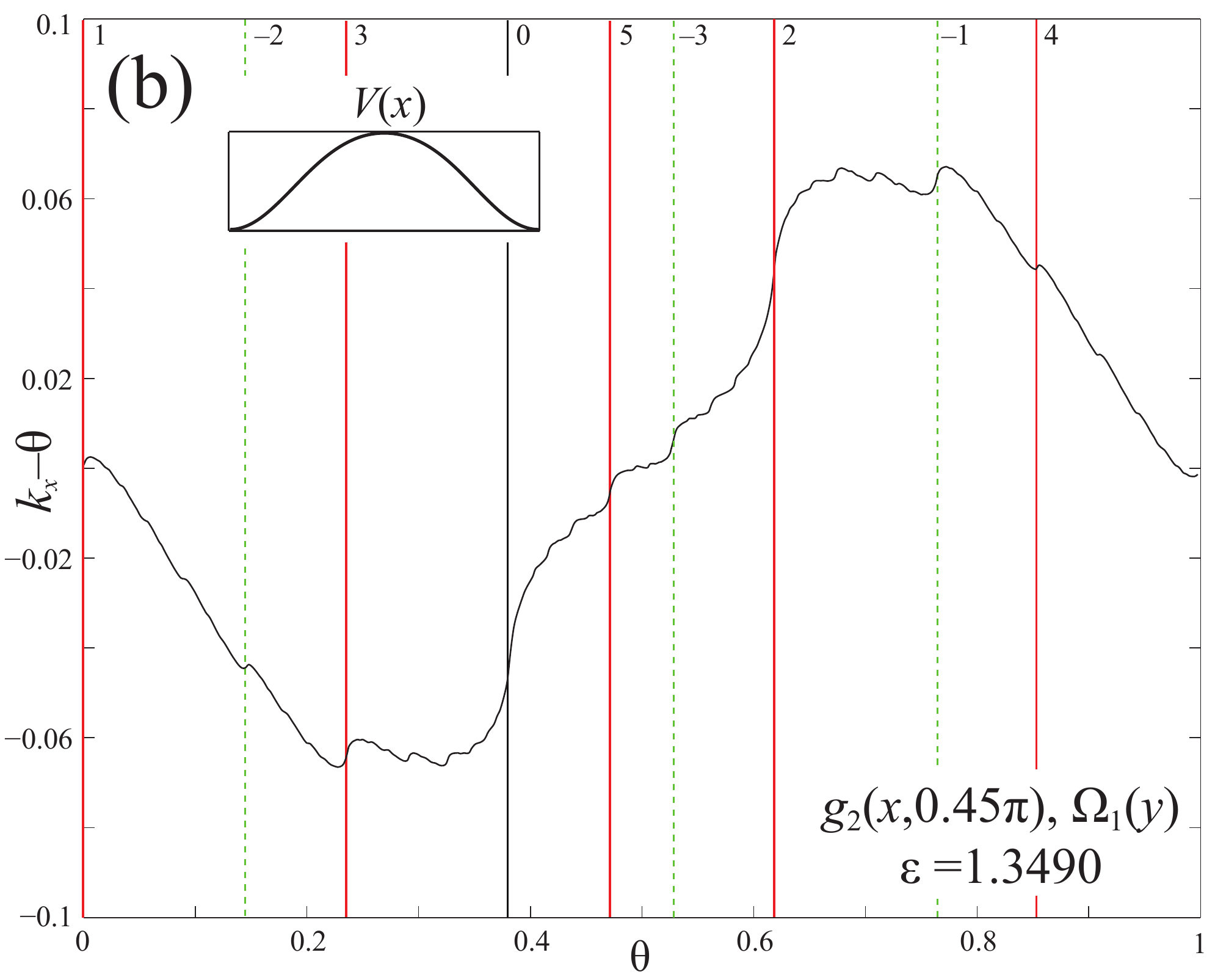}{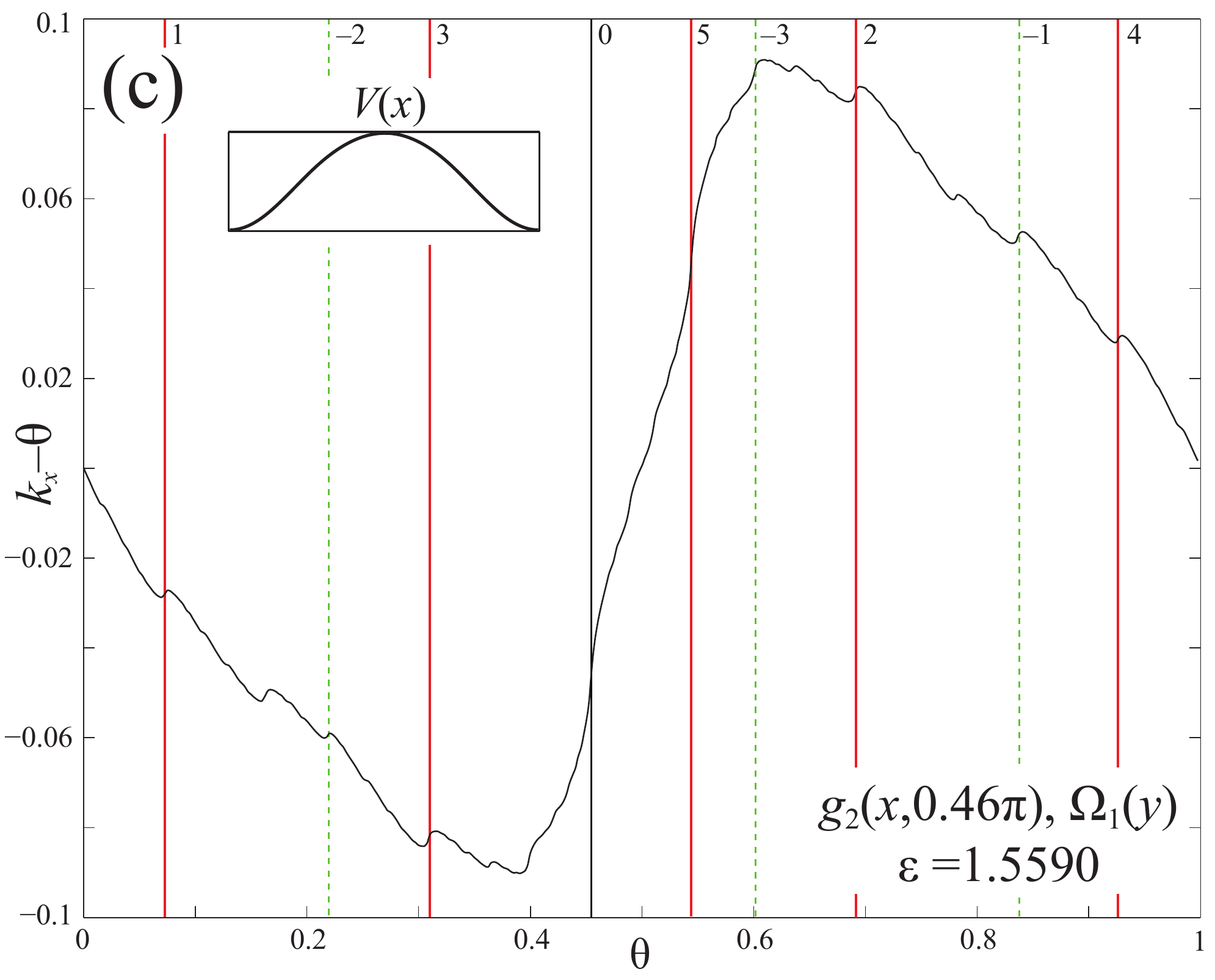}{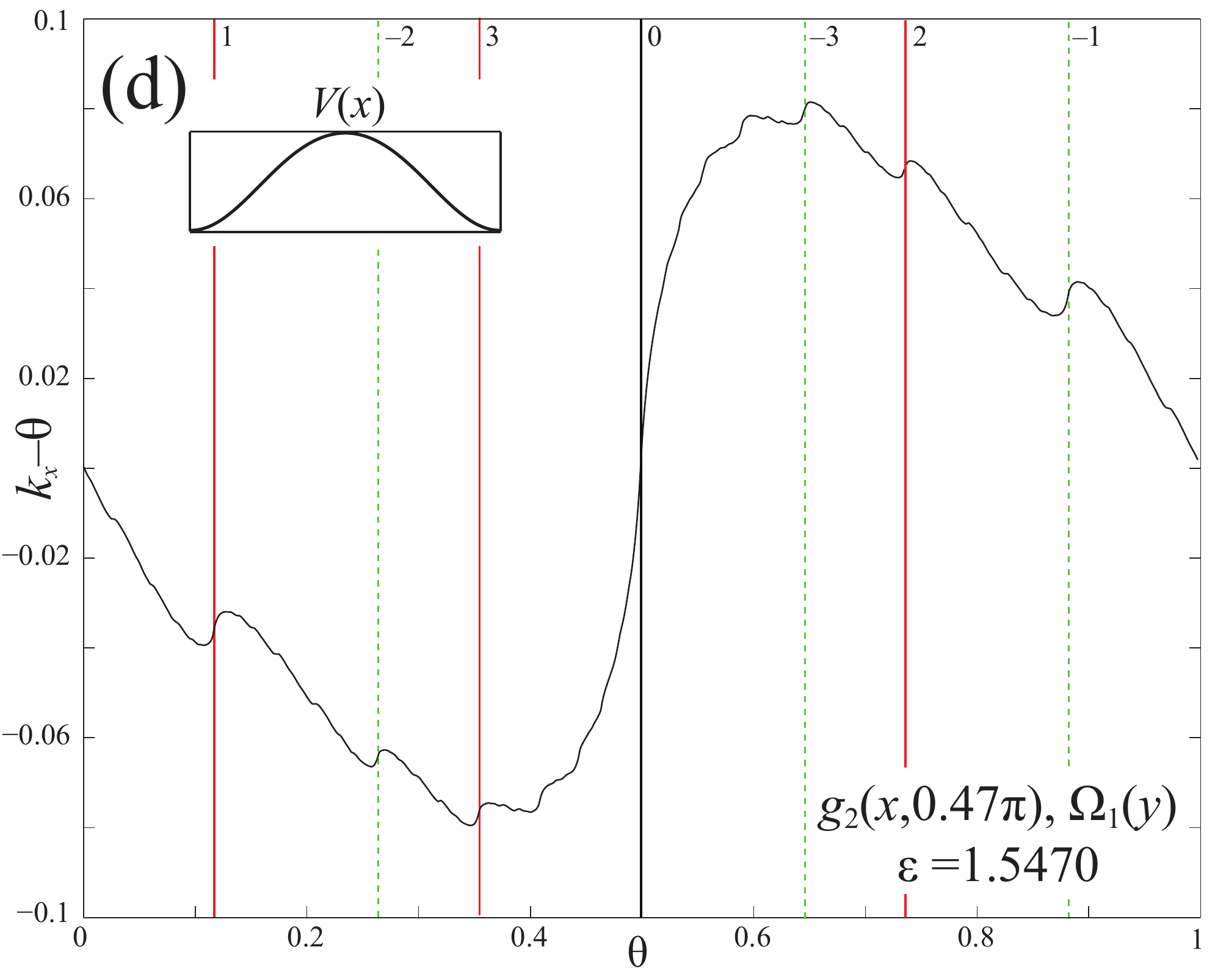}{Near-critical conjugacies, $k_x(\theta)-\theta$, for the $\gamma^{-1}$ circle of \Eq{StdMap} with $g_2$ and $\Omega_1$, and the values of $\psi$ and $\eps$ indicated. Insets show the potential $V(x)$ for these parameter values. Vertical lines show forward (solid) and backward (dashed) iterates of the largest gap.}{O1G24Frame}{3.2in}

The set of critical parameters $(\eps_{cr},\psi)$ for the golden circle of this map, computed using the seminorm method, is shown in \Fig{CritCurves}(a). For $\psi = \tfrac{\pi}{2}$, $\eps_{cr} \approx 0.971635$ since $g_2(x;\tfrac{\pi}{2}) = g_1(x)$. In addition $\eps_{cr} \approx 0.40236$ when $\psi = 0$ or $\pi$---this is half the critical parameter for the $\omega = \tfrac{\gamma}{2}$ circle of Chirikov's map \cite{Greene87}. When $\psi \in [0,\tfrac{\pi}{2}]$, the critical set exhibits a Cantor set of cusps as was first observed using Greene's criterion in \cite{Ketoja89}. The cusps are related to a complex set of symmetry breaking bifurcations of periodic orbits that limit onto the invariant circle \cite{Ketoja94}.

\InsertFigTwo{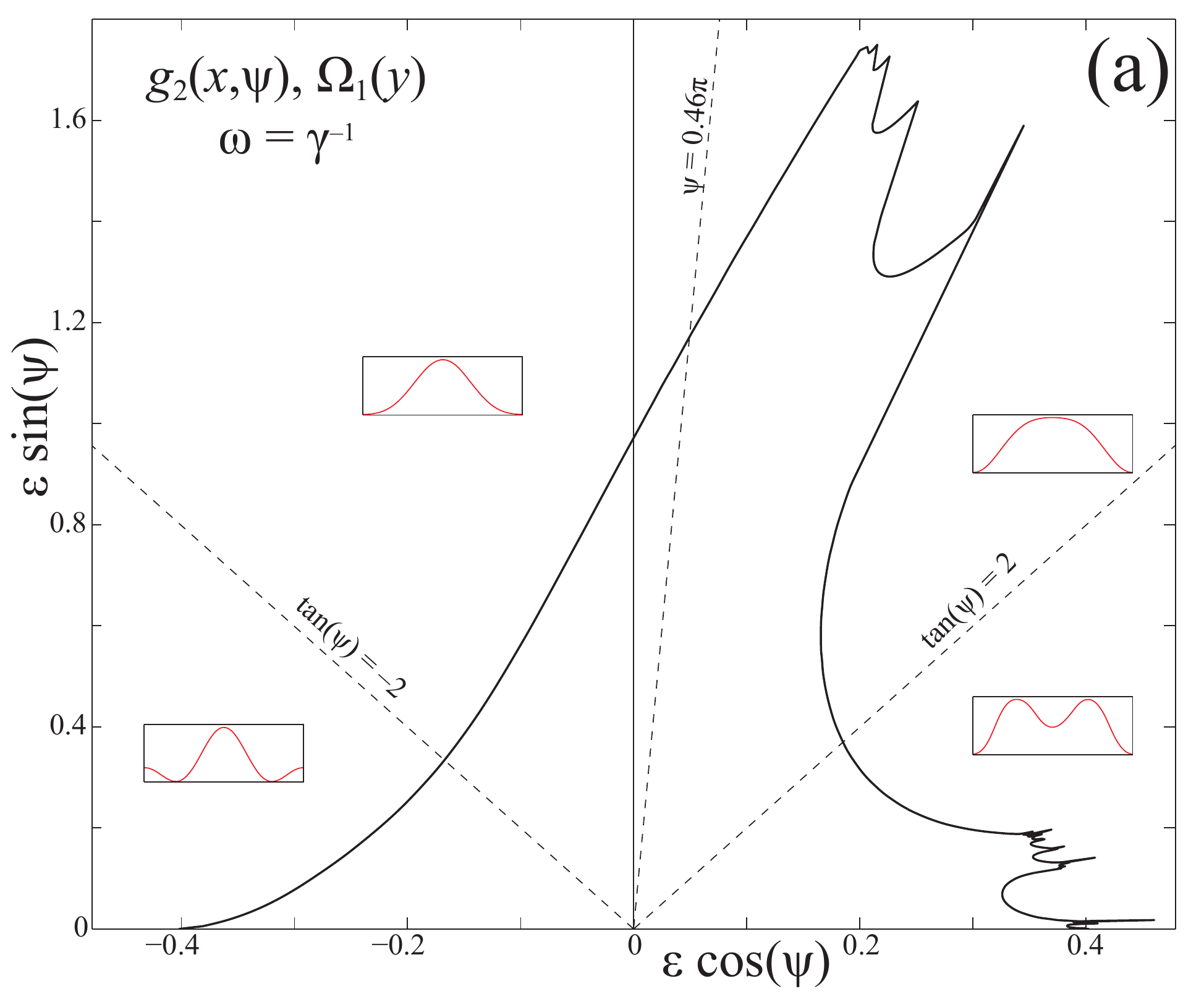}{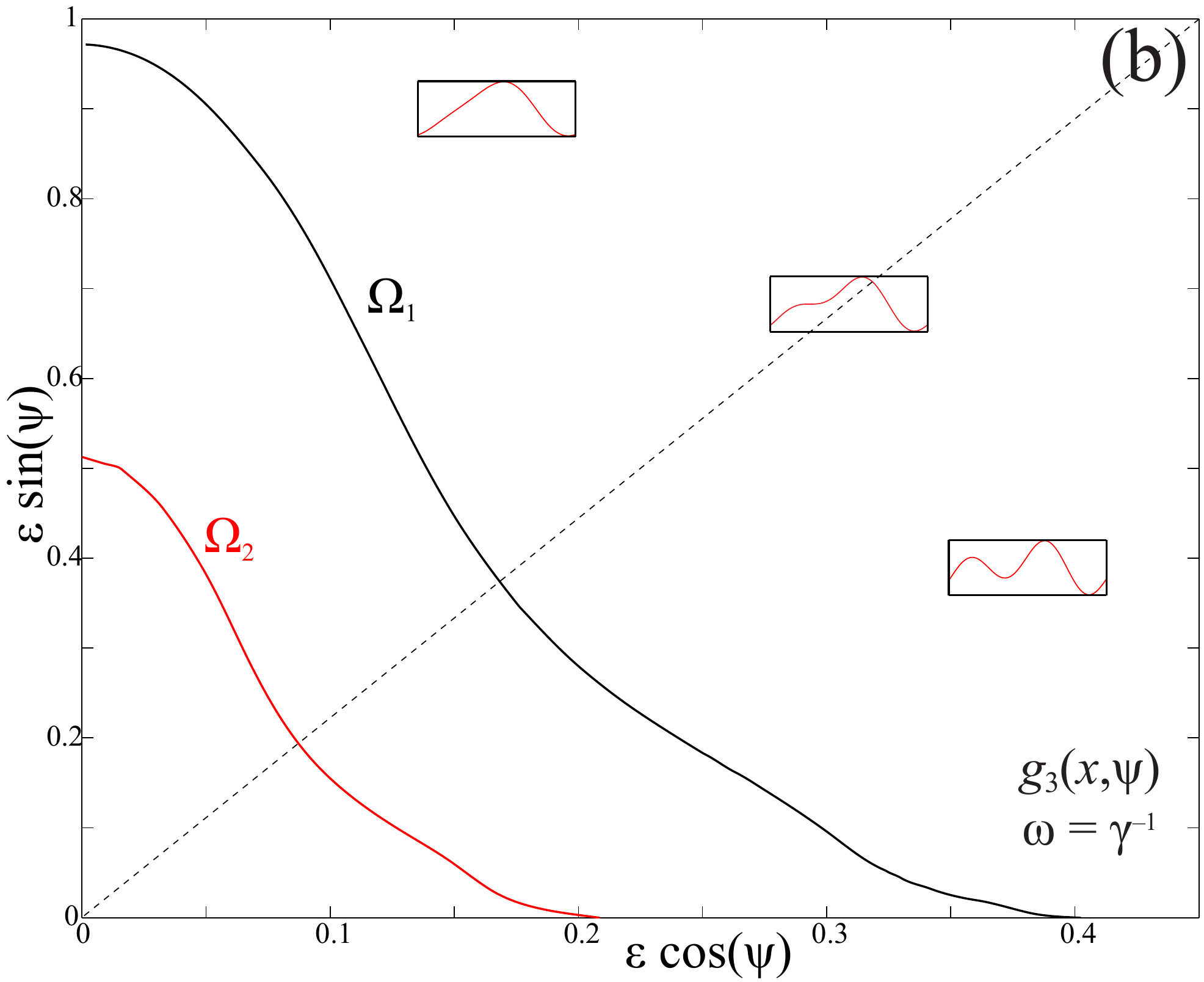}{Critical parameter set for the golden circle of \Eq{StdMap} for the frequency maps and two-harmonic forces shown, computed using the seminorm. The axes are the amplitudes of the two Fourier modes of $\eps g$. The dashed lines correspond to degenerate AI limits and for (a) to $\psi = 0.46\pi$ where the two symmetric gaps coalesce in \Fig{O1G24Frame}. Insets show representative $V(x)$.
}{CritCurves}{3in}

To begin an exploration of the breakup of circles in maps with less symmetry, consider a two-harmonic force similar to \Eq{G2}, but with a phase shift so that it is no longer odd: 
\beq{G3}
	g_3(x;\psi)=\tfrac{1}{2\pi}(\sin(\psi) \sin (2 \pi x) + \cos(\psi) \cos (4\pi x)) .
\eeq
With the odd frequency map $\Omega_1$, the map still has one reversor $S_2$, \Eq{S2Reversor}, which conjugates the two circles with rotation numbers $\pm\omega$. As before, $g_3(x+\tfrac12;-\psi) = g_3(x;\psi)$, so the parameter $\psi$ can be restricted to the range $[0,\pi]$. Moreover, since $g_3(-x;\pi-\psi) = -g_3(x;\psi)$, the transformation $(x,y,\psi) \to (-x,-y,\pi-\psi)$ leaves the dynamics invariant, so we need consider only $\psi \in [0, \frac{\pi}{2}]$. 



\InsertFigTwo{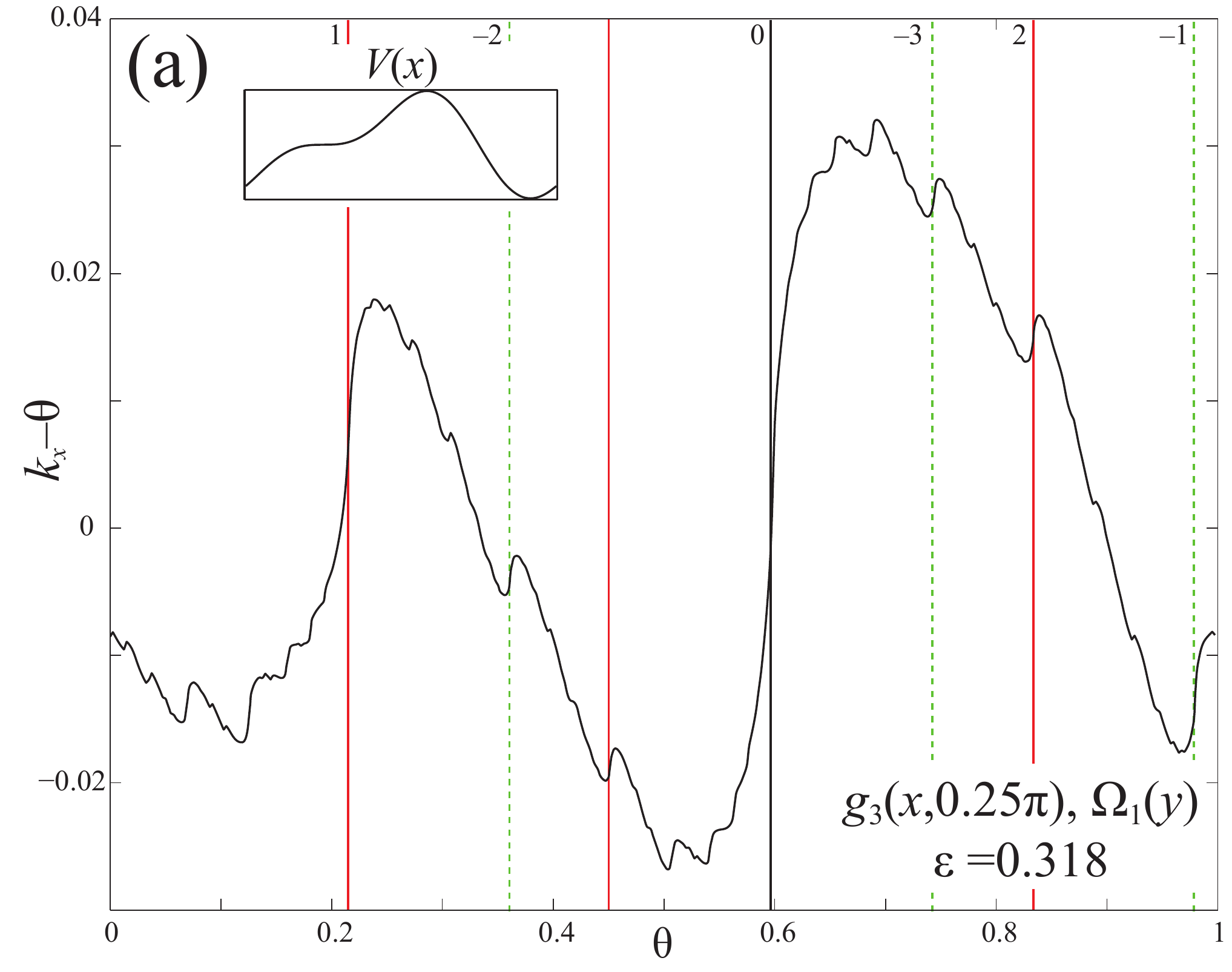}{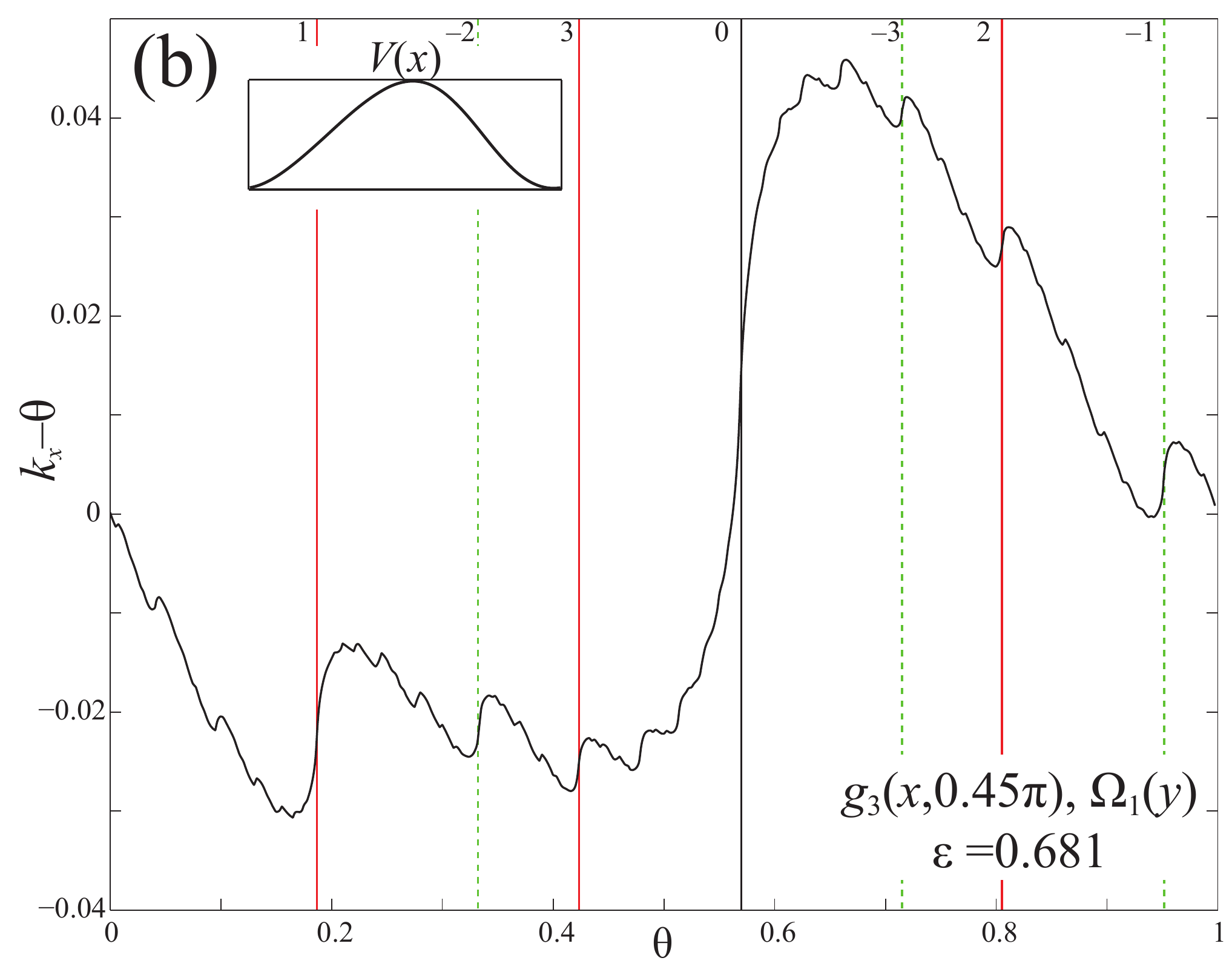}{Angle component of the conjugacy for a near-critical circle of the generalized standard map with $g_3$ and $\Omega_1$, for $\omega = \gamma^{-1}$ and two values of $\psi$. These embeddings do not exhibit the symmetry seen for the reversible case, but do display an orbit of incipient gaps.}{O1Ge3TwoPsi}{3in}

Two examples of near-critical invariant circles for this map are shown in \Fig{O1Ge3TwoPsi}. Since the force \Eq{G3} is not odd, the map \Eq{StdMap} no longer has the reversor \Eq{S1Reversor}, and the resulting invariant circles will not be invariant under $S_1$. Indeed, the conjugacies shown in the figure no longer show this symmetry and the relations \Eq{FourierPhase} no longer hold.

The potential for $g_3$ has two wells when $\psi \in [0,\tfrac{\pi}{4})$ and a single well for $\psi \in [\tfrac{\pi}{4},\pi]$. In both panes of the figure, the potential has a single well and the largest gap in $k_x$ is formed near the maximum of $V$. For $\psi < \tfrac{\pi}{4}$, the next-largest gap is formed near the smaller maximum of $V$. This gap can still be seen near $\theta = 0.2$ for the degenerate case of \Fig{O1Ge3TwoPsi}(a), and the corresponding $x = k_x(\theta)$ is near the degenerate critical point of $V$ at $x = 0.25$. This gap is the foward image of the largest gap and is much larger the remaining gaps. This secondary gap is still present in \Fig{O1Ge3TwoPsi}(b), but is significantly smaller, giving some evidence for the influence of the smaller maximum on the size of the gaps.

The critical set for the golden circle of this map is shown in \Fig{CritCurves}(b) (the black curve) for the range $\psi \in [0,\tfrac{\pi}{2}]$; this set is symmetric under reflections about the horizontal and vertical parameter axes. Just as for \Fig{CritCurves}(a), the points at $\psi = 0$ and $\psi = \frac{\pi}{2}$ correspond to Chirikov's map. However, this critical curve no longer has a Cantor set of cusps---indeed, it seems to be smooth apart from cusps at $\psi = 0$ and $\pi$, corresponding to the doubled Chirikov map.

We now consider several maps with the frequency map $\Omega_2$, \Eq{NonTwist}, for which the twist reverses at $y=0$. For this case, there are typically two circles for each rotation number in the range of $\Omega_2(y,\delta)$. When $\omega \gg -\delta$ one circle is contained in the positive twist region, $y>0$, and one in the negative twist region, $y<0$. The most interesting circles, however, cross $y=0$ so that the twist condition is locally violated. The breakup of these \emph{shearless} circles for the standard nontwist map (with $g_1$) has been much studied \cite{Apte03, Apte05, Fuchss06, Wurm11}, so we do not examine this map in detail here. 

Instead we consider a nontwist map with force $g_3$. It would seem that destroying the oddness of both $g$ and $\Omega$, would eliminate both reversors of \Eq{StdMap}. However, since $g_3$ is even about $\frac14$, $g_3(\tfrac14 +x,\psi) = g_3(\tfrac14-x,\psi)$, and $\Omega_2$ is even about $y=0$, this map has a reversor, the map $S_3$ of \Eq{S3Reversor}. This reversor conjugates the positive and negative twist circles; in particular they have the same $\eps_{cr}$. Moreover, the transformation $(x,y,\psi) \to (x,-y,\pi-\psi)$ leaves the dynamics invariant, and combining this with the reversor allows us to restrict $\psi \in [0, \frac{\pi}{2}]$.

Invariant circles of this map with $\omega \gg -\delta$ are very similar to those of a twist map, see for example the positive twist circle in \Fig{NonRevGM}. The conjugacy is very close to that for the twist case, $\Omega = \Omega_1$, shown in \Fig{O1Ge3TwoPsi}. Indeed, since $\langle k_y(\theta)\rangle \approx 0.9581$ and varies by only $\pm 0.023$, the twist is nearly linear over the range of oscillation of $y$. Moreover, since $D\Omega_2(y,\delta) = 2y \approx 1.9161$, the twist is almost a factor of two larger than that for $\Omega_1$---this accounts for the fact that $\eps_{cr}$ has been decreased by a factor very close to two from the $\Omega_1$ case.

\InsertFig{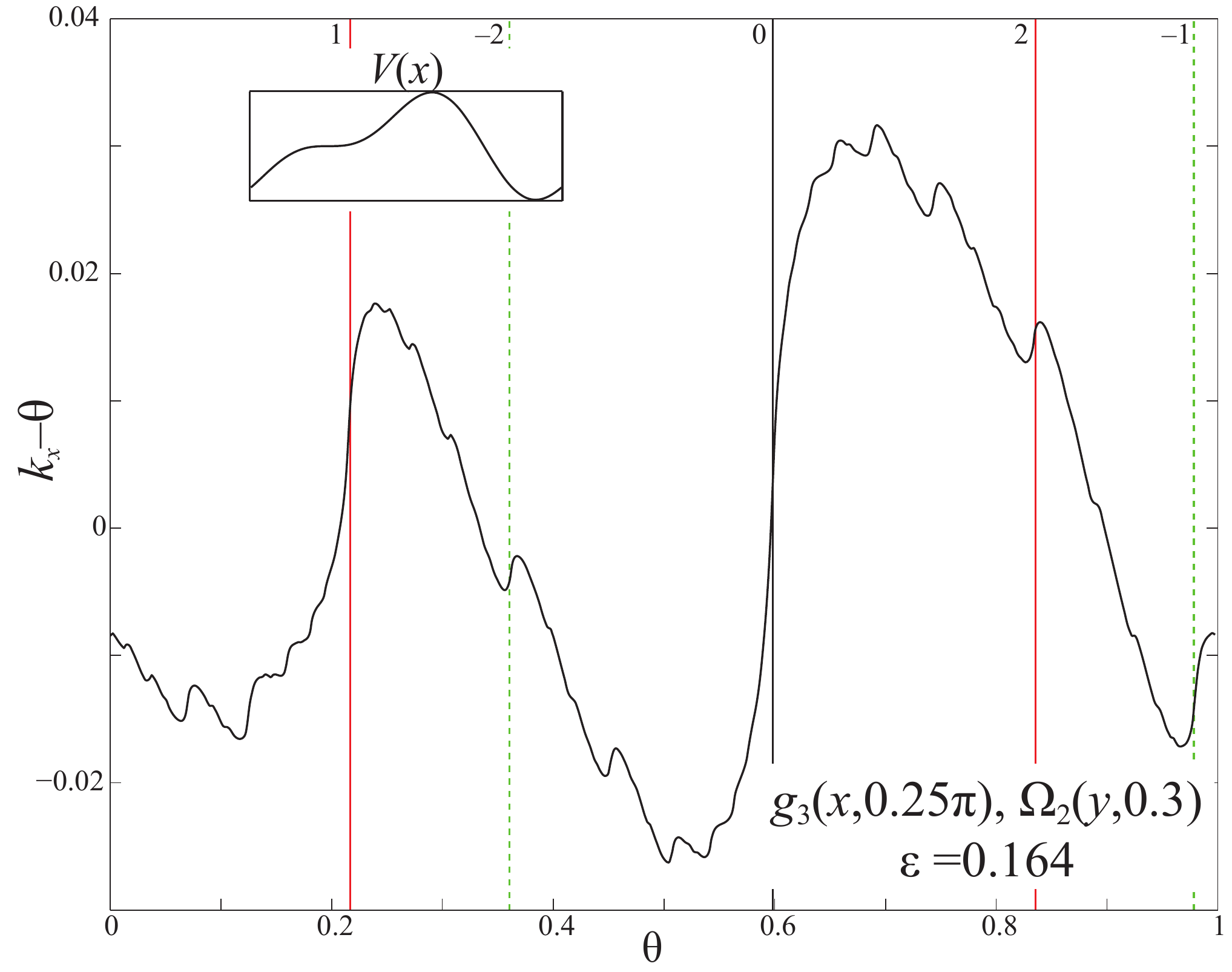}{Angle component of the conjugacy for near-critical circle of the generalized standard map with $g_3$ and $\Omega_2$, for $\omega = \gamma^{-1}$ and $y>0$.}{NonRevGM}{3in}


Circles that experience a stronger variation in twist display significantly different behavior than those of the twist case. This is especially true of those circles that cross the $y=0$ axis, where the twist vanishes. Figure~\ref{fig:BothTwist} compares circles for $\Omega_1$ and $\Omega_2$ with the same noble rotation number $\omega
\approx -0.2793$. Both still have one orbit of gaps; however for $\Omega_2$, the dominant gap is no longer associated with the largest maximum of $V$. Thus, though it appears that the invariant circle will become a ``one hole" cantorus upon destruction, traditional anti-integrable theory may not provide insight into its formation. Note also that $\eps_{cr}$ for the nontwist case is $4.02$ times larger than that for the twist map, no doubt related to the fact that $\langle D\Omega_2(y,0.3)\rangle = 2 \hat{k}_{y0} =0.2083 $ is $4.8$ times smaller than $D\Omega_1 = 1$.

\InsertFigTwo{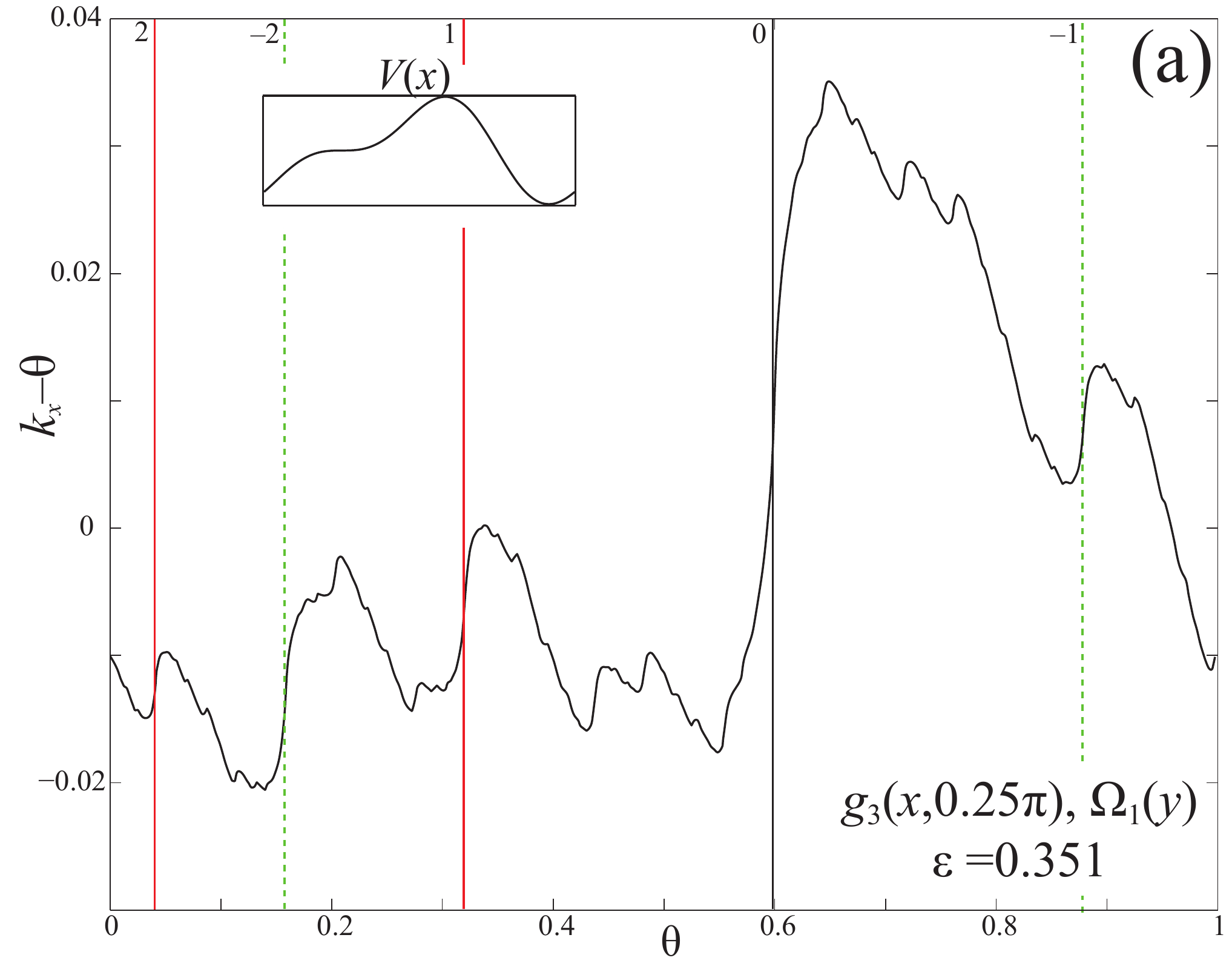}{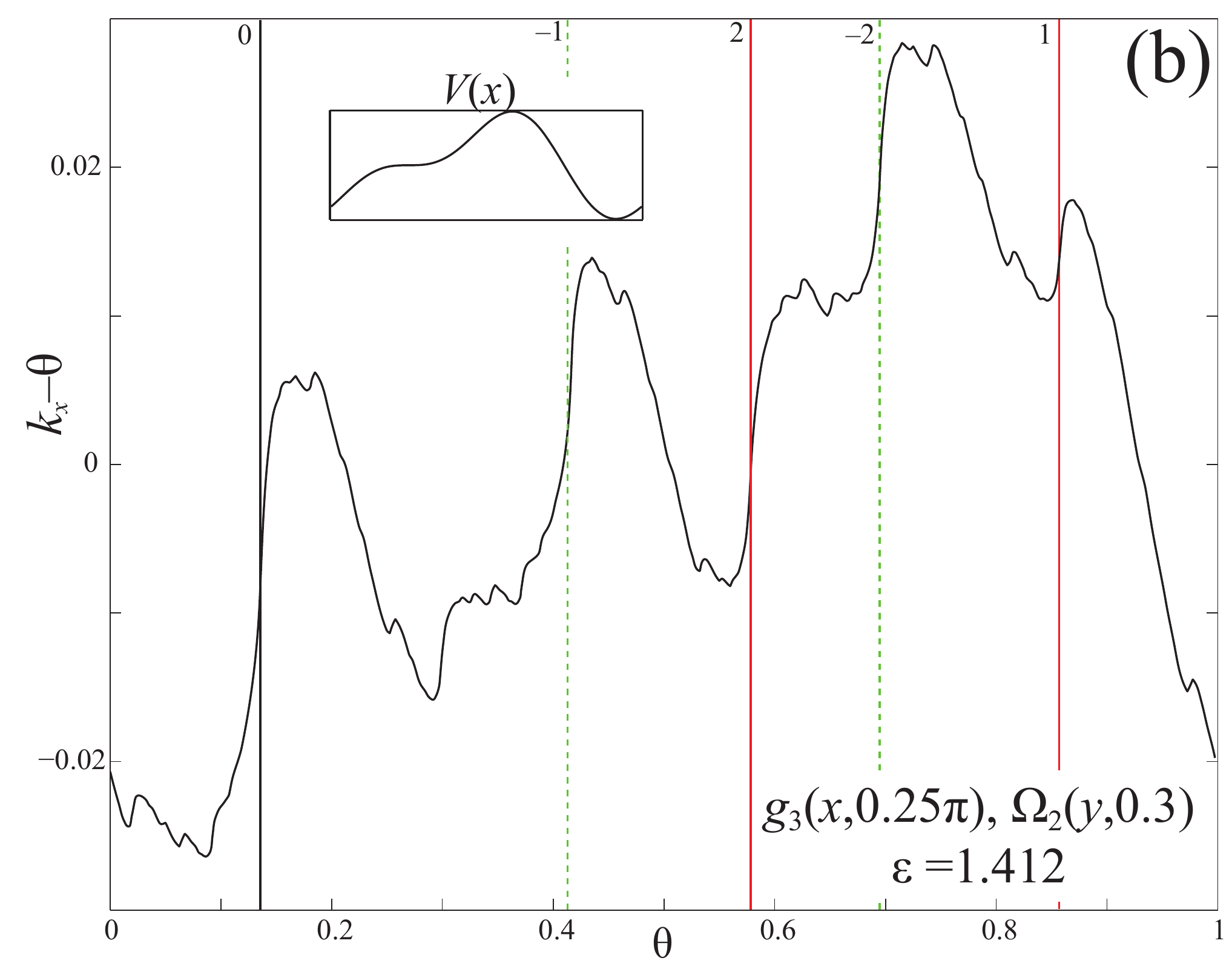}{The conjugacy $k_x$ for a near-critical invariant circle for $g_3(x;\tfrac{\pi}{4})$ with $\omega = -(12+19\gamma)/(43+68\gamma)$ and $\Omega_1$ (a) and $\Omega_2$ (b). The invariant circle for the nontwist case (b), crosses the $y$-axis.}{BothTwist}{3in}

Finally, we explored a map that we conjecture has no reversors. For this map we use the frequency map introduced by Rannou \cite{Rannou74}
\beq{Rannou}
	\Omega_3(y)=-y +\frac{1}{2\pi}(\cos(2\pi y)-1),
\eeq
that is neither even nor odd.
This frequency map, like $\Omega_2$, does not satisfy the twist condition. 
Rannou used this in conjunction with an even force that 
that has a nonzero integral so that the map has nonzero net flux and no rotational invariant circles when $\eps \neq 0$. To permit invariant circles, but to eliminate additional symmetry, we use instead a force that is neither even nor odd, has a zero average, and a full Fourier spectrum:
\beq{G4}
	g_4(x;\psi)=\frac{1}{2\pi}\sin\left(2\pi x + \psi \cos(4 \pi x) \right).
\eeq
The generalized standard map with $\Omega_3$ and $g_4$ is, to the best of our knowledge, completely nonreversible and has only a trivial symmetry group. 

An example of a near-critical golden circle for this map is shown in \Fig{RannouConj}. This circle also appears to have one hole; the principal gap forms near $\theta= 0.408$, close to the (single) maximum of the potential, as shown in the inset.
\InsertFigTwo{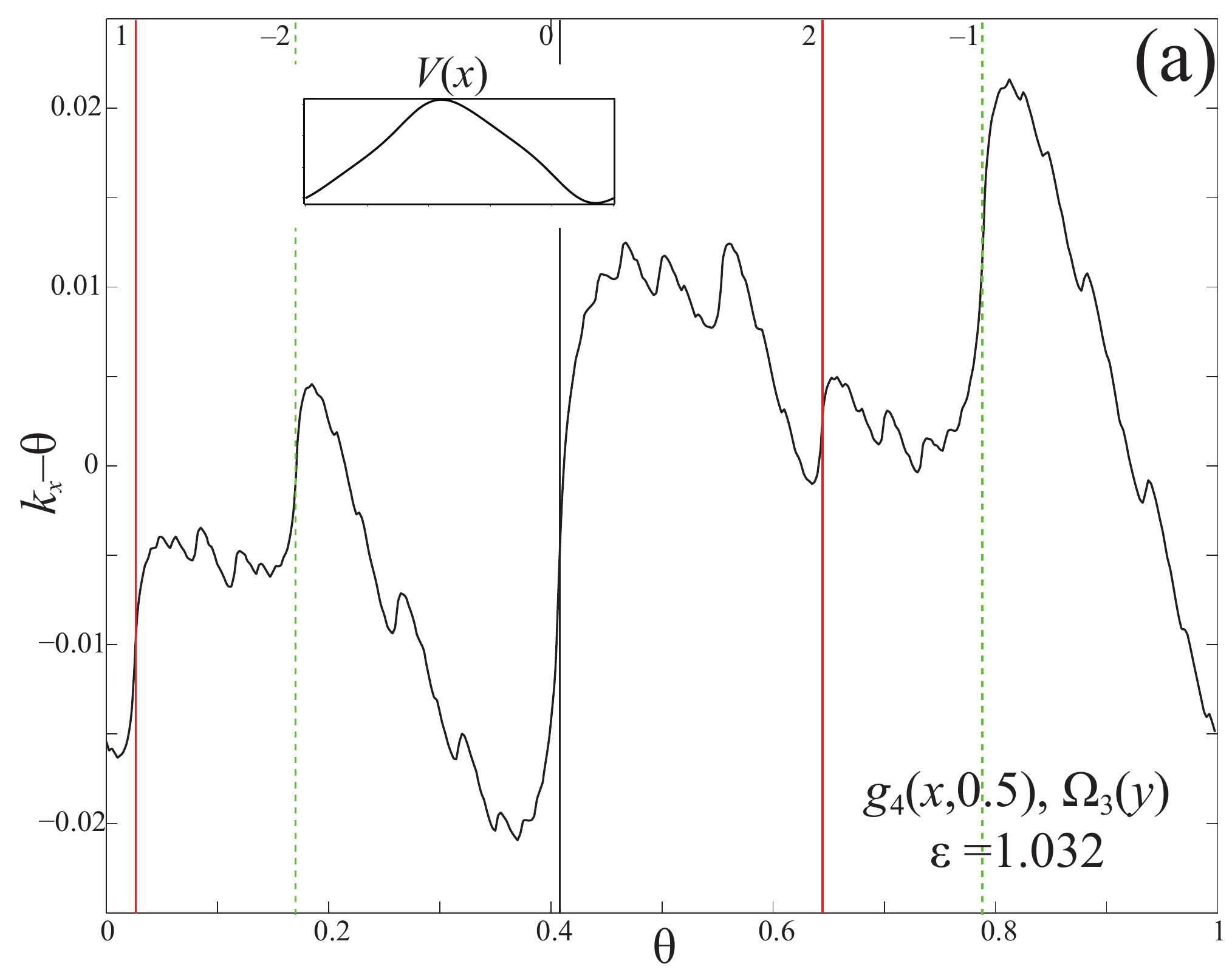}{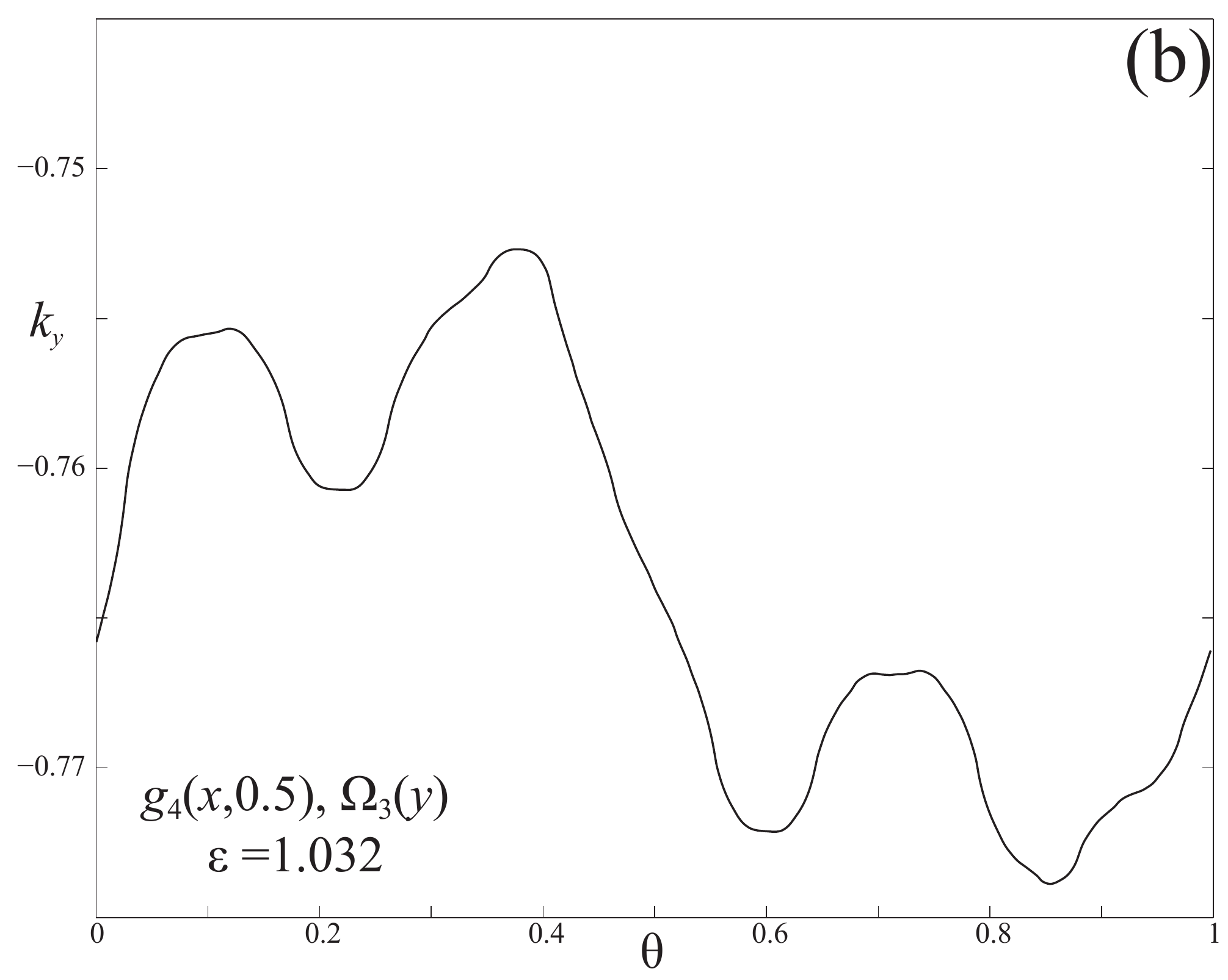}{Angle (a) and action (b) components of the conjugacy for a near-critical invariant circle with $\omega = \gamma^{-1}$ of a generalized standard map with no known symmetries.}{RannouConj}{3in}

In conclusion, for all of the cases that we investigated, the near-critical invariant circles show the formation of a single family of gaps, the largest of which is usually associated with the global maximum of $V(x)$. This presumably leads, as in \Con{Cantorus}, to the formation of a cantorus for $\eps > \eps_{cr}$, even when the twist condition is not satisfied.

\section{Robustness of Noble Invariant Circles}\label{sec:Nobles}
An invariant circle is \emph{locally} most robust in a one-parameter family $f_\eps$ if it exists for larger $\eps$ than any circle in some neighborhood in phase space. The Greene-MacKay renormalization theory \cite{MacKay93} leads to the conjecture that the locally most robust circles for twist maps have ``noble" rotation numbers, that is
\beq{FrequencyRatio}
	\omega(M,\gamma) = \frac{\nu_1}{\nu_2}, \quad \nu = (1,\gamma) M,
\eeq
for a unimodular matrix $M$.
In this case $\nu$ is an integral basis for the ring $\bZ(\gamma)$, and the collection of nobles is projectively equivalent to the set of integral bases of $\bZ(\gamma)$, recall \App{Farey}. 

It is not unexpected that the nobles are robust; for example, the size of the perturbation allowed in KAM theory is proportional to the Diophantine constant $c(\omega)$ \Eq{Diophantine}, and any noble has the maximal value $c(\gamma) = 5^{-1/2}$. The most conclusive evidence for the robustness of the nobles is the numerical study of MacKay and Stark \cite{MacKay92c} for the standard map. 

Here we investigate the relative robustness of nobles relative to other quadratic fields for the various generalized standard maps introduced in \Sec{CriticalCircle}. The six real rings $\bZ(r)$ with the smallest discriminants $D(r)$, or equivalently the largest Diophantine constants, are shown in \Tbl{Rings}. To characterize the computational accuracy, we first compute the critical parameter value of Chirikov's map for circles in each ring using both the residue and the seminorm methods. \Tbl{Rings} shows that the golden mean is, of course, the clear winner with the largest critical value, \Eq{GoldenEpsCR}.

\begin{table}
\centering
\begin{tabular}{c|c|c|l|l|l}
$r$ &$D(r)$&CF&$\eps_{cr}^{R}(r)$&$\eps_{cr}^{S}(r)$&$\langle\Delta \eps_{cr}\rangle_{rms}$\\[1.2ex]
\hline

$\gamma$ 	&5				&$[\overline{1}]$		&0.9716358(1)	&0.971638(1)	&$1.6(10)^{-6}$\\
1+$\sqrt{2}$&8				&$[\overline{2}]$		&0.957447(6)		&0.95744(7)	&$3.0(10)^{-6}$\\
1+$\sqrt{3}$&12				&$[\overline{2,1}]$	&0.87608(1)		&0.8756(6)	&$4.3(10)^{-4}$\\
$\tfrac12(3+\sqrt{13})$&13	&$[\overline{3}]$		&0.89086(2)		&0.8905(3)	&$3.3(10)^{-3}$\\
$\tfrac12(3+\sqrt{17})$&17	&$[\overline{3,1,1]}$	&0.91573762(4)	&0.9143(3)	&$1.0(10)^{-4}$\\
$\tfrac12(3+\sqrt{21})$&21	&$[\overline{3,1}]$	&0.77242(5)		&0.7720(8)	&$2.1(10)^{-4}$\\
\end{tabular}
\caption{\footnotesize The six quadratic fields $\bQ(r)$ with the smallest discriminants, $D$. Each is generated by the irrational $r$ with the periodic continued fraction (CF) shown. The $4^{th}$ and $5^{th}$ columns give $\eps_{cr}(r)$ for Chirikov's map computed by the residue method (for orbits up to period $30,000$) and the seminorm method, respectively. The parentheses indicate the estimated extrapolation error in last digit. The last column is the root mean square difference between the two computations for $256$ rotation numbers in each field.}
\label{tbl:Rings}
\end{table}

The error in each computation of $\eps_{cr}$ was estimated by comparing the extrapolated values from the last two iteration steps. The computations indicate that $\eps_{cr}$ is about an order of magnitude less accurate using the seminorm method; however, the error is occasionally overestimated by this calculation as discussed in \Sec{FindEpsCr}. It is not surprising that error for both methods tends to increase with discriminant: circles from fields with larger $D$ tend to be closer to low-order resonances. For the residue method, each periodic orbit is nearer to those with lower period, making the orbits more difficult to compute, and for the conjugacy method, a larger discriminant implies that the denominators in the solution of the cohomology equation \Eq{CohoSoln} will be smaller, inducing larger errors in the Fourier coefficients.

To study local robustness, we construct a set of nearby rotation numbers from each ring, using the $256$ rationals in $[0,\tfrac12]$ from the Farey tree up to level eight, see \App{Farey}. Each pair of neighboring rationals, $(p_1/q_1, p_2/q_2)$, gives a basis vector 
$\nu = (p_1 +p_2 r, q_1 + q_2 r)$ for each of the six rings. The six corresponding frequency ratios $\omega(M,r)$ \Eq{FrequencyRatio} lie the interval between the neighbors. These interval widths range from $0.0002$ to $0.1$ and define the ``local" neighborhoods for the robustness test. The last column of \Tbl{Rings} shows the rms difference between the residue and seminorm computations for the $256$ irrationals in each ring for Chirikov's map. 

For a given pair of neighboring rationals, we compare the values of $\eps_{cr}$ for each of the six rings to determine which is locally most robust.
Since the computations of $\eps_{cr}$ are uncertain, we are unable to determine a winner when the difference between the most robust and other circles is too small. The approximation to $\eps_{cr}$ combined with the estimated error generates a critical interval, in which the true $\eps_{cr}$ likely resides. A circle is deemed to be locally most robust if its entire critical interval is above the critical intervals of the other five circles. If the most robust interval overlaps with any other interval, we cannot conclusively declare a winner. We also discounted any result for which the width of the critical interval for the most robust circle was greater than $10^{-4}$. 

\Tbl{CompareRings} shows the number of times each ring was most robust in a comparison with the other rings. For the residue computation the nobles are most robust $69\%$ of the time, and for the seminorm method, $67\%$. Errors caused 2-3\% of the trials to be discarded. These results are similar to those of MacKay and Stark \cite{MacKay92c}. There are cases in which numbers in $\bQ(\sqrt{2})$ appear more robust than a noble, but MacKay and Stark found that when the noble at a particular level on the Farey tree was not most robust, there was always a noble at a deeper level that was. Thus the cases in which a noble does not win the contest appear to result from the neighborhood being too large.

If we remove the field $\bQ(\gamma)$, then a comparison of the remaining five fields (the second and seventh rows of the table) shows that $\bQ(\sqrt{2})$ is most robust. Continuing this comparison for the remaining fields shows that robustness seems to be a monotone function of the field discriminant. However, for the seminorm method, the uncertainties are too large to determine a winner amongst the four largest discriminant fields.

\begin{table}
\centering
\begin{tabular}{c|c|c|c|c|c|c|c}
Method &$\gamma$ &1+$\sqrt{2}$&1+$\sqrt{3}$&$\tfrac12(3+\sqrt{13})$&$\tfrac12(3+\sqrt{17})$&$\tfrac12(3+\sqrt{21})$&Unsure\\[1.2ex]
\hline
residue
&173		&67	&10	&0	&0	&0	&6\\
&	&240	&10	&0	&0	&0	&6\\
&	&	&255	&0	&0	&0	&1\\
&	&	&	&180	&44	&0	&32\\
&	&	&	&	&256	&0	&0\\
\hline
seminorm
&166	&73	&6	&1	&0	&1	&9\\
&	&228	&9	&2	&1	&1	&15\\
&	&	&34	&29	&6	&2	&185\\
&	&	&	&33	&7	&3	&213\\
&	&	&	&	&7	&5	&244\\
\end{tabular}
\caption{\footnotesize The relative robustness of invariant circles of Chirikov's standard map with rotation numbers in different algebraic rings. The numbers in each column indicate the number of times a circle with rotation number in that ring was most robust for $256$ trials. The final column is number of trials for which a winner could not be conclusively determined.}
\label{tbl:CompareRings}
\end{table}

It has long been conjectured that noble invariant circles should be locally most robust for maps with more general forces, with nonmontone frequency maps, and without reversing symmetries. We investigated this using the seminorm method for the maps of \Sec{CriticalCircle} and found that in each case the nobles appeared to be locally most robust to nearly the same degree that they are for Chirikov's map. For example, \Tbl{RobustDataG3} shows the results for the generalized standard map with $g_3$ and $\Omega_1$ for $\psi \in [0, \pi/2]$. Regardless of the choice of $\psi$, the noble circles were once again locally most robust at least $60\%$ of the time. We suspect that---just like for Chirikov's map---whenever one of the six non-nobles was a winner, there is a noble below it on the Farey tree that wins, but we did not examine this in detail. 

\begin{table}
\centering
\begin{tabular}{c|c|c|c|c|c|c|c}
 $\tfrac{202}{\pi}\psi$&$\gamma$ & 1+$\sqrt{2}$ & 1+$\sqrt{3}$ & $\tfrac12(3+\sqrt{13})$& $\tfrac12(3+\sqrt{17})$ & $\tfrac12(3+\sqrt{21})$&Unsure \\[1.2ex]
\hline
$1$	 		&165		&67		&6		&1		&2		&2		&13\\
$11$ 		&156		&80		&2		&1		&1		&6		&10\\
$21$ 		&154		&76		&5		&1		&2		&4		&14\\
$31$ 		&160		&72		&6		&2		&2		&2		&12\\
$41$ 		&160		&71		&2		&1		&3		&3		&16\\
$51$ 		&158		&73		&5		&1		&2		&5		&12\\
$61$ 		&152		&76		&3		&1		&2		&8		&14\\
$71$ 		&164		&72		&0		&0		&2		&7		&11\\
$81$ 		&163		&71		&3		&0		&1		&5		&13\\
$91$ 		&163		&68		&7		&2		&2		&4		&10
\end{tabular}
\caption{\footnotesize Relative robustness of circles from six quadratic fields for the generalized standard map with $g_3(x;\psi)$, $\Omega_1(y)$ and $10$ values of $\psi$ using the seminorm method. The number in each column corresponds to the number of times each of the six rings of \Tbl{Rings} was most robust from $256$ trials. The final column is number of trials for which a winner could not be conclusively determined.}
\label{tbl:RobustDataG3}
\end{table}

As a second example, \Tbl{RobustDataO2G3} shows the robustness data for the nontwist map with frequency map $\Omega_2$. In this case the symmetries of the map that allow us to restrict to $\omega \in (0,\tfrac12)$ are gone. We therefore studied the $512$ rotation numbers generated from the rationals on level $9$ of the Farey tree in the interval $(0,1]$. We once again found that circles with noble rotation numbers are locally most robust at least $55\%$ of the time. For this case the errors were larger, and up to $9\%$ of the comparisons had to be discarded.

In conclusion, our results support \Con{Noble}, that invariant circles with noble rotation numbers are generically, locally most robust for the generalized standard map \Eq{StdMap}.

\begin{table}
\centering
\begin{tabular}{c|c|c|c|c|c|c|c}
 $\tfrac{202}{\pi}\psi$&$\gamma$ & 1+$\sqrt{2}$ & 1+$\sqrt{3}$ & $\tfrac12(3+\sqrt{13})$& $\tfrac12(3+\sqrt{17})$ & $\tfrac12(3+\sqrt{21})$&Unsure \\[1.2ex]
\hline
$1$		&314		&116		&19		&16		&1		&4		&42\\
$11$		&292		&149		&24		&6		&0		&4		&37\\
$21$		&296		&133		&27		&11		&2		&1		&42\\
$31$		&291		&151		&19		&2		&1		&4		&44\\
$41$		&287		&152		&17		&4		&0		&4		&48\\
$51$		&283		&142		&26		&8		&1		&5		&47\\
$61$		&291		&145		&24		&5		&1		&3		&43\\
$71$		&289		&138		&25		&11		&1		&1		&47\\
$81$		&297		&145		&23		&7		&0		&3		&37\\
$91$		&302		&146		&15		&6		&0		&5		&38

\end{tabular}
\caption{\footnotesize Relative robustness of circles from six quadratic fields for \Eq{StdMap} with $g_3(x;\psi)$ and $\Omega_2(y,0.3)$ and $10$ values of $\psi$ using the seminorm method. The number in each column corresponds to the number of times each of the six rings of \Tbl{Rings} was most robust from $512$ trials. The final column is number of trials for which a winner could not be conclusively determined.}
\label{tbl:RobustDataO2G3}
\end{table}


\section{The Critical Function}\label{sec:Critical}
The destruction of the last rotational invariant circle is a significant event in the dynamics of area-preserving maps since this corresponds to the transition to global chaos and unbounded orbits. In this section we will identify which of the nobles is globally most robust for the some of the generalized standard maps studied in \Sec{CriticalCircle}. Since, as we confirmed in \Sec{Nobles}, the noble circles appear to be locally most robust, we will confine our search to these rotation numbers.

The last invariant circle is the global maximum of the \emph{critical function} $\eps_{cr}(\omega)$. The conjecture that the golden circle is  most robust for Chirikov's map has been supported by many computations using periodic orbits \cite{Schmidt82, Buric90, Murray91}. 
It is also supported using the seminorm technique, as can be seen in \Fig{G1CritFun}(a). Here we computed $\eps_{cr}(\omega)$ for $256$ noble frequencies in $(0,\tfrac12)$ using level $8$ of the Farey tree as in \Sec{Nobles}. The resulting picture is indistinguishable from that obtained using the residue method and has maximum value $\eps_{cr}(2-\gamma) = \eps_{cr}(\gamma)$ of \Eq{GoldenEpsCR}.\footnote
{Symmetries of Chirikov's map, recall \App{Symmetries}, imply that $\eps_{cr}(\omega) = \eps_{cr}(n \pm \omega)$ for $n \in \bZ$, so it is sufficient to consider $\omega \in [0,\tfrac12]$.}

\InsertFigTwo{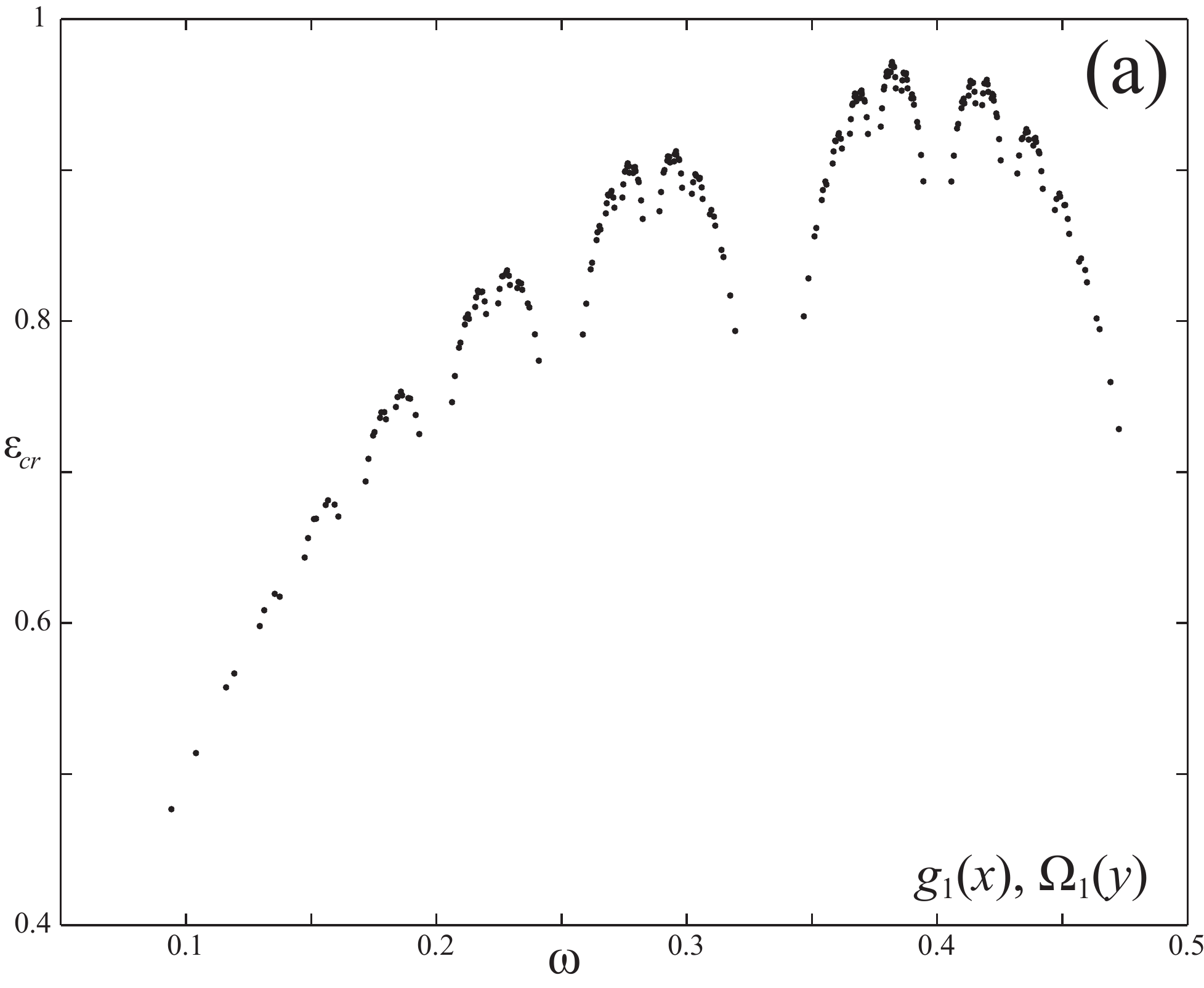}{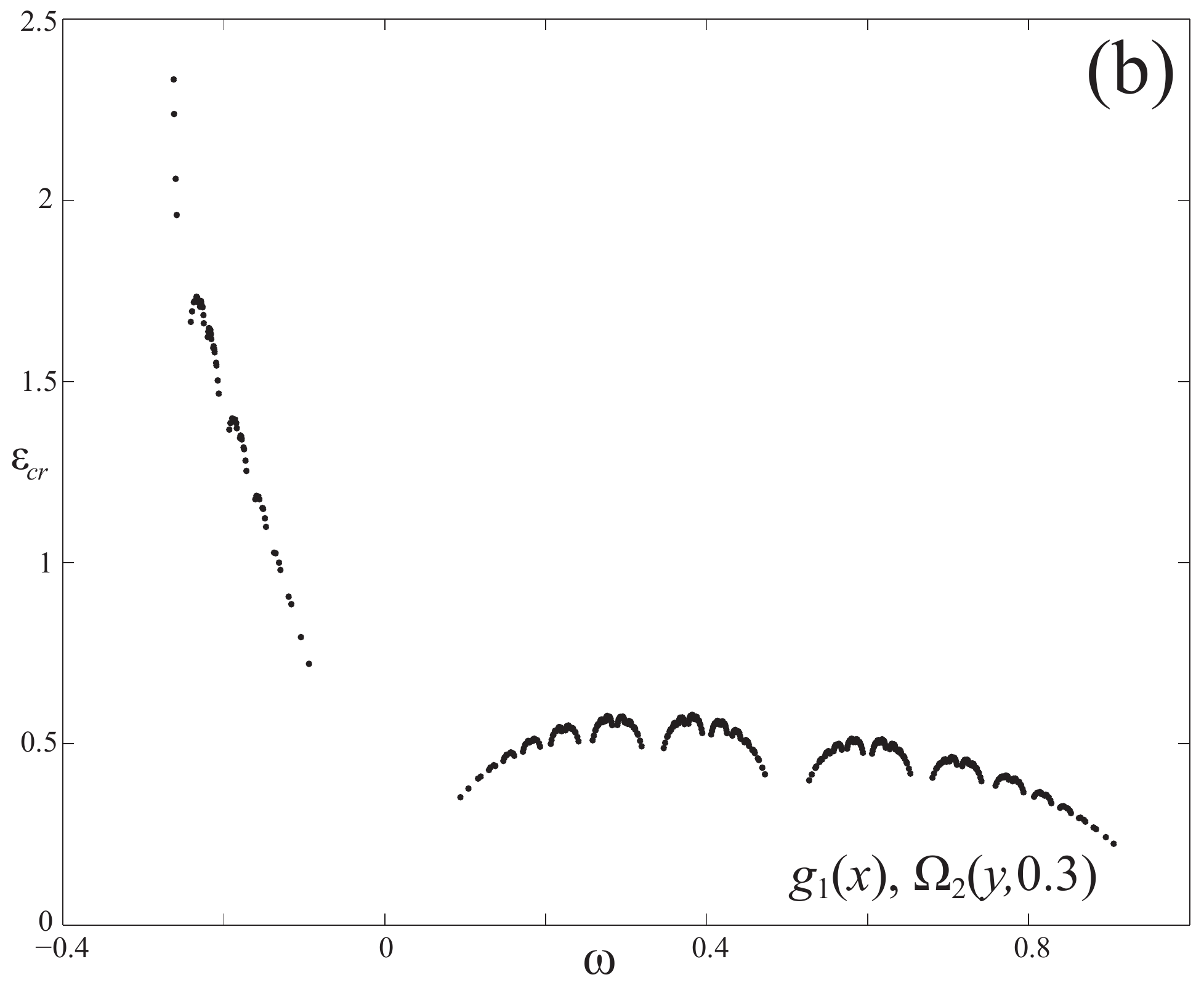}{Critical functions for (a) Chirikov's standard map and (b) the standard nontwist map for $\delta = 0.3$, using the seminorm method.}{G1CritFun}{3.2in}

Determining the globally most robust circle for the standard nontwist map, with $\Omega_2$ and $g_1$, is complicated because of the lack of periodicity in frequency space. Figure \ref{fig:G1CritFun}(b) shows the critical function for $-\delta < \omega < 1$ for $512$ nobles generated from level $9$ of the Farey tree with root $[\frac01,\frac11]$ and $112$ negative nobles greater than $-\delta$. When $\omega >0$, this critical function has a similar structure to Chirikov's map; however---as has been often observed \cite{Morrison00, Szezech09}---the circles become increasingly more robust as $\omega \to -\delta$. Indeed we found the most robust circle to be that with the smallest $\omega$. Unfortunately, the extrapolation errors for the estimate of $\eps_{cr}$ tend to become larger in this limit and we discarded the estimates for $44$ circles for which the error was larger than $10^{-4}$. A similar picture of the critical function was obtained for the nontwist map with the force $g_3$: the most robust circle was always found to be the one with rotation number closest to $-\delta$. 

The critical function is considerably more complex for the multiharmonic twist map, with $\Omega_1$ and $g_2$. Figure~\ref{fig:O1G2Crit}(a) shows that the graph of $\eps_{cr}(\omega)$ resembles that for Chirikov's map when $\psi \in [\tfrac{\pi}{2},\pi]$; however, when $\psi \in [0,\tfrac{\pi}{2}]$, $\eps_{cr}(\omega)$ is much less regular, see \Fig{O1G2Crit}(b). Recall that this region of parameters corresponds to the Cantor set of cusps in the  critical set of the golden circle, e.g., \Fig{CritCurves}(a). We also found that $\eps_{cr}(\omega)$ is especially erratic when two gaps collide, as we observed near $\psi = 0.47\pi$ in \Fig{O1G24Frame}. 

\InsertFigTwo{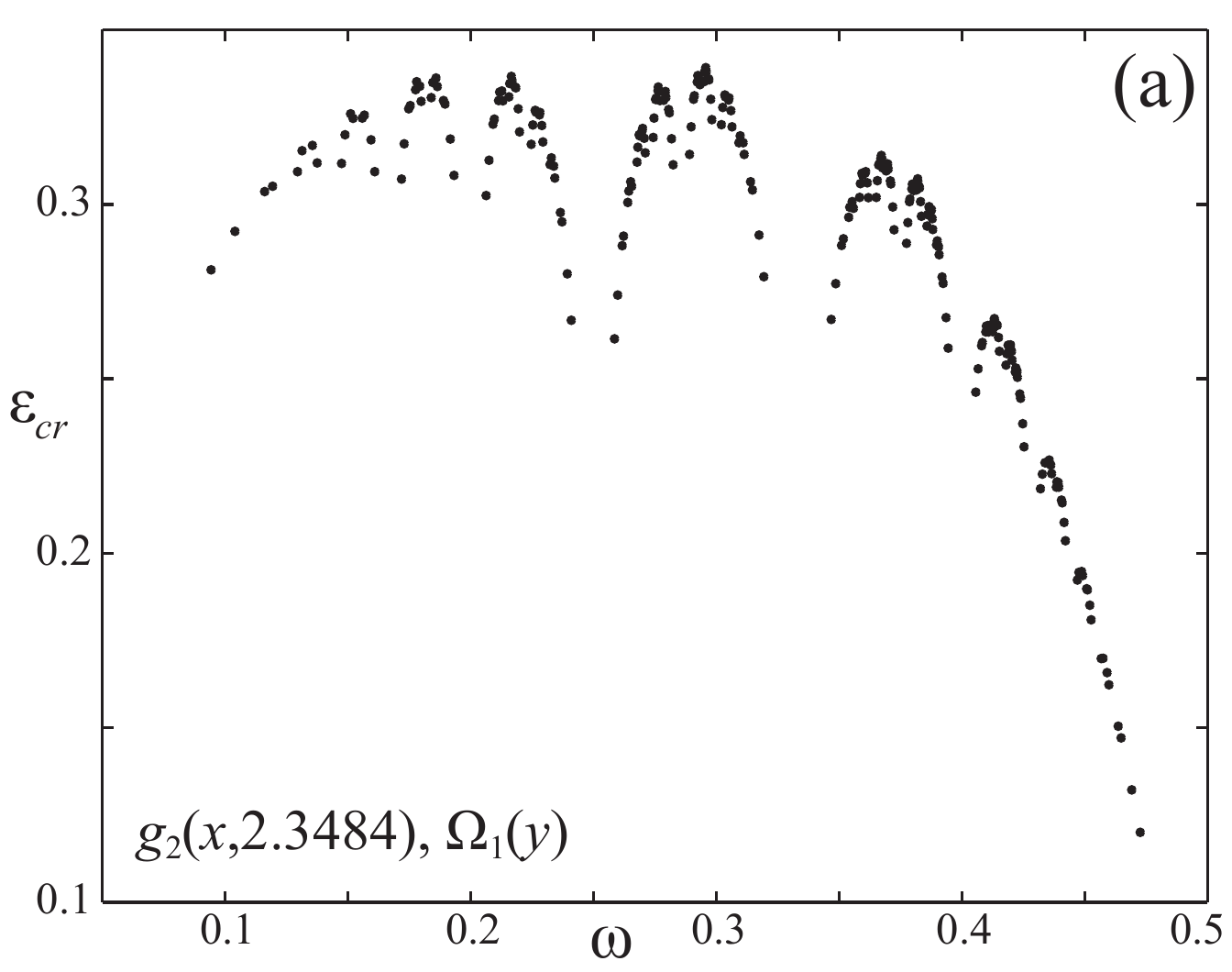}{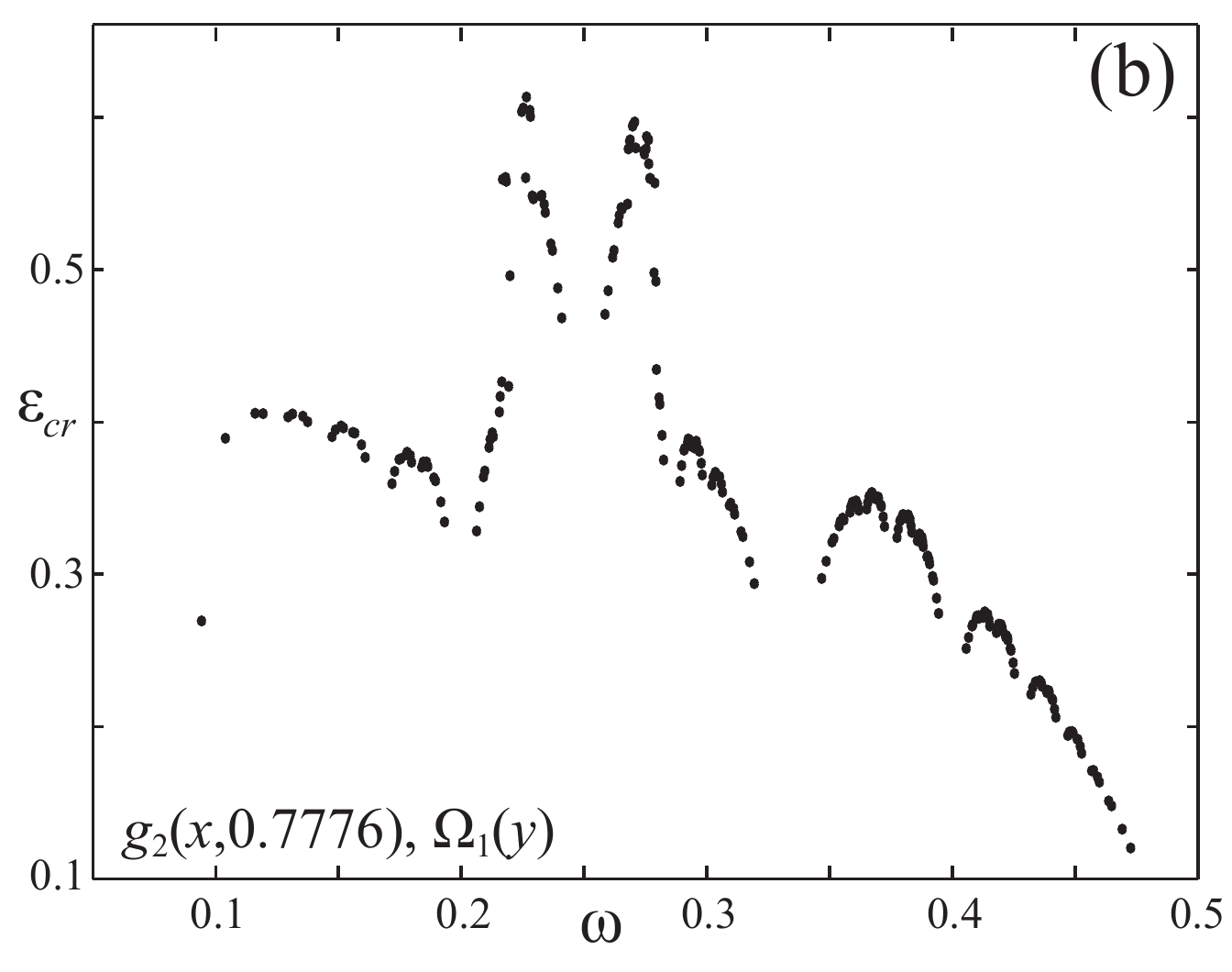}{Critical function for the generalized standard map with  $\Omega_1$ and (a) $g_2(x; 2.3484)$, or (b) $g_2(x;0.7776)$ for $256$ noble rotation numbers.}{O1G2Crit}{3.2in}

\section{Most Robust Circle}\label{sec:GlobalRobust}

To provide evidence for \Con{Piecewise}, we studied the rotation number of the most robust circle
\[
	\omega_{max} = \arg\max \{ \eps_{cr}(\omega)\}
\]
for two parameter families of generalized standard maps. In particular, we studied the dependence of $\omega_{max}$ on the parameter $\psi$ in our various force models. To do this, we first obtained a rough estimate for the interval of rotation numbers that contains the most robust circles over a range of $\psi$. Then, for each value of $\psi$, we select the most robust circle from the  $256$ nobles on level $8$ of the Farey tree whose root corresponds to this interval. 

For example, for the frequency map $\Omega_1$, periodicity  allows us to restrict $\omega$ to the interval $[0,\tfrac12]$. The dependence of $\omega_{max}$ on the parameter $\psi$ for the force $g_3$ is is shown in \Fig{01G3MostRobust}(a). Surprisingly, of the $256$ nobles examined, only five are most robust over most of the range of $\psi$; the continued fraction expansions of their rotation numbers are shown in the figure. The critical $\eps$ for each of these circles varies smoothly with $\psi$, as shown in \Fig{01G3MostRobust}(b).

\InsertFigTwo{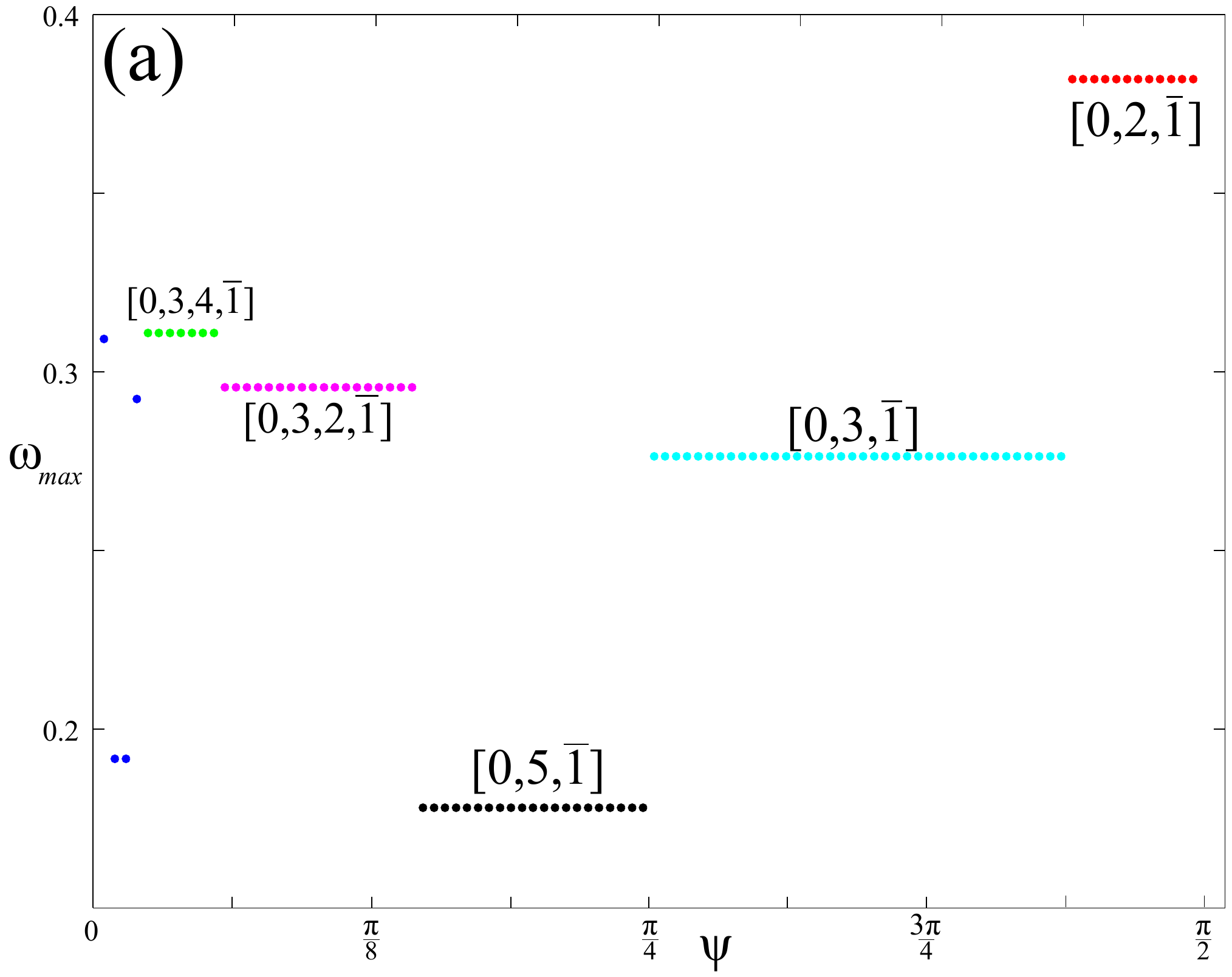}{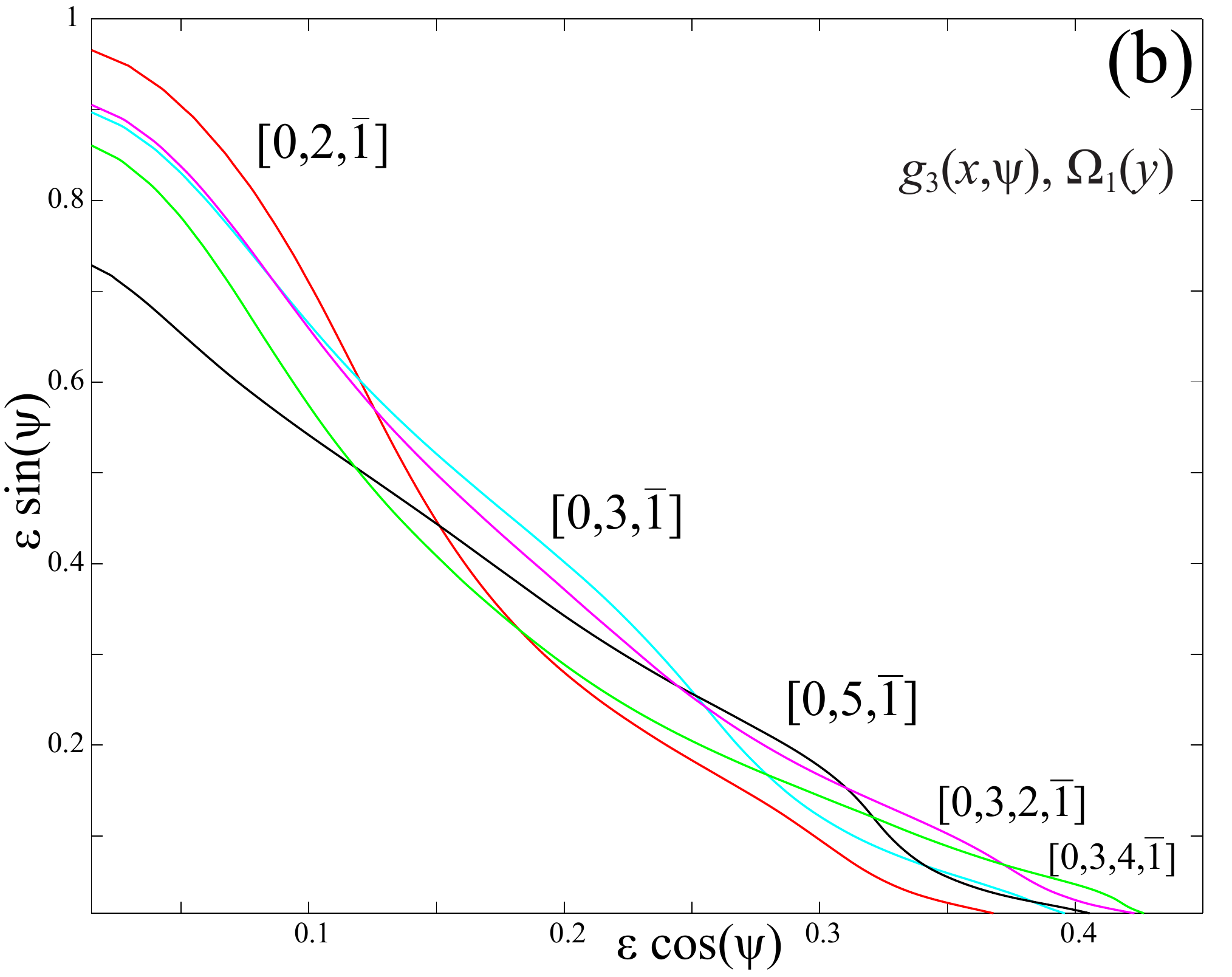}{Globally most robust circles for $\Omega_1$ and $g_3(x;\psi)$ for $100$ values of $\psi \in [0,\tfrac{\pi}{2}]$. Labels show the continued fraction expansions of the rotation numbers for five most robust circles. (a) The rotation number of the most robust circle. (b) The critical sets, $\{(\cos \psi,\sin \psi)\eps_{cr}(\omega)\}$ for the five most robust circles of (a).} {01G3MostRobust}{3in}

The values of $\omega_{max}(\psi)$ shown in \Fig{01G3MostRobust}(a) follows a predictable pattern. As noted in \Sec{CriticalCircle}, this map reduces to Chirikov's map at $\psi=\frac{\pi}{2}$, and to the same map on half the spatial scale at $\psi = 0$. Thus the circle with rotation number $2-\gamma$, whose continued fraction expansion is $[0,2,\overline{1}]$, is most robust at $\frac{\pi}{2}$. When $\psi=0$ the critical function has equal global maxima at the ``half-noble" rotation numbers $\tfrac12(n \pm \gamma)$. For the range $[0,\tfrac12]$ there are maxima at
\bsplit{HalfNobles}
	\tfrac12(\gamma-1) &= [0,3,\overline{4}] ,\\
	\tfrac12(2-\gamma) &= [0,5,\overline{4}]. 
\esplit
For moderate $\psi$, the observed values of $\omega_{max}$ move logically from $2-\gamma$  towards one of the half-nobles, in the sense that their continued fraction expansions become increasingly similar to those of \Eq{HalfNobles}. This behavior persists for small $\psi$, as shown in the vicinity of $\tfrac12(\gamma-1)$ in \Fig{O1G3SmallPsi}. The evolution is not monotonic---indeed, the rotation number of the most robust circle appears to oscillate about $\tfrac12(\gamma-1)$.  For $\psi < 10^{-3}$ the rotation number of the most robust circle in the figure is $[0,3,4,4,2,\overline{1}]$, the noble closest to $\tfrac12(\gamma-1)$ amongst the $256$ in the sample.  

\InsertFigTwo{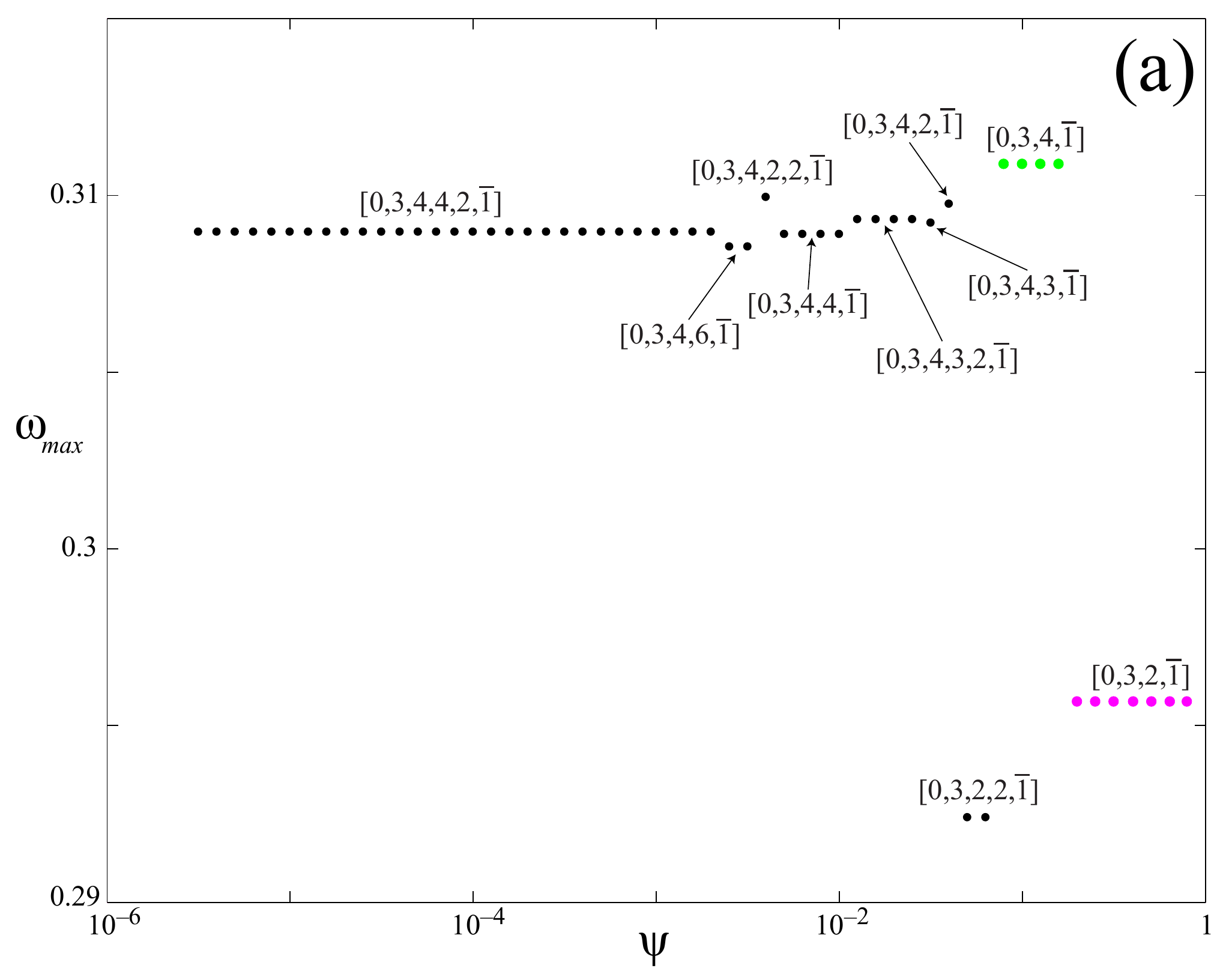}{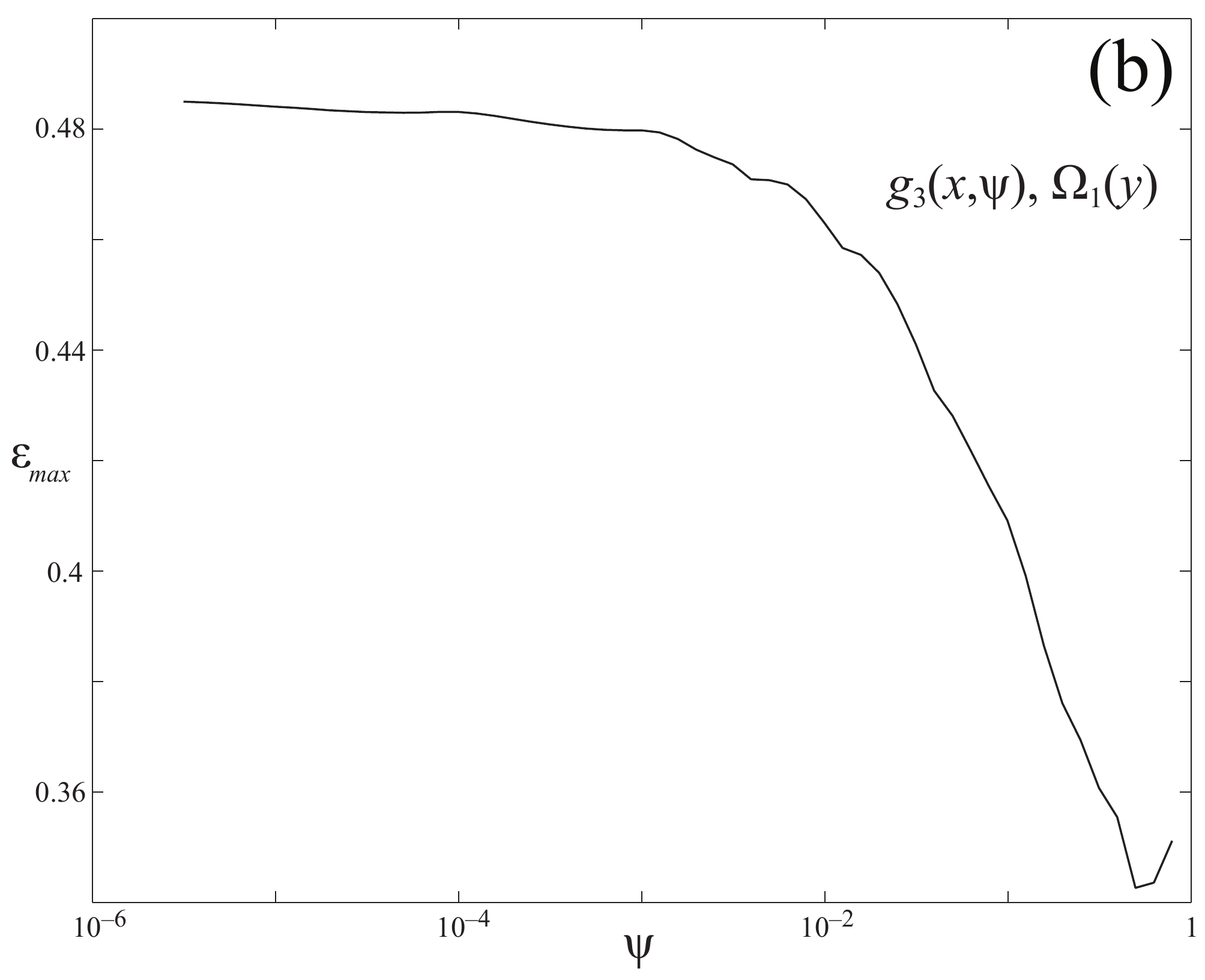}{Most robust circles for $\Omega_1$ and $g_3(x;\psi)$ for $50$ values of $\psi \in [10^{-6},1]$ sampled from $256$ circles with noble rotation numbers in the interval $[\tfrac27,\tfrac13]$. (a) The rotation number of the most robust circle labeled by their continued fraction expansions. (b) The critical curve, $\eps_{cr}(\omega_{max}(\psi))$ for the most robust circles of (a).}{O1G3SmallPsi}{3in}

Thus we see that the rotation number of the most robust circle seems to remain constant on intervals of the parameter $\psi$, at least for the twist map with force $g_3$. We examined several other multiharmonic twist maps in order to determine if this behavior was typical. Figure~\ref{fig:O1G4G5Crit} shows $\omega_{max}(\psi)$ for the generalized standard maps with frequency map $\Omega_1$ and forces $g_4$ and 
\beq{G5}
	g_5(x;\psi)=\frac{\sin(2\pi x)}{2\pi(1 + \psi \sin(4 \pi x))} .
\eeq
Since both maps reduce to Chirikov's map when $\psi=0$, the golden circle is most robust at this point. As $\psi$ grows, the rotation number of the most robust circle grows monotonically in steps approaching $\tfrac12$. It is interesting that the rotation numbers of the most robust circles for the two maps are identical, though the range of $\psi$ is shifted. It seems surprising that the circles with rotation numbers near $\tfrac12$ are most robust as $\psi \to 1$.  A partial explanation for this is that the Fourier expansions of both $g_4$ and $g_5$ have only odd modes, so in particular the $(1,2)$ resonance is suppressed.

\InsertFigTwo{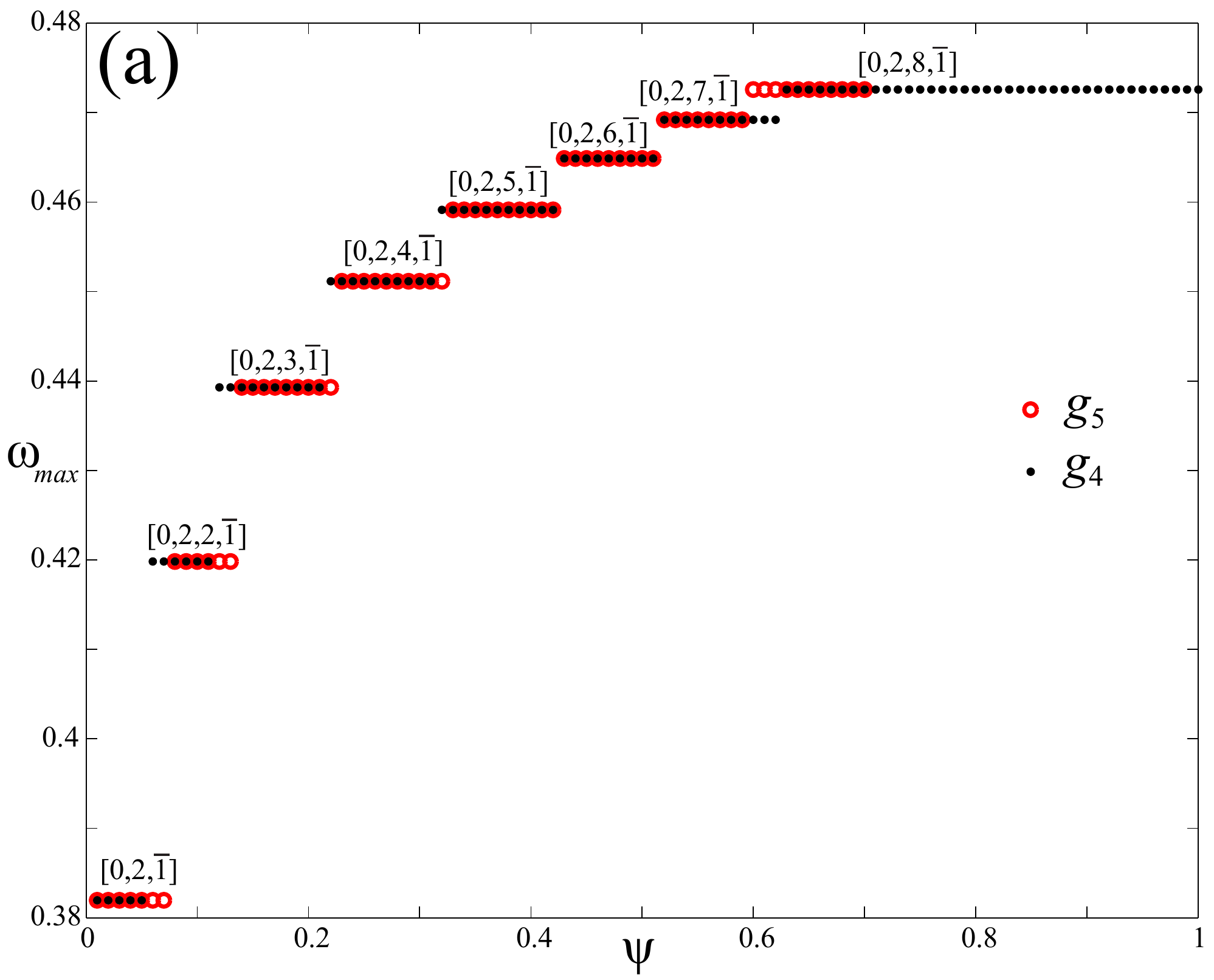}{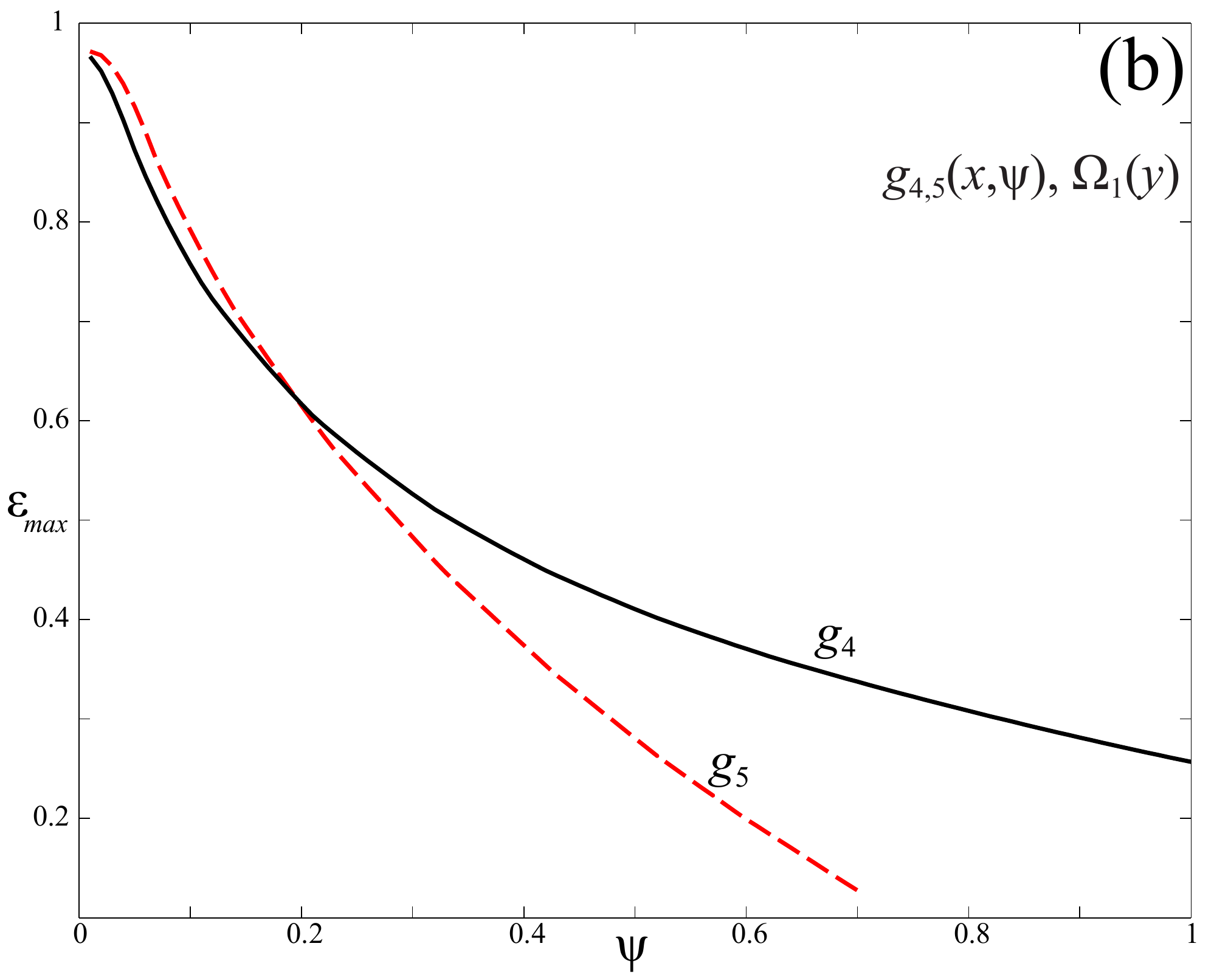}{(a) $\omega_{max}(\psi)$ and (b) $\eps_{cr}(\omega_{max})$ amongst $256$ nobles in $[0,\tfrac12]$ for the twist map with forcing $g_4(x;\psi)$ \Eq{G4} (solid black) for $\psi \in [0,1]$ and $g_5(x;\psi)$ \Eq{G5} (red open circles and dashed line) for $\psi \in [0,0.7]$.} {O1G4G5Crit}{3in}

The behavior of the most robust circles for the fully nonreversible case, with $\Omega=\Omega_3$ and $g=g_4$, is similar to the twist case, see \Fig{RannouRobust}. For this map, the globally most robust circle always had a small rotation number and the proximity of these circles to resonance led to larger errors. To focus on the robust region, we restricted to the $64$ nobles from level $6$ of the Farey tree with the root $[0, \tfrac17]$, and to improve the accuracy, we used up to $2^{15}$ Fourier modes and smaller step sizes in $\eps$, beginning with $\Delta\eps=.005$, decreasing to $\Delta\eps=.0001$ when that step size failed. The results indicate that $\omega_{max}$ is again constant over intervals of $\psi$. Also, just as for the previous cases, $\eps_{cr}(\omega_{max})$ is a monotone decreasing function of $\psi$.

\InsertFigTwo{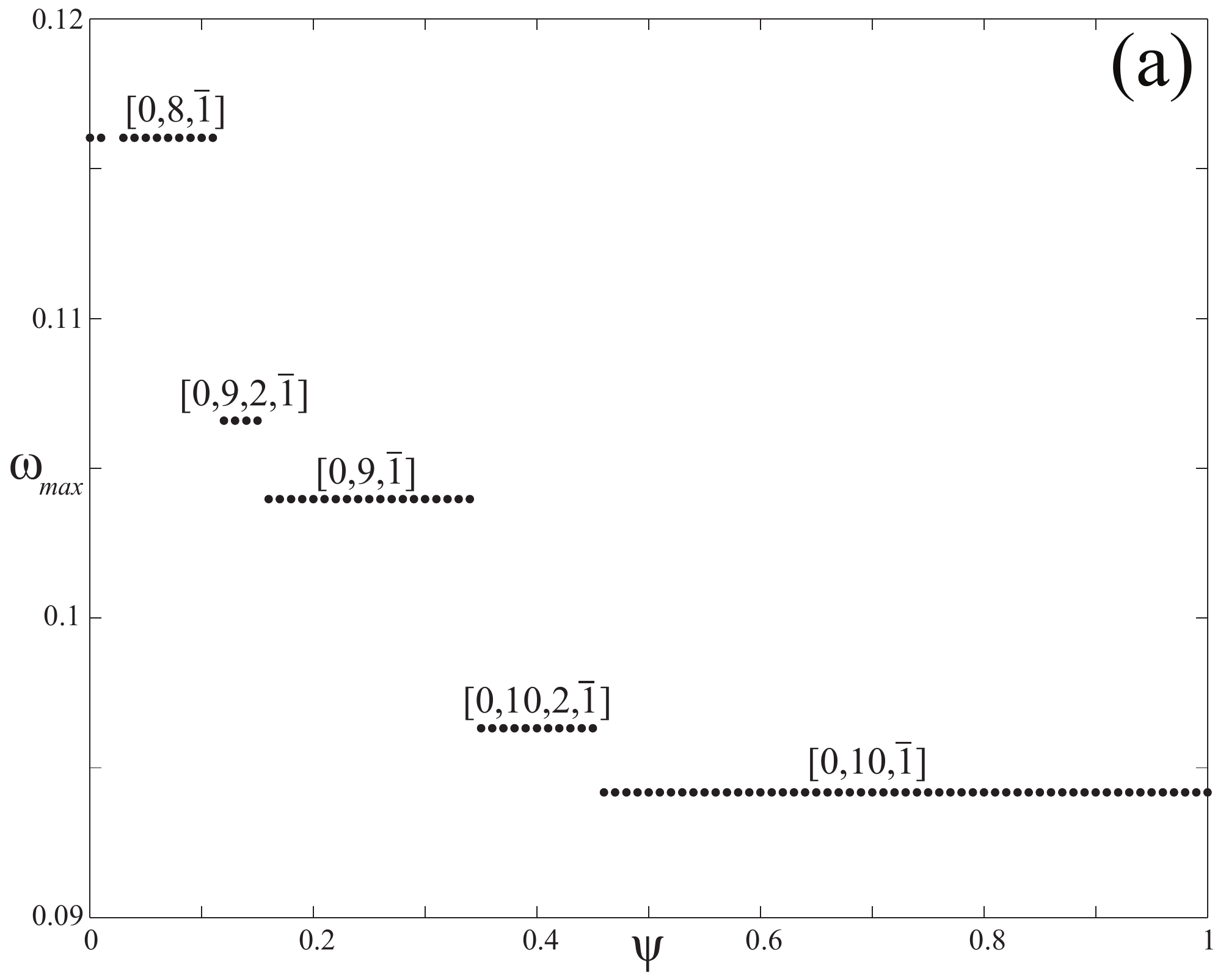}{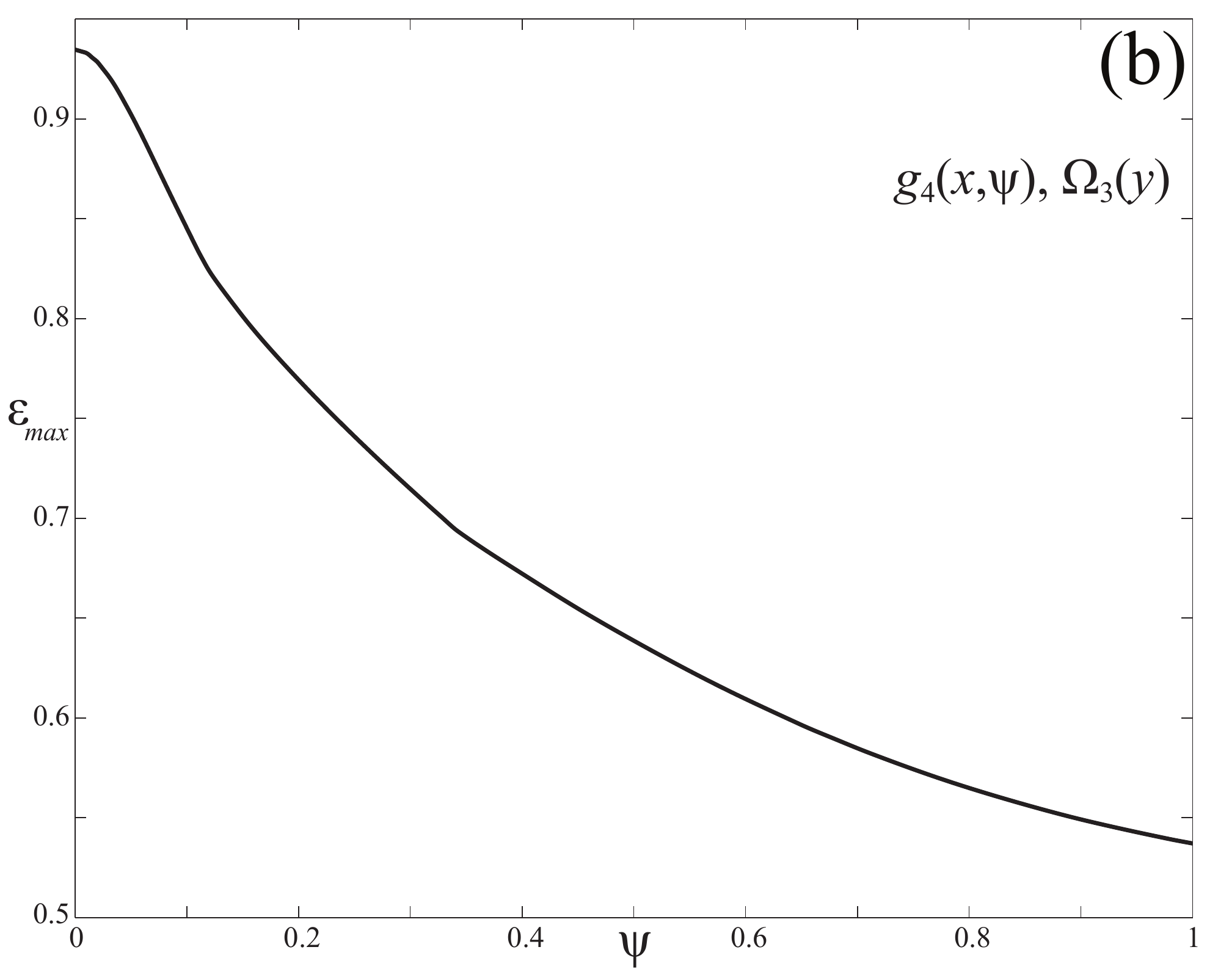}{(a) $\omega_{max}(\psi)$ and (b) $\eps_{cr}(\omega_{max})$  for the generalized standard map with with $\Omega=\Omega_3$, $g=g_4$, estimated using $64$ nobles in $[0,\tfrac{1}{7}]$ for each of $100$ values of $\psi \in [0,1]$.} {RannouRobust}{3in}

In each of the maps that we examined so far, $\eps_{cr}(\omega,\psi)$ is a smooth function of $\psi$ for each fixed noble rotation number. The rotation number $\omega_{max}(\psi)$ appears to be piecewise constant, supporting \Con{Piecewise}. However, for the doubly-reversible, two-harmonic map ($\Omega_1$ and $g_2$), $\eps_{cr}(\gamma,\psi)$ was seen to be a highly irregular function when $\psi \in [0,\tfrac{\pi}{2}]$ (recall \Fig{CritCurves}). A similar irregularity in $\eps_{cr}$ also occurs for other rotation numbers, as suggested by the graph of the critical function \Fig{O1G2Crit}(b). This irregularity also generates erratic behavior in $\omega_{max}$ for the same range of $\psi$, see \Fig{O1G2Best}. Interestingly, when $\psi \in [\frac{\pi}{2},\pi]$, $\omega_{max}$ is piecewise constant and $\eps_{cr}(\omega_{max})$ is smooth, as for the previous examples. Consequently, the irregular behavior of the most robust circles appears to reflect the same complex set of symmetry breaking bifurcations that are responsible for the Cantor set of cusps in the critical set for the golden circle.

Thus we suggest that the conjectured piecewise constant behavior of $\omega_{max}$ will occur generically in families maps that do not have special symmetries that give rise to irregular dependence of $\eps_{cr}$ on a parameter.

\InsertFigTwo{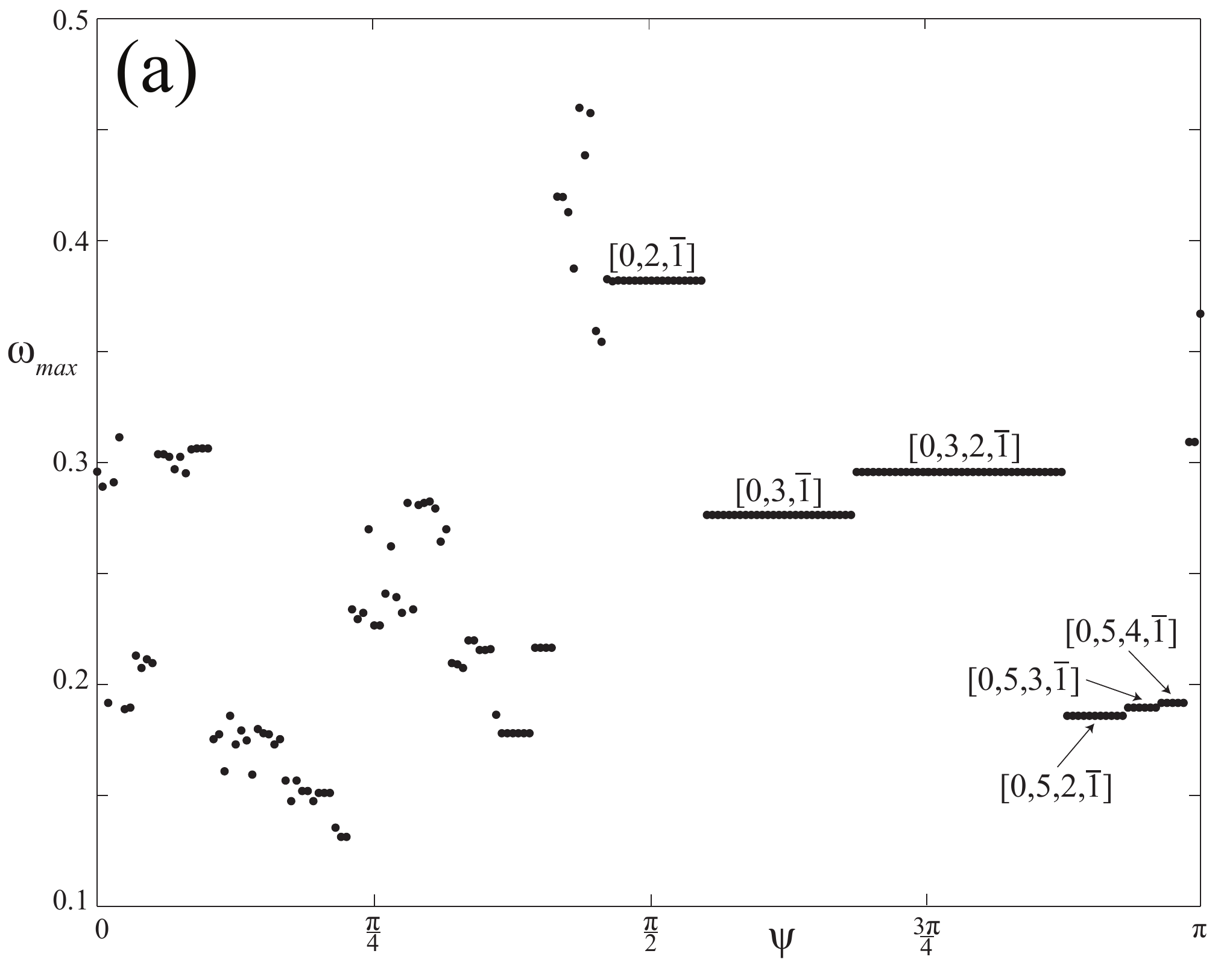}{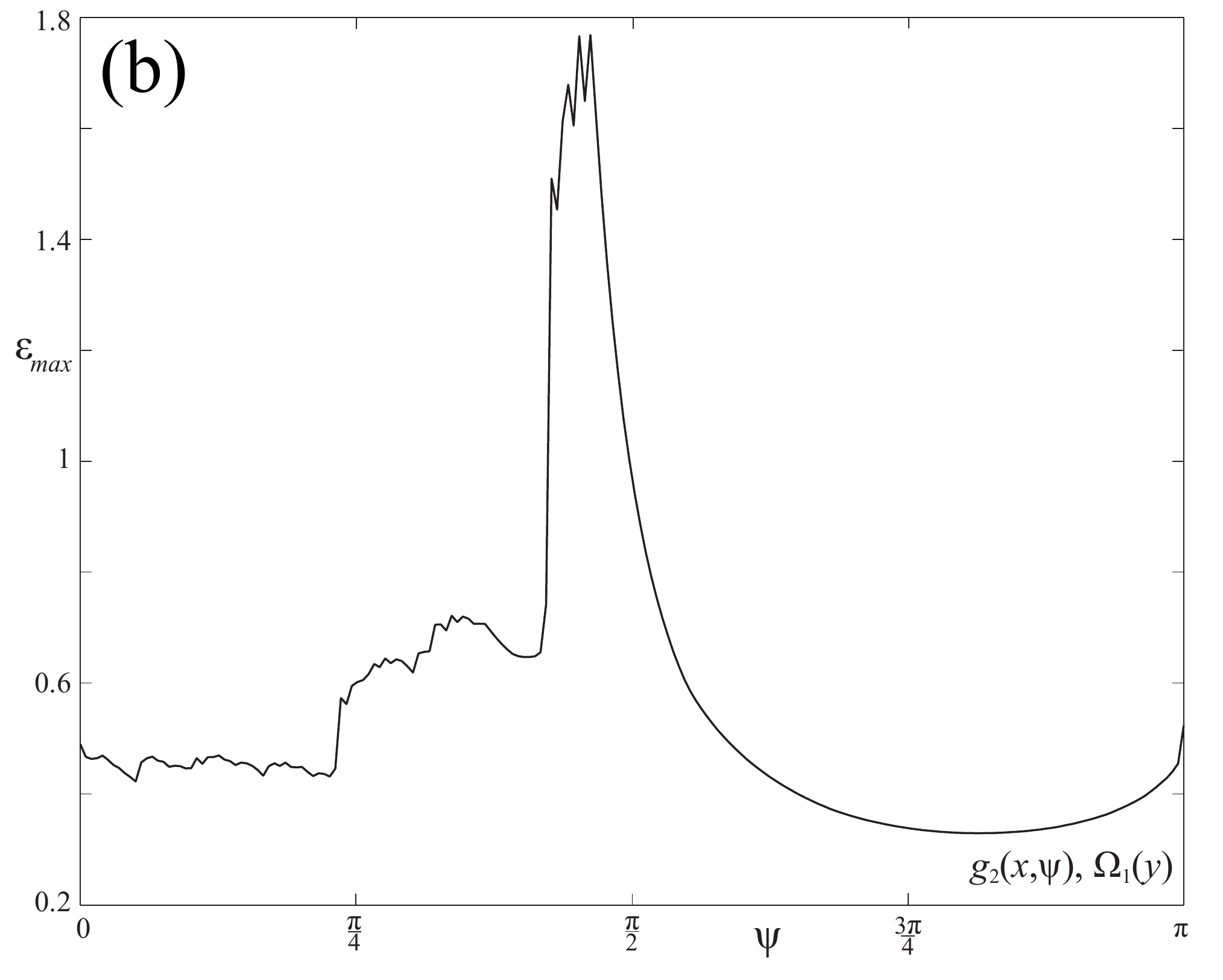}{(a) $\omega_{max}(\psi)$ and (b) $\eps_{cr}(\omega_{max})$ for the generalized standard map with $\Omega_1$ and $g_2$ as a function of $\psi$, estimated using $256$ nobles in $[0,\tfrac12]$ for each of $200$ values of $\psi \in [0,\pi]$.}{O1G2Best}{3in}

\section{Conclusion}
The Fourier method, outlined in \Sec{FindEpsCr}, provides an accurate and efficient approach to find the embedding, $k(\theta)$, for invariant circles of area-preserving maps with a given rotation number. Using the blow-up of a Sobolev seminorm, we obtained reasonably accurate estimates for the parameters at which a Diophantine invariant circle is first destroyed; however, the errors in this computation are larger than that for Greene's residue criterion. Nevertheless, the Fourier method is easily generalizable to maps without the reversing symmetry that makes the computation of periodic orbits for Greene's method relatively easy.

In this paper we have provided some support for the conjectures discussed in \Sec{Intro}. For example, in all of the cases we examined, the near-critical conjugacies show the incipient formation of gaps that signal the mechanism for destruction of the circle and the creation of a remnant cantorus. This provides some evidence that---even when Aubry-Mather theory does not apply (i.e., the twist condition is violated)---cantori form in area-preserving mappings. It would be nice to generalize Aubry's anti-integrable limit to such maps to confirm that they have cantori near $\eps = \infty$.

In \Sec{Nobles} we gave numerical evidence that the relative robustness of a circle depends monotonically on the discriminant of the algebraic ring of its rotation number. Thus the noble invariant circles appear to be locally most robust as in \Con{Noble}.  Again, this conjecture has much support for reversible twist maps, but the seminorm technique has allowed us to extend it to nonreversible and nontwist maps.

Our investigation was limited to rotation numbers in quadratic algebraic rings. We hypothesize that this would extend to more general irrationals, however, to our knowledge, this has never been studied. The extension of this relationship to tori in higher-dimensional symplectic and volume-preserving maps is also an open question. 

Finally, support for \Con{Piecewise} was given in \Sec{GlobalRobust}. There we observed that---except for the doubly-reversible, two-harmonic map---the rotation number of the globally most robust circle seems to be piecewise constant when a second parameter of the map is varied. The exception to this seems to be related to extra symmetry which gives rise to a highly irregular critical function.

\appendix
\newpage
\appendixpage

\section{Quadratic Irrationals and the Farey Tree}\label{app:Farey}

The rotation number $\omega$ of a persistent invariant circle in the standard version of KAM theory satisfies a Diophantine condition, i.e., there is a $c>0$ and an $s>1$ such that
\beq{Diophantine}
			q^s|q\omega-p| > c ,
\eeq
for all $p \in \bZ$ and $q \in \bN$. Numbers that are Diophantine for $s=1$ are called \emph{badly approximable}, and the largest asymptotic value of $c$ for $s=1$ is the \emph{Diophantine constant}:
\beq{DiophantineConsant}
	c(\omega) = \liminf_{q \to \infty} q |q\omega-p| .
\eeq

A number $\omega \in \bC$ is a quadratic algebraic integer if it is the root of a monic quadratic equation with integer coefficients \cite{Hardy79}. The integer $\omega$ generates both an algebraic field, $\bQ(\omega)=\{a + b \omega : a,b \in \bQ\}$, and a corresponding ring of integers, $ \bZ(\omega)=\{a + b \omega: a,b \in \bZ\}$. A basis for this ring is referred to as an \emph{integral basis}.
Every real, quadratic integer is Diophantine, and its constant $c$ can be easily determined. 

\begin{thm}[\cite{Cassels57}]\label{thm:Cassels}
If $\omega$ is a real, irrational root of the monic polynomial $x^2 + b x + c $, with $b,c \in \bZ$, then 
\beq{Dconstant}
	c(\omega) = \frac{1}{\sqrt{D}} ,
\eeq
where $D=b^2-4c$ is the discriminant.
\end{thm}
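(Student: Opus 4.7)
The plan is to exploit the Galois structure of $\bQ(\omega)$ via the conjugate root $\omega'$ of $x^2 + bx + c$. By Vieta's formulas, $\omega + \omega' = -b$, $\omega\omega' = c$, and $(\omega-\omega')^2 = b^2 - 4c = D$, so $|\omega-\omega'| = \sqrt{D}$.

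For any integers $p,q$ with $q>0$, the algebraic norm
\[
N(q\omega - p) \;:=\; (q\omega - p)(q\omega' - p) \;=\; p^2 + b\,pq + c\,q^2
\]
is a rational integer, and it is nonzero because $\omega\notin\bQ$. Thus $|N(q\omega-p)|\ge 1$. Combined with the elementary bound $|q\omega'-p|\le |q\omega-p| + q|\omega-\omega'| = |q\omega-p| + q\sqrt{D}$, this gives
\[
1 \;\le\; |q\omega-p|\bigl(|q\omega-p| + q\sqrt{D}\bigr),
\]
which rearranges to $q|q\omega-p| \ge (1-|q\omega-p|^2)/\sqrt{D}$. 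Taking the liminf as $q\to\infty$ along any sequence of best approximations (for which $|q\omega-p|\to 0$) yields the lower bound $c(\omega) \ge 1/\sqrt{D}$.

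For the matching upper bound, I would appeal to the continued fraction expansion. By Lagrange's theorem, since $\omega$ is a quadratic irrational, its expansion $\omega=[a_0;a_1,a_2,\ldots]$ is eventually periodic. The standard convergent identity
\[
q_n|q_n\omega - p_n| \;=\; \frac{1}{\alpha_{n+1} + q_{n-1}/q_n}
\]
expresses the quantity of interest in terms of the complete quotient $\alpha_{n+1}=[a_{n+1};a_{n+2},\ldots]$ and the reverse quotient $\beta_n:=q_{n-1}/q_n=[0;a_n,\ldots,a_1]$. Both tend to periodic orbits, so the liminf is attained along the period. After reducing to the purely periodic case (the Diophantine constant is invariant under the shift $\omega\mapsto 1/(\omega-\lfloor\omega\rfloor)$), a direct computation identifies the limiting value of $\alpha_{n+1}+\beta_n$ as $|\omega-\omega'|=\sqrt{D}$: when $\omega$ is purely periodic, its conjugate satisfies $-1/\omega'=[0;a_n,a_{n-1},\ldots]$, so that $\alpha+\beta=\omega-\omega'$ along the cycle.

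The main obstacle is the upper bound: one must show that the infimum is \emph{attained} (or approached arbitrarily closely) along a subsequence rather than merely bounded below. An alternative route bypasses continued fractions via Dirichlet's theorem on units in real quadratic fields, equivalently Pell's equation: there exist infinitely many $(p,q)$ with $|N(q\omega-p)|=1$, and choosing these so that $p/q\to\omega$, the identity $|q\omega-p||q\omega'-p|=1$ together with $|q\omega'-p|=q\sqrt{D}+O(|q\omega-p|)$ forces $q|q\omega-p|\to 1/\sqrt{D}$, matching the lower bound and completing the proof.
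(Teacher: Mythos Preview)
The paper does not prove this theorem; it is quoted from Cassels' monograph with a citation and followed immediately by an example, so there is no ``paper's proof'' to compare against.

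Your argument is essentially sound, and the cleanest version is the one you put last. The norm bound $|(q\omega-p)(q\omega'-p)|\ge 1$ gives the lower bound $c(\omega)\ge 1/\sqrt{D}$ exactly as you wrote. For the matching upper bound, the unit/Pell route is the right one: since $\{1,\omega\}$ is a $\bZ$-basis of $\bZ[\omega]$ (here you use that the polynomial is \emph{monic}), every unit $\epsilon^{n}$ is of the form $-p_n+q_n\omega$ with $|N|=1$; taking powers in the direction $|q_n\omega-p_n|\to 0$ and using $|q_n\omega'-p_n|=q_n\sqrt{D}+o(q_n)$ forces $q_n|q_n\omega-p_n|\to 1/\sqrt{D}$.

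The continued-fraction paragraph, by contrast, hides a real step. Passing to the purely periodic tail preserves $c(\omega)$, but it need not preserve the discriminant of the \emph{monic} minimal polynomial: a complete quotient $\alpha_i$ generally satisfies $A_i\alpha_i^2+B_i\alpha_i+C_i=0$ with $|A_i|>1$, and then $\alpha_i-\alpha_i'=\sqrt{D}/|A_i|<\sqrt{D}$. Your claim that ``$\alpha+\beta=\omega-\omega'$ along the cycle'' therefore gives only $c(\omega)\le |A_i|/\sqrt{D}$ at a generic position. To close the gap you must show that \emph{some} quotient in the period has $|A_i|=1$; this is true (the principal reduced form with leading coefficient $1$ lies in the cycle of the class of $x^2+bx+c$), but it is exactly the content that your units argument supplies more directly. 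I would drop the continued-fraction sketch, or at least flag this point, and lead with the Pell argument.
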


\noindent
For example, $\gamma$ is a root of $x^2-x-1$, which has discriminant $D=5$. Therefore, $c(\gamma)=1/\sqrt{5}$.  In \Sec{Nobles}, we study robustness of invariant circles for the real quadratic fields with the smallest discriminants, and correspondingly, the largest Diophantine coefficients.

The Farey (or Stern-Brocot) tree is a recursive algorithm that can be used to generate the rationals between a given initial pair. These rationals can in turn be used to generate irrationals in a given algebraic ring. For convenience we write the rational number $\frac{m}{n}$ as the vector $(m,n)$, and will always assume that the greatest common divisor of the components is $1$.

The root, or \emph{level} zero, of the tree is a pair of rationals $(m_1,n_1)$ and $(m_2,n_2)$ that are neighbors, i.e.,
\beq{neighbors}
	m_1n_2-n_1m_2=\pm1 .
\eeq
We typically begin with the neighbors $(0,1)$ and $(1,0)$, though sometimes we use other neighbors to zoom-in to some interval.  The next level is obtained by the mediant operation $\oplus$ defined as 
\beq{MedOp}
	(m_1,n_1) \oplus (m_2,n_2)=(m_1+m_2,n_1+n_2) .
\eeq
So, level one of the tree consists of the single vector $(1,1)=(0,1)\oplus(1,0)$, the \emph{daughter} of the \emph{parent} vectors $(0,1)$ and $(1,0)$. Note that the daughter is a neighbor to each of its parents.

The $(\ell+1)^{st}$ level is constructed by applying the mediant operation to each level $\ell$ vector and its two parents; thus there are $2^{\ell-1}$ rationals on each level. A binary tree is obtained by connecting each vector at level $\ell \geq 1$ to its two daughters at level $\ell + 1$, see \Fig{FareyTree}. There is a unique, finite path, the \emph{Farey path}, from level one to every rational. This path is denoted by a sequence of $l$'s and $r$'s denoting the left and right turns, respectively. Thus the path of $(1,1)$ is empty and,  for example, $p(5,3) = rlr$. Every irrational between the root rationals is uniquely the limit of an infinite path on the tree; for example, $p(\gamma,1) = rlrlrl\ldots = (rl)^\infty$, and $p(1,\gamma) = p(\gamma^{-1},1) = (lr)^\infty$. According to a theorem of Lagrange, the Farey path for every quadratic irrational is eventually periodic and conversely every eventually periodic path converges to a quadratic irrational.

\InsertFig{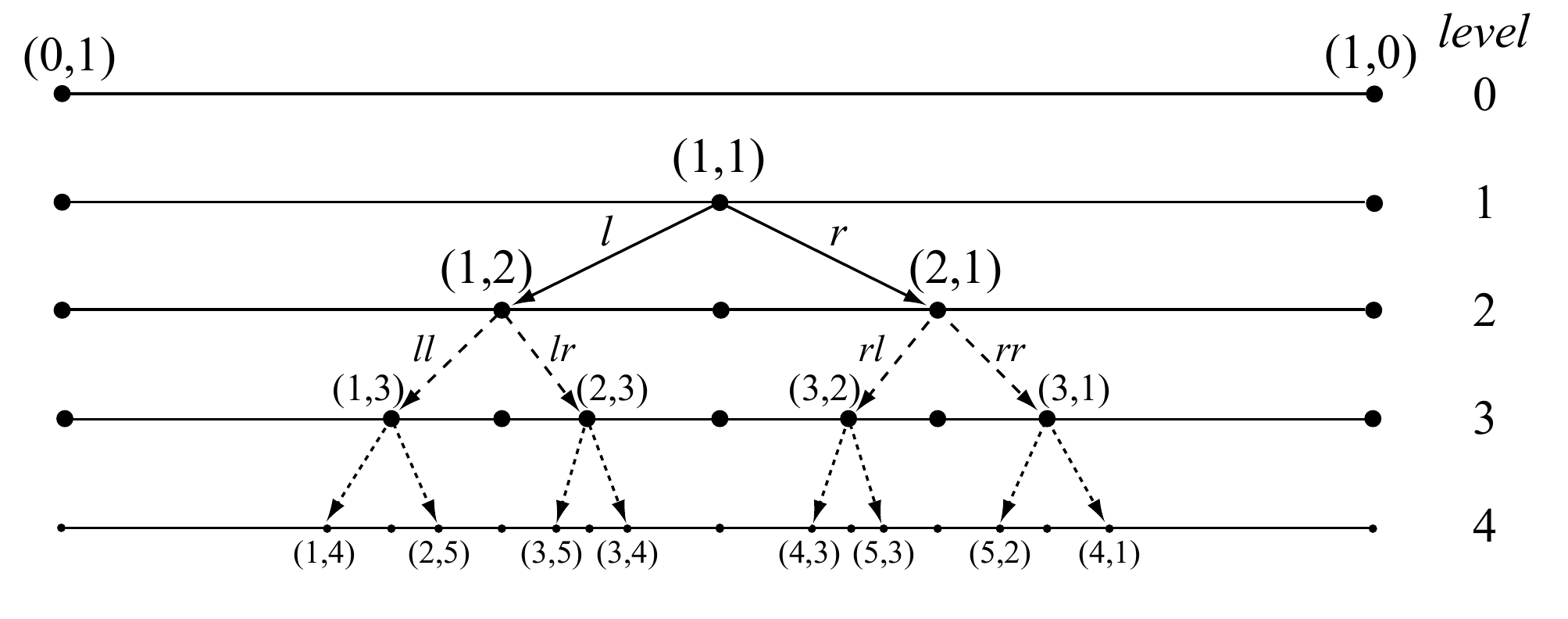}{First five levels of the Farey tree construction.}{FareyTree}{4.5in}

There is a simple relation between the continued fraction expansion of a number and its Farey path.
Suppose that a positive rational has the continued fraction
\[
	\frac{m}{n}=[a_0,a_1,\ldots, a_k] \equiv [a_0,a_1,\ldots, a_k-1,1],
\]
where w.l.o.g. $a_k>1$ for $k \ge 1$. Then
\[
	p(m,n) = \left\{ \begin{array}{ll}
					r ^{a_0}l^{a_1}r^{a_2}\ldots r^{a_k -1} , \quad k \mbox{ even}\\
					r ^{a_0}l^{a_1}r^{a_2}\ldots l^{a_k -1}, \quad k \mbox{ odd}
			\end{array} \right. .
\]
The continued fractions for the two daughters of $\frac{m}{n}$ are obtained by incrementing the last entry by one in the two equivalent representations: 
$[a_0,a_1,\ldots, a_k+1]$ and $[a_0,a_1,\ldots, a_k-1,2]$.

Neighboring rational vectors can be used to generate irrational numbers in the ring $\bZ(\omega)$. To see this, note that $(1,\omega)$ is an integral basis for the ring, and every other integral basis will be of the form $(\eta_1,\eta_2)=(1,\omega)M$ for some unimodular matrix $M$. Denote the rows of $M$ by a pair of neighboring vectors $(m_1,n_1)$, $(m_2,n_2)$; then the irrational
\beq{NewElement}
	\omega' = \frac{\eta_1}{\eta_2}=\frac{m_1 + m_2 \omega}{n_1+n_2 \omega}
\eeq
is in $\bZ(\omega)$. The Farey path for $\omega'$ is essentially the concatenation of the paths of the rational and of $\omega$. More precisely, if $(m_2,n_2)$ is a daughter of $(m_1,n_1)$, and its path ends in an $l$, then $p(\omega',1) = p(m_2,n_2) r p(1,\omega)$, while if it ends in an $r$, then $p(\omega',1) = p(m_2,n_2) l p(\omega,1)$.

In \Sec{Nobles}, the $2^l+1$ rational vectors up to level $l$ on the Farey tree are used to construct $2^l$ irrationals in the ring $\bZ(\omega)$ by applying the relation \Eq{NewElement} to each pair of neighbors.

\section{Greene's Criterion}\label{app:Greene}
Greene conjectured that periodic orbits in the neighborhood of an invariant circle should be stable. Indeed, a sequence of periodic orbits should limit upon a circle only if their Lyapunov exponents converge to zero. Conversely, if the limit of a family of periodic orbits has nonzero Lyapunov exponents, then the invariant circle should no longer exist. This is known as \emph{Greene's residue criterion}. Although initially presented as a conjecture, some aspects of it have been proven \cite{MacKay92a, Falcolini92, Delshams00}.

A type-$(m,n)$ periodic orbit of $f$ is an orbit of the lift,
$
	(x_{t+1},y_{t+1})=F(x_t,y_t), 
$
such that $(x_n,y_n)=(x_0+m,y_0)$. When $f$ is area-preserving, the product of the multipliers of any periodic orbit must be 1. The stability of a periodic orbit is therefore completely determined by the trace of the Jacobian $Df$. Greene characterized this stability through a quantity he called the \emph{residue}
\[
	R=\tfrac{1}{4}(2-\tr(Df^n(x_0,y_0)) .
\]
If $0 < R < 1$, the orbit is linearly stable. 

Greene conjectured that an invariant circle with rotation number $\omega$ will exist if and only if the residues $R_j$ of a sequence of type $(m_j,n_j)$ periodic orbits that approximate the circle,
\[
	\lim_{j \to \infty} \frac{m_j}{n_j}=\omega,
\]
are bounded,
\[
	\limsup_{j \to \infty} |R_j| < \infty .
\]

Rational sequences that approximate $\omega$ can be obtained by truncating either the Farey tree or continued fraction expansion of $\omega$, recall \App{Farey}. To implement the criterion, we compute each periodic orbit using a root-finding method and continuation in $\eps$. This computation is simplified when the map is reversible, see \App{Symmetries}, since the computation can be reduced to a one-dimensional secant method.
A threshold residue, $R^{th}$, is selected, and the parameter value $\eps_{j}^{th}$, such that $|R_j|=R^{th}$ is found, also using the secant method.  The choice of $R^{th}$ is arbitrary; however Greene found that for the golden mean,  convergence was fastest with the choice $R^{th} = 0.25$ \cite{Greene79}. This is the value we use. 

The approximation of $\eps_{cr}$ can be improved by extrapolation. The renormalization theory implies that for noble circles of twist maps,
\[
	\eps_j^{th} \sim \eps_{cr}-A\delta^{-j} ,
\]
for some constants $A$ and $\delta$ \cite{MacKay93}. This leads to a three-point extrapolation scheme: given three sequential approximants, we can compute the parameters $\delta$, $A$, and $\eps_{cr}$.
Extrapolation significantly improves the accuracy of the estimation of the critical parameter. For orbits up to period $30000$, the values of $\eps^{th}_j$ generally converged to $3-4$ digits, but after extrapolation, the last several estimates of $\eps_{cr}$ agreed to $4-8$ digits. 

\section{Symmetries and Reversors}\label{app:Symmetries}
Recall that a homeomorphism $S$ is a symmetry of $f$ if $f \circ S = S \circ f$, and is a reversor if $f \circ S = S \circ f^{-1}$. The collection of reversors and symmetries of $f$ is its reversing-symmetry group $\cG$ \cite{Lamb98}. 

An orbit is \emph{symmetric} under $S$ if it is invariant under $S$. One special class of reversors are orientation reversing involutions, and systems possessing such reversors have been called ``R-reversible" \cite{Lamb98}. Such a reversor has a one-dimensional fixed set $\fix{S}=\{z : S(z)=z\}$, and any symmetric periodic orbit has a point on $\fix{S}$ \cite{MacKay93}. Symmetric orbits can then be computed by a one-dimensional root-finding method along $\fix{S}$ \cite{Meiss08}. It is for this reason that most of the studies of the breakup of invariant circles have been done for R-reversible maps and symmetric invariant circles.

In this appendix, we recall some of the symmetries and reversors of the generalized standard map \Eq{StdMap}. 
Of course, any map $f$ is a symmetry of itself. In addition, 
when $f$ is a map on the cylinder, it commutes with integral rotations. Specifically, if 
\[
	T_{m,n}(x,y)=(x+m,y+n), \quad m,n \in \bZ ,
\]
then $T_{1,0}$ is a symmetry of $f$, or more properly of its lift, $F$, to the universal cover obtained by letting $x\in\bR$. Since the composition of two symmetries is also a symmetry, $F^k \circ T_{m,0}$ is also a symmetry of $F$.

Certain special cases of \Eq{StdMap} have additional symmetries and reversors.
We say that a function $g$ is odd about $\alpha$ if
\[
	g(\alpha+x) = - g(\alpha - x),\quad\quad \mbox{ (odd about } \alpha \mbox{)}.
\]
If the force in \Eq{StdMap} is odd, then this map is reversed by the involution
\beq{S1Reversor}
	S_1(x,y) = (2\alpha-x, y+\eps g(x)) .
\eeq
Note that when $g$ is periodic and odd about $\alpha$, then it is also odd about $\tfrac12 +\alpha$; the corresponding reversor is $T_{1,0} \circ S_1$. Every rotational invariant circle intersects $\fix{S_1} =\{(\alpha,y): y \in \bR\}$. The consequences for the conjugacy \Eq{conjugacy} are discussed in \App{Symmetry}.
Cases studied in this paper with this reversor are shown in \Tbl{SymmetryGroups}.

Similarly, if the frequency map $\Omega$ is odd about a point $\beta$, then \Eq{StdMap} is reversible under the involution
\beq{S2Reversor}
	S_2(x,y) = (x, 2\beta -y- \eps g(x)) .
\eeq
This reversor maps an invariant circle with rotation number $\omega$ onto one with rotation number $-\omega$. A consequence is that these circles have the same critical parameter set: $\eps_{cr}(-\omega) = \eps_{cr}(\omega)$.

Chirikov's standard map, \Eq{StdMap} using \Eq{Twist} and \Eq{Chirikov}, has both sets of reversors since $\Omega$ and $g$ are both odd about zero ($\alpha = \beta = 0$): it is ``doubly reversible." Another way of viewing this is to note that this map has the inversion $I_1(x,y)= (2\alpha-x,2\beta-y)$ as a symmetry. This is not independent of the reversors since $I = S_1 \circ S_2$.

An additional symmetry of maps, like Chirikov's map, with $\Omega(y) = y$ is a translation symmetry in the momentum direction: $f$ commutes with the vertical translation $T_{0,1}$. A consequence is that $\eps_{cr}(\omega) = \eps_{cr}(\omega+n)$ for integer $n$. Combining this with the inversion implies that for Chirikov's map, $\eps_{cr}(\omega) = \eps_{cr}(1-\omega)$; thus, one can limit the rotation numbers studied to $\omega \in [0,\frac12]$. The known symmetries of Chirikov's map are summarized in the first line of \Tbl{SymmetryGroups}.

The map \Eq{StdMap} also has a reversor when \emph{both} $g$ and $\Omega$ are even, e.g.,
\[
	g(\alpha+x) = g(\alpha -x), \mbox{ (even about } \alpha \mbox{)}.
\]
When $g$ is even about $\alpha$ \textbf{and} $\Omega$ is even about $\beta$, \Eq{StdMap}
has the reversor
\beq{S3Reversor}
	S_3(x,y)= (2\alpha-x,2\beta-y- \eps g(x)).
\eeq
Note that this reversor is orientation preserving, and its fixed set is a point. 
This reversor maps an invariant circle with rotation number $\omega$ and positive twist onto a circle with the same rotation number but negative twist.
Examples that we study, recall \Tbl{SymmetryGroups}, include the nontwist maps with $\Omega_2$, which is even about $0$, and forces $g_1$ or $g_3$, which are even about $\tfrac14$ and $\tfrac34$.

Finally, when $\Omega$ is even and $g$ has the odd-translation symmetry, $g(x+\frac12) = -g(x)$, the map \Eq{StdMap} commutes with the symmetry
\beq{HalfShiftSymmetry}
	I_2(x,y) = (x+\tfrac12, -y). 
\eeq
This is the case for the standard nontwist map, i.e., using the Chirikov's force \Eq{Chirikov}, and the frequency map \Eq{NonTwist}. This symmetry was exploited in many studies of the breakup of shearless tori \cite{Shinohara98}. 

When none of the above symmetries of $g$ and/or $\Omega$ hold, then the map \Eq{StdMap} is, as far as we know, not reversible, and its complete symmetry group is $\langle f, T_{1,0}\rangle$. One example of this is due to Rannou \cite{Rannou74}; however, while it is conjectured that this map is not reversible \cite{MacKay93, Roberts92}, as far as we know this question is still open.

\begin{table}[htbp]
 \centering
 \begin{tabular}{@{}lcc|lcc|cc @{}} 
 \multicolumn{6}{c}{\quad Map} &\multicolumn{2}{c}{$\cG$} \\
 Force & Odd & Even & $\Omega$ & Odd & Even &Reversors & Symmetries\\
 \midrule
 $g_1$ & $0$, $\tfrac12$ & $\tfrac14$, $\tfrac34$ & $\Omega_1$ & 0& & $S_1$, $S_2$ & $I_1$, $T_{0,1}$ \\
 & & & $\Omega_2$& & $0$ & $S_1$, $S_3$ & $I_2$\\
 \midrule
 $g_2$ & $0$, $\tfrac12$ & & $\Omega_1$ & 0& & $S_1$, $S_2$ & $I_1$, $T_{0,1}$ \\
 & & &$\Omega_2$ && 0 & $S_1$ & \\
 \midrule
 $g_3$ & & $\tfrac14$, $\tfrac34$ &$\Omega_1$ & 0 & & $S_2$ & $T_{0,1}$ \\
 & & &$\Omega_2$ && 0 & $S_3$ & \\
 \midrule
 $g_4$ & & &$\Omega_1$ &0 &&$S_2$&$T_{0,1}$ \\
 & & &$\Omega_3$ & &&& \\
 \midrule
 $g_5$ & & &$\Omega_1$ &0 &&$S_2$&$T_{0,1}$ \\
 \end{tabular}
 \caption{Known generators of the reversing symmetry groups for various maps of the standard form \Eq{StdMap},
 with forces $g_{1-5}$, \Eq{Chirikov}, \Eq{G2}, \Eq{G3}, \Eq{G4}, and \Eq{G5}, and frequency maps $\Omega_{1-3}$, \Eq{Twist}, \Eq{NonTwist}, and \Eq{Rannou}. The columns labeled ``odd" and ``even"  give the points about which the functions are odd or even, if any. The ``trivial" symmetries $f$ and $T_{1,0}$ are omitted.}
 \label{tbl:SymmetryGroups}
\end{table}

\section{Symmetries of the Conjugacy}\label{app:Symmetry}

Symmetries or reversors act on the conjugacy $k$, \Eq{conjugacy}, to produce additional conjugacies \cite{Apte05, Olvera08}. 

\begin{lem}\label{lem:RevConj} 
Suppose that $k: \bS \to M$ is a conjugacy \Eq{conjugacy} for $f: M \to M$ for a circle with rotation number $\omega$. Then if $S$ is a symmetry of $f$, $\tilde k = S \circ k$ also solves \Eq{conjugacy}, and if $S$ is a reversor, $\tilde k = S \circ k \circ R$ solves \Eq{conjugacy}, where $R(\theta) = -\theta$. 
\end{lem}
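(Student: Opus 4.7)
The plan is to verify the conjugacy equation \Eq{conjugacy} directly in each of the two cases by a short algebraic manipulation, using only the defining commutation relations of a symmetry or reversor and the given conjugacy identity $f\circ k = k\circ T_\omega$. No topology or regularity considerations should be needed beyond the fact that $S$ is a homeomorphism and $R(\theta) = -\theta$ is a bijection of $\bS$.

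For the symmetry case, I would start from $\tilde k = S\circ k$ and compute $f\circ \tilde k = f \circ S \circ k$. Using $f\circ S = S\circ f$ followed by $f\circ k = k \circ T_\omega$ gives $f \circ \tilde k = S\circ k \circ T_\omega = \tilde k \circ T_\omega$, which is \Eq{conjugacy}. This step is purely mechanical.

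For the reversor case, I would set $\tilde k = S \circ k \circ R$ and again compute $f\circ \tilde k$. The relation $f\circ S = S \circ f^{-1}$ moves $f$ past $S$ at the cost of inverting it, while \Eq{conjugacy} rewritten as $f^{-1}\circ k = k\circ T_{-\omega}$ lets me replace $f^{-1}\circ k$ by $k\circ T_{-\omega}$. The key small observation, which I would state explicitly, is that $R$ conjugates $T_\omega$ to $T_{-\omega}$, i.e.\ $T_{-\omega}\circ R = R\circ T_\omega$ (both send $\theta$ to $-\theta-\omega$). Combining these gives
\[
	f\circ \tilde k = S\circ f^{-1}\circ k \circ R = S\circ k\circ T_{-\omega}\circ R = S\circ k\circ R\circ T_\omega = \tilde k\circ T_\omega,
\]
as required.

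Since each case is a two- or three-line commutative-diagram chase, there is no real obstacle; the only point requiring a moment of care is the parity identity $T_{-\omega}\circ R = R\circ T_\omega$, which handles the sign flip introduced by the reversor. It is also worth remarking, for context with \Lem{Uniqueness}, that the new conjugacy $\tilde k$ in the symmetry case still embeds the same circle (up to the action of $S$ on $\cC$), so when $S$ leaves $\cC$ invariant the lemma produces a phase shift $\phi$ in the parametrization; in the reversor case, $\tilde k$ is a conjugacy for the same rotation $\omega$ but with the orientation of the angle coordinate reversed by $R$, which is the mechanism through which \Cor{SymConj} enforces the oddness/evenness of $k_x$ and $k_y$ used in \Sec{CriticalCircle}.
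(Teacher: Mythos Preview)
Your proposal is correct and follows essentially the same two-line commutation-diagram argument as the paper's own proof, including the use of $f^{-1}\circ k = k\circ T_{-\omega}$ and the identity $T_{-\omega}\circ R = R\circ T_\omega$ in the reversor case. The only difference is cosmetic: you state the parity identity explicitly, whereas the paper uses it without comment.
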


\begin{proof} When $S$ is a symmetry, \Eq{conjugacy} implies that
\[
	f \circ S \circ k = S \circ f \circ k = S \circ k \circ T_\omega .
\] Thus $\tilde k = S \circ k$ is also a solution to \Eq{conjugacy}. 
When $S$ is a reversor, then the inverse of \Eq{conjugacy} implies
\[
	f \circ S \circ k \circ R = S \circ f^{-1} \circ k \circ R 
	 = S \circ k \circ T_{-\omega} \circ R = S \circ k \circ R \circ T_\omega .
\]
Thus $\tilde k = S \circ k \circ R$ solves \Eq{conjugacy}.
\end{proof}

A symmetry or a reversor may map one invariant circle onto another, but if it maps a circle onto itself, then \Lem{RevConj} and \Lem{Uniqueness} imply that the conjugacy itself is symmetric up to a shift. One simple reversible example corresponds to the map \Eq{StdMap} when $g$ is odd, see \App{Symmetries}.

\begin{cor}\label{cor:SymConj}
When \Eq{StdMap} is a twist map, $g$ is odd about $\alpha$, $k$ is a continuous conjugacy, and $\omega$ is irrational, then 
there is a $\vphi$ such that $k_x-\alpha$ is odd about $\vphi$, and $k_y$ is even about $\vphi + \tfrac12 \omega$.
\end{cor}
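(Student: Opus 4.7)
The strategy is to apply \Lem{RevConj} to the reversor $S_1$ of \App{Symmetries} and then use \Lem{Uniqueness} to encode its action on $\cC$ as a phase shift in the parameter $\theta$. Since $g$ is odd about $\alpha$, the involution $S_1(x,y) = (2\alpha - x,\, y + \eps g(x))$ reverses $f$, and its fixed set $\{(\alpha,y) : y \in \bR\}$ intersects every rotational circle. For a twist map this forces $S_1(\cC) = \cC$, since two rotational invariant circles sharing a point and carrying the same rotation number must coincide. By \Lem{RevConj}, $\tilde k = S_1 \circ k \circ R$, with $R(\theta) = -\theta$, is then a continuous conjugacy for the same $\cC$ with rotation number $\omega$; by \Lem{Uniqueness} there exists $\phi \in \bS$ with $\tilde k(\theta) = k(\theta + \phi)$ for all $\theta$. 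Set $\vphi = \phi/2$.

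Writing out the first component of the identity $\tilde k(\theta) = k(\theta + 2\vphi)$ gives
\[
   2\alpha - k_x(-\theta) = k_x(\theta + 2\vphi).
\]
Replacing $\theta$ by $\theta - \vphi$ yields $(k_x(\vphi + \theta) - \alpha) + (k_x(\vphi - \theta) - \alpha) = 0$, i.e., $k_x - \alpha$ is odd about $\vphi$, which is the first claim.

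The second component is
\[
   k_y(\theta + 2\vphi) = k_y(-\theta) + \eps g(k_x(-\theta)),
\]
which still couples $k_y$ to $g \circ k_x$. The key observation (this is the main obstacle to overcome) is that the right-hand side can be collapsed by using the second row of the conjugacy equation \Eq{conjugacy} itself, namely $k_y(\theta + \omega) = k_y(\theta) + \eps g(k_x(\theta))$; evaluated at $-\theta$ this reads $\eps g(k_x(-\theta)) = k_y(\omega - \theta) - k_y(-\theta)$. Substituting reduces the identity to
\[
   k_y(\theta + 2\vphi) = k_y(\omega - \theta).
\]
Finally, replacing $\theta$ by $\omega/2 - \vphi + s$ gives $k_y(\vphi + \omega/2 + s) = k_y(\vphi + \omega/2 - s)$ for all $s$, which is the evenness of $k_y$ about $\vphi + \tfrac12\omega$.
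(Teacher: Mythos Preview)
Your proof is correct and follows essentially the same route as the paper: apply \Lem{RevConj} with the reversor $S_1$, use \Lem{Uniqueness} to turn $S_1\circ k\circ R$ into a phase shift $k(\cdot+2\vphi)$, and read off the parity of each component. The only cosmetic difference is in the $k_y$ step: the paper composes once more with $f$ (equivalently, uses the second reversor $f\circ S_1$, whose $y$-component collapses to the identity by oddness of $g$) to obtain $k_y(\theta+\omega+2\vphi)=k_y(-\theta)$, whereas you substitute the $y$-row of \Eq{conjugacy} directly to get the equivalent identity $k_y(\theta+2\vphi)=k_y(\omega-\theta)$.
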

\begin{proof}
For this case we use the reversor \Eq{S1Reversor}.
The new conjugacy given by \Lem{RevConj},
\beq{newConj}
	\tilde k(\theta) = S_1 \circ k(-\theta) = 
			\left(2\alpha-k_x(-\theta), k_y(-\theta)+\eps g(k_x(-\theta)\right),
\eeq
is continuous and has degree-one, so that $\tilde k(\bS)$ is an invariant circle of $f$ with rotation number $\omega$. When \Eq{StdMap} is a twist map, it has at most one rotational invariant circle for each $\omega$, therefore $\tilde k(\bS) = k(\bS)$, and \Lem{Uniqueness} thus implies that $\tilde k(\theta) = k(\theta+2\vphi)$ for some $\vphi$. 
 
Consequently, the $x$-coordinate of \Eq{newConj} gives
\[
	k_x(\theta + 2\vphi) = 2\alpha - k_x(-\theta) .
\]
Setting $\xi=\theta+\vphi$, then gives 
$
	k_x(\vphi+\xi) -\alpha = -(k_x(\vphi-\xi)-\alpha) ;
$
therefore, $k_x -\alpha$ is odd about $\vphi$. 

Since the composition of a symmetry and a reversor is a reversor, a second reversor for \Eq{StdMap} is $ f(S_1(x,y)) = (2\alpha - x + y, y)$.
Composing \Eq{newConj}, with $f$ and using \Eq{conjugacy} this gives
\[
	\tilde k (\theta+ \omega) = f(\tilde k(\theta)) = (f \circ S_1) (k(-\theta)) .
\]
Again using $\tilde k(\theta) = k(\theta+2\vphi)$, the $y$-component of the above equation gives
$
	 k_y(\theta+\omega+2\vphi) = k_y(-\theta) ,
$
which implies the evenness assertion.
\end{proof}

We will use this result as a numerical check on our solutions of \Eq{conjugacy}.
While is seems difficult to check that a function is ``odd" or ``even" about some unknown point $\vphi$, the Fourier coefficients of such a conjugacy, \Eq{KFourier}, must obey
\bsplit{FourierPhase}
	\hat k_{x0} &= \alpha-\vphi ,\\
	\Re(\hat k_{xj} e^{ 2i \pi j \vphi}) &= 0 ,\\
	\Im(\hat k_{yj} e^{2 i\pi j (\vphi+\tfrac12 \omega)}) &= 0 ,
\esplit
i.e., $\vphi$ is determined by the average and the phases of the Fourier coefficients are related.

\bibliographystyle{alpha}
\bibliography{MultiHarmonic}{}

\end{document}